\theoremstyle{plain}
\newtheorem{thm}{Theorem}[section]
\newtheorem{proposition}{Proposition}[section]
\newtheorem{lemma}{Lemma}[section]
\newtheorem*{corollary}{Corollary}
\newtheorem*{main theorem}{Theorem}
\theoremstyle{definition}
\newtheorem*{definition}{Definition}
\begin{document}

\title{A Note on non-compact Cauchy surface}
\author{Do-Hyung Kim}
\address{National Institute for Mathematical Sciences(NIMS)
(Tower Koreana Building 301, 385-16, Doryong-dong, Yuseong-Gu,
Daejeon, Korea, 305-340)} \email{dhkim@nims.re.kr}

\keywords{Lorentzian geometry, general relativity, causality, Cauchy
surface, space-time}

\begin{abstract}
It is shown that if a space-time has non-compact Cauchy surface,
then its topological, differentiable, and causal structure are
completely determined by a class of compact subsets of its Cauchy
surface. Since causal structure determines its topological,
differentiable, and conformal structure of space-time, this gives a
natural way to encode the corresponding structures into its Cauchy
surface.
\end{abstract}

\maketitle

\section{Introduction} \label{section: 1}
 By causality we refer to a general question of which events can
 influence a given event and it is one of the major areas in
 Lorentzian geometry, which gives mathematical tools for global
 analysis in our space-time. Initially, attention was focused on the
 concept of causality violation since, in 1949, G$\ddot{o}$del
 proposed a solution of Einstein's field equation, which contains a closed causal
 curve.(Ref. \cite{Godel})
 In 1964, when Zeeman (Ref. \cite{Zeeman}) has shown that the causal
 structure of Minkowski space-time already implies it linear
 structure, a new direction of investigation was pointed out. In a
 particular space-time, causality may be trivial, but under fairly
 mild conditions, it is closely related to fundamental geometric
 properties. For example, it is well-known that the Alexandrov
 topology $\mathcal{A}$, which is defined only in terms of causality,
 agrees with the given manifold topology if and only if strong
 causality condition holds. Furthermore, it is known that if two
 space-times have the same causal structures, then they are
 diffeomorphic and isometric up to a conformal factor.(Ref.
 \cite{Malament}, \cite{Fullwood}, \cite{HKM})

 In this sense, causal structure of our space-time gives the most
 important information of our space-time. Among the hierarchy of
 causality conditions, the most stringent condition is global
 hyperbolicity, of which the definition is that there exists a
 hypersurface such that every inextendible timelike curve must meet
 the hypersurface exactly once. We call such a surface as a Cauchy
 surface. In fact, global hyperbolicity was invented for dealing
 with hyperbolic differential equations on a manifold and is the
 natural condition to impose to ensure the existence and uniqueness
 of solutions of hyperbolic equations.(Ref. \cite{Leray} and
 Chapter 10 of Ref. \cite{Wald}) This
 condition has also been used in the study of gravitational fields
 and in initial value problems.

 In view of this, since any timelike curves must meet Cauchy surface
 exactly once, it is natural to expect that the information of basic
 structures of a given space-time can be encoded into its Cauchy surface in
 a way. In this paper, it is shown that, if a space-time is globally
 hyperbolic with non-compact Cauchy surface, then its topological,
 differentiable, and causal structure can be encoded into its Cauchy
 surface. By use of this, it is shown that, for given two
 space-times with non-compact Cauchy surfaces $\Sigma$ and
 $\Sigma^\prime$, if there exists a function $f : \Sigma \rightarrow
 \Sigma^\prime$ such that $f$ preserves the information which will be
 defined in this paper, then the two space-times are homeomorphic,
 diffeomorphic, and they are conformally isometric. In conclusion, since we can
 assume that Cauchy surface is spacelike hypersurface, we can say that the information of spacelike
 structure can determine the whole structure of its space-time, if the space-time
 has a non-compact Cauchy surface.

\section{Basics on causality theory} \label{section: 2}

By a space-time $M$, we mean a smooth, connected, Hausdorff
$n$-dimensional Lorentzian manifold with signature $(-,+,\cdots,+)$.
We also assume that $M$ is time-oriented. We define $v \in T_pM$ to
be timelike (null, spacelike, resp.) if its inner product with
itself is less than (equal to, greater than, resp.) zero. A smooth
curve with non-zero tangent is said to be timelike if its tangent is
everywhere timelike and causal if the tangent is timelike or null.
If there is a future-directed timelike curve from $p$ to $q$, we
write $p \ll q$ and we say that $q$ is a chronological future of $p$
or $p$ is a chronological past of $q$. If there is a future-directed
causal curve from $p$ to $q$, then we write $p \leq q$ and we say
that $q$ is a causal future of $p$ or $p$ is a causal past of $q$.
The relation ``$p \leq q$ but not $p \ll q$'' is written $p
\rightarrow q$ and is termed horismos. We define a causal curve to
be future-inextendible if it has no future endpoint and to be
past-inextendible if it has no past endpoint. We define a subset $S
\subset M$ to be achronal if no two points are chronologically
related. The following is the basic property of causal relations.

\begin{proposition}
(i) If $x \leq y$ and $y \ll z$, then $x \ll z$.\\
(ii) If $x \ll y$ and $y \leq z$, then $x \ll z$.\\
(iii) If $x \rightarrow y$, then there is a null geodesic from $x$
to $y$ without conjugate points.\\
\end{proposition}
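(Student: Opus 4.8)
The three assertions are the classical ``causal ladder'' facts, and I would prove them by reducing everything to the local model provided by convex normal neighbourhoods.

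For (i) and (ii) the plan is first to isolate the one genuinely local statement: in a convex normal neighbourhood $U$, for $p,q\in U$ one has $p\ll q$ in $U$ (resp.\ $p\leq q$ in $U$) if and only if the unique $U$-geodesic from $p$ to $q$ is future-pointing timelike (resp.\ future-pointing causal); consequently $I^+(p,U)$ is open, and if $p\leq q$ and $q\ll r$ \emph{inside} $U$ then $p\ll r$ inside $U$ --- the last point being the Minkowski-coordinate fact that the sum of a future-causal and a future-timelike vector is future-timelike, made rigorous through the Gauss lemma. Granting this, I would prove that $x\leq y$ implies $I^+(y)\subseteq I^+(x)$ (and, by time reversal, $I^-(x)\subseteq I^-(y)$): take a future-directed causal curve $\alpha$ from $x$ to $y$, cover its compact image by finitely many convex normal neighbourhoods $U_1,\dots,U_N$, and choose a partition $x=\alpha(t_0),\dots,\alpha(t_N)=y$ with $\alpha([t_{i-1},t_i])\subseteq U_i$. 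Given $w\in I^+(\alpha(t_i))$, run a timelike curve from $\alpha(t_i)$ to $w$, peel off an initial sub-segment that lies in $U_i$ and ends at a point $w'$, apply the local statement in $U_i$ to the chain $\alpha(t_{i-1})\leq\alpha(t_i)\ll w'$, and then concatenate the timelike curve from $\alpha(t_{i-1})$ to $w'$ with the remaining timelike curve from $w'$ to $w$; since a concatenation of timelike curves is timelike, $w\in I^+(\alpha(t_{i-1}))$. Iterating over $i$ gives $I^+(y)\subseteq I^+(x)$. Now (i) is immediate, since $y\ll z$ means $z\in I^+(y)\subseteq I^+(x)$, i.e.\ $x\ll z$; and (ii) follows dually, since $y\leq z$ gives $I^-(y)\subseteq I^-(z)$, while $x\ll y$ means $x\in I^-(y)$, whence $x\in I^-(z)$.

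For (iii), given $x\rightarrow y$ I would take a future-directed causal curve $\gamma$ from $x$ to $y$ and argue that, after reparametrisation, $\gamma$ must be a null geodesic with no point conjugate to $x$ in its interior, for otherwise $\gamma$ could be deformed into a timelike curve from $x$ to $y$, contradicting $x\not\ll y$. This splits into two cases. First, if $\gamma$ is timelike on some subinterval, or fails to be a null pregeodesic, or has a ``corner'', then inside a convex normal neighbourhood one rounds the corner or inserts a short timelike geodesic and, invoking (i) and (ii), concludes $x\ll y$. Second, if $\gamma$ is a null geodesic but carries a point $r$ conjugate to $x$ strictly between $x$ and $y$, then a Jacobi field along $\gamma$ vanishing at $x$ and at $r$ yields a variation of $\gamma$ through causal curves which is strictly timelike just past $r$; combined with (i) and (ii) this again forces $x\ll y$. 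Since neither deformation is available, $\gamma$ is the asserted null geodesic without conjugate points.

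The main obstacle is the second case of part (iii): the second-variation / Jacobi-field analysis showing that a null geodesic leaves the boundary of the causal future once it passes a conjugate point and can be pushed into the chronological future. The local causality lemma underlying (i)--(ii) is the other technical ingredient, but it is standard (the Gauss lemma, together with the openness of $I^\pm$ in convex normal neighbourhoods). Both are classical, and for the detailed variational arguments I would refer to the treatments of Penrose and of O'Neill, or to Chapter~8 of Wald, rather than reproduce them in full.
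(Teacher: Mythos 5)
The paper gives no proof of this proposition at all --- it is stated as a ``basic property of causal relations'' and implicitly deferred to the standard references (Penrose, O'Neill, Beem--Ehrlich--Easley). Your sketch is precisely the classical argument those references use: the convex-normal-neighbourhood reduction with a covering/concatenation argument for (i)--(ii), and the corner-rounding plus Jacobi-field/second-variation analysis for (iii), so it is correct in outline and consistent with the paper. The only point worth flagging is that in (iii) the conclusion should be read as ``no point conjugate to $x$ strictly between $x$ and $y$'' (the endpoint $y$ itself may be conjugate), which is exactly the interpretation your case analysis supplies.
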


\begin{definition}
The set $I^+(x) = \{ y \in M ~|~ x \ll y\}$ is called the
chronological future of $x$; $I^-(x) = \{ y \in M ~|~ y \ll x\}$ is
the chronological past of $x$; $J^+(x) = \{ y \in M ~|~ x \leq y\}$
is the causal future of $x$; $J^-(x) = \{ y \in M ~|~ y \leq x \}$
is the causal past of $x$. The chronological or causal future of a
set $S \subset M$ is defined by $\displaystyle I^+(S)= \bigcup_{x
\in M} I^+(x)$, $\displaystyle J^+(S) = \bigcup_{x \in M} J^+(x)$,
respectively.
\end{definition}

It is known that $I^+(p)$ is always open and $\partial{J^-(p)}$ is
achronal of which the proof can be found in Ref. \cite{Penrose1972},
\cite{Beem}, \cite{Oneill}. We turn next to the possibilities
involving causality violations in space-times. If $M$ contains a
closed timelike curve, then an observer could take a trip from which
he returns before his departure. In fact, it is known that causality
condition has a intimate relation with the topology of space-time.
For example, we can show that any compact space-time must have a
closed timelike curve by the use of the fact that $I^+(S)$ is always
open. In connection with this, a number of causality conditions have
been introduced in general relativity and we present some of them
which will be used in this paper.
 Space-times which do not contain any closed timelike curve are said to be chronological.
 A space-time with no closed nonspacelike curves is said to be
 causal. A space-time is said to be distinguishing if for all points
 $p$ and $q$ in $M$, either $I^+(p)=I^+(q)$ or $I^-(p)=I^-(q)$
 implies $p=q$. In a distinguishing space-time, distinct points have
 distinct chronological futures and chronological pasts. Thus,
 points are distinguished both by their chronological futures and
 pasts. A space-time is said to be strongly causal at $p$ if $p$ has
 arbitrarily small neighborhood $U$ such that no nospacelike curve
 intersects $U$ in a disconnected set. Note that if $M$ is
 distinguishing and $p \in M$, then $p$ has an arbitrarily small
 neighborhood $U$ of $p$ such that any timelike curve from $p$ can
 not intersects $U$ in a disconnected set, whereas if $M$ is strongly
 causal, then any timelike curve starting from any points in $U$
  can not intersect $U$ in a disconnected set.

  Since $I^+(p)$ is open, $I^+(p) \cap I^-(q)$ is open for any $p$
  and $q$ in $M$. We can show that the sets of the form $I^+(p) \cap
  I^-(q)$ defines a basis for a topology on $M$, which we will call the
  Alexandrov topology of $M$.If a space-time $M$ is strongly causal, then we can state the close
connection with the topology of $M$ and its causal relation as the
following well-known theorem shows.

\begin{thm}
The followings are equivalent.\\
(i) $M$ is strongly causal.\\
(ii) The Alexandrov topology agrees with the manifold topology.\\
(iii) The Alexandrov topology is Hausdorff.\\
\end{thm}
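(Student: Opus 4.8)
The plan is to prove the classical Alexandrov-topology theorem by establishing the implications (ii) $\Rightarrow$ (iii) $\Rightarrow$ (i) $\Rightarrow$ (ii), which is the standard cycle. The implication (ii) $\Rightarrow$ (iii) is immediate: the manifold topology is Hausdorff by our standing assumption that $M$ is a Hausdorff Lorentzian manifold, so if the Alexandrov topology $\mathcal{A}$ coincides with it, $\mathcal{A}$ is Hausdorff as well. Nothing else is needed here.

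For (iii) $\Rightarrow$ (i), I would argue by contraposition: assume strong causality fails at some point $p$, and produce two points with no disjoint Alexandrov neighborhoods. If $M$ is not strongly causal at $p$, there is a neighborhood basis failure, so we can extract a sequence of causal curves $\gamma_n$ that leave and re-enter a fixed small neighborhood $U$ of $p$, with the entry and exit points converging to $p$. Using a limit-curve argument (standard in causality theory, e.g.\ Beem--Ehrlich--Easley), one obtains either a closed causal curve through $p$ or a causal curve that returns arbitrarily close to $p$ after leaving $U$; in the latter case one shows that there exist points $q_1 \ll p \ll q_2$ (indeed points slightly to the past and future along the limiting configuration) such that every basic Alexandrov set $I^+(a)\cap I^-(b)$ containing $p$ must also contain points of the "returning" part of the curve. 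From this one manufactures two points whose Alexandrov neighborhoods all intersect, contradicting the Hausdorff property of $\mathcal{A}$. This is the step I expect to be the main obstacle, since it requires the careful limit-curve construction and a somewhat delicate choice of the two non-separated points.

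For (i) $\Rightarrow$ (ii), I would show the two topologies have the same open sets. Since every $I^+(p)$ is open in the manifold topology (as recalled in the excerpt), each basic set $I^+(p)\cap I^-(q)$ is manifold-open, so $\mathcal{A}$ is coarser than or equal to the manifold topology; this direction needs no causality hypothesis. For the reverse inclusion, let $V$ be a manifold-open set and $p \in V$. Working in a convex normal neighborhood $U \subset V$ of $p$ on which the causal structure is controlled by the exponential map, strong causality lets us shrink $U$ so that no causal curve leaves and returns to $U$. Then pick $q_- \in I^-(p)\cap U$ and $q_+ \in I^+(p)\cap U$ along a short timelike geodesic through $p$; the strong causality condition guarantees $I^+(q_-)\cap I^-(q_+) \subset U \subset V$, because any point of $I^+(q_-)\cap I^-(q_+)$ is joined to both $q_-$ and $q_+$ by causal curves which, concatenated, cannot escape $U$. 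Hence $p$ has an Alexandrov-basic neighborhood inside $V$, so $V$ is Alexandrov-open, completing the cycle.

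A remark on organization: since the excerpt already notes that $I^+(p)$ is open and that sets of the form $I^+(p)\cap I^-(q)$ form a basis, the only genuinely new content is the two "hard" inclusions above, and I would present the proof exactly in the order (ii) $\Rightarrow$ (iii) $\Rightarrow$ (i) $\Rightarrow$ (ii), flagging the limit-curve lemma as the technical heart and citing \cite{Penrose1972}, \cite{Beem}, or \cite{Oneill} for the standard limit-curve machinery rather than reproving it.
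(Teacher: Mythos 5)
The paper does not prove this theorem at all: its ``proof'' is the single line ``See Theorem 4.24 in \cite{Penrose1972}.'' So any genuine argument you give is automatically a different route, and what you propose is the standard textbook cycle. Your (ii) $\Rightarrow$ (iii) is trivially correct, and your (i) $\Rightarrow$ (ii) is correct and complete as stated: with the paper's definition of strong causality (arbitrarily small neighborhoods $U$ that no causal curve meets in a disconnected set), a timelike curve from $q_-$ through $z$ to $q_+$ with both endpoints in $U$ must lie entirely in $U$, since otherwise its intersection with $U$ would be disconnected; hence $I^+(q_-)\cap I^-(q_+)\subset U\subset V$ exactly as you claim. The one place where your proposal is not self-contained is (iii) $\Rightarrow$ (i), and you are right to flag it: the phrase ``one manufactures two points whose Alexandrov neighborhoods all intersect'' is the entire content of that implication. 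The standard construction (Penrose, Props.~4.30--4.31) extracts from the failure of strong causality at $p$ a future-directed null geodesic segment from $p$ to some $q\neq p$ with $p\rightarrow q$ and with the additional property that $x\ll p$ and $q\ll y$ together imply $x\ll y$; from this one checks directly that every Alexandrov basic set containing $p$ meets every Alexandrov basic set containing $q$, so $\mathcal{A}$ is not Hausdorff. Since you explicitly defer that limit-curve lemma to \cite{Penrose1972}, \cite{Beem}, \cite{Oneill} rather than reproving it, your proof and the paper's ultimately rest on the same external source for the only hard step; the difference is that you make the architecture of the argument and the two easy implications explicit, which is a net gain in readability at no cost in rigor.
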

\begin{proof}
See Theorem 4. 24. in \cite{Penrose1972}.
\end{proof}

We say that $I^+$ is inner continuous at $p \in M$ if for each
compact set $K \subset I^+(p)$, there exists a neighborhood $U$ of
$p$ such that $ K \subset I^+(q)$ for each $q \in U$. We also define
the set-valued function $I^+$ to be outer continuous at $p \in M$ if
for each compact set $K \subset M - \overline{I^+(p)}$, there exists
some neighborhood $U$ of $p$ such that $K \subset M -
\overline{I^+(q)}$ for each $q \in U$.

A distinguishing space-time is said to be causally continuous if the
both $I^+$ and $I^-$ are outer continuous. Since $I^+$ and $I^-$ are
always inner continuous (Ref. \cite{HS}), the causally continuous
space-times are those distinguishing space-times for which both the
chronological future and past of a point vary continuously with the
point.

One of the most important causality condition which we will discuss
in this paper, is global hyperbolicity. Globally hyperbolic
space-times have the important property, frequently invoked during
specific geodesic constructions, that any pair of causally related
points can be joined by a causal geodesic with maximal length.

\begin{definition}
A distinguishing space-time $M$ is causally simple if $J^+(p)$ and
$J^-(p)$ are closed subsets of $M$ for all $p \in M$. A space-time
$M$ is globally hyperbolic if $M$ is strongly causal and $J^+(p)
\cap J^-(q)$ is compact for any $p$ and $q$ in $M$.
\end{definition}

We summarize the relations between the various causality relations
in the following theorem. \cite{Penrose1972}, \cite{Beem},
\cite{Oneill}

\begin{thm}
Let $M$ be a space-time. Then the followings hold.\\
(i) If $M$ is globally hyperbolic, then $M$ is causally simple.\\
(ii) If $M$ is causally simple, then $M$ is causally continuous.\\
(iii) If $M$ is causally continuous, then $M$ is strongly causal.\\
(iv) If $M$ is strongly causal, then $M$ is distinguishing.\\
(v) If $M$ is distinguishing, then $M$ is chronological.\\
\end{thm}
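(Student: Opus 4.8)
The plan is to establish the five implications independently, in the order (v), (i), (iv), then (ii) and (iii); only the last two require anything beyond Proposition 2.1 and the openness of chronological sets. For (v) I would argue by contraposition: if $M$ is not chronological it carries a closed timelike curve, hence two distinct points $p\neq q$ with $p\ll q$ and $q\ll p$; by Proposition 2.1(ii), $q\ll p\leq r$ gives $q\ll r$ for every $r\in I^+(p)$ and $p\ll q\leq r$ gives $p\ll r$ for every $r\in I^+(q)$, so $I^+(p)=I^+(q)$, and dually $I^-(p)=I^-(q)$, contradicting distinguishability. (The same computation applied to a closed \emph{causal} curve and a second point $r$ on it gives $I^\pm(p)=I^\pm(r)$, so distinguishing spacetimes are causal.) For (i), assume $M$ globally hyperbolic, fix $p$, and let $x_k\in J^+(p)$ with $x_k\to x$; choosing $z\in I^+(x)$, openness of $I^-(z)$ gives $x_k\in I^-(z)$ for large $k$, so $x_k\in J^+(p)\cap J^-(z)$, which is compact, hence closed, hence contains $x$. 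Thus $J^+(p)$ (and dually $J^-(p)$) is closed, and since global hyperbolicity includes distinguishability, $M$ is causally simple. I would also record that $\overline{I^+(p)}=\overline{J^+(p)}$ always, so under causal simplicity $\overline{I^+(p)}=J^+(p)$; this identity is the bridge to the outer-continuity formulation used below.

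For (iv) I would use the standard characterization: $M$ fails to be future-distinguishing at $p$ precisely when some neighbourhood $U$ of $p$ has the property that every smaller neighbourhood $V\subseteq U$ of $p$ admits a future causal curve issuing from $p$ that leaves $V$ and later re-enters $V$ — and this plainly contradicts strong causality at $p$. For the substantive direction, suppose $I^+(p)=I^+(q)$ with $p\neq q$ and pick $U$ with $q\notin\overline U$. Since $p,q\in\overline{I^+(p)}=\overline{I^+(q)}$, there are sequences $z_k\to p$ with $z_k\in I^+(q)$ and $w_j\to q$ with $w_j\in I^+(p)$. Given $V\subseteq U$, first choose $k$ with $z_k\in V$; then, since $q\in I^-(z_k)$ and $I^-(z_k)$ is open, choose $j$ with $w_j\in I^-(z_k)$ and (as $w_j\to q\notin\overline U$) also $w_j\notin\overline U$; concatenating a future timelike curve $p\to w_j$ with one $w_j\to z_k$ yields a future timelike curve from $p$ that exits $V$ through $w_j$ and returns to $V$ at $z_k$. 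The past-distinguishing case is time-dual, so strong causality forces $M$ to be distinguishing.

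The technical core is (ii) and (iii), both of which convert $C^0$-limits of causal curves into causal relations. For (ii), causal simplicity first upgrades to strong causality: if it failed at $p$, a limit-curve argument would produce either a closed causal curve (impossible, since $M$ is causal) or a future-inextendible causal curve $\gamma$ through $p$ and a point $r=\gamma(t)$, $t>0$, with $p\in\overline{J^+(r)}=J^+(r)$, forcing $r\leq p\leq r$, again a closed causal curve. Then, for compact $K$ disjoint from $J^+(p)=\overline{I^+(p)}$, one shows $J^-(K)=\bigcup_{x\in K}J^-(x)$ is closed — a limit point of $J^-(K)$ is joined by causal curves to points of $K$ converging in $K$, and the limit-curve theorem, fed by the closedness of the causal relation that causal simplicity provides, returns a causal curve to that limit point — so $p\notin\overline{J^-(K)}$; a neighbourhood of $p$ disjoint from $J^-(K)$ is exactly the outer-continuity statement for $I^+$ at $p$, and dually for $I^-$, so $M$ is causally continuous. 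For (iii), causal continuity is equivalent, by Hawking--Sachs, to being distinguishing and reflecting; if strong causality failed at $p$, the limit-curve construction again yields a future-inextendible causal curve $\gamma$ from $p$ that returns to every neighbourhood of $p$, the return property together with the reflecting property forces $\bigcup_t I^-(\gamma(t))\subseteq\overline{I^-(p)}$, and since $I^-(p)\subseteq I^-(\gamma(1))\subseteq\bigcup_t I^-(\gamma(t))$ by Proposition 2.1(ii), one extracts distinct points with the same chronological past, contradicting distinguishability. I expect the genuine difficulty, in both (ii) and (iii), to be pinning down exactly which limit-curve / closed-causal-relation statement is actually available here (compactness of causal diamonds is not at hand, only closedness) and, in (iii), the passage from the sandwiched inclusion to an honest equality of chronological pasts; this last step is the Hawking--Sachs theorem, which in a first write-up I would simply quote from \cite{Penrose1972}, \cite{Beem}, \cite{Oneill}.
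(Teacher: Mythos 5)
The paper does not actually prove this theorem: it is stated as a summary of standard facts with a bare citation to \cite{Penrose1972}, \cite{Beem}, \cite{Oneill}, so there is no in-paper argument to compare yours against. Judged on its own terms, your write-up of (v), (i) and (iv) is correct and essentially complete: the contrapositive via $I^{\pm}(p)=I^{\pm}(q)$ for two points on a closed timelike curve, the closedness of $J^{+}(p)$ extracted from compactness of $J^{+}(p)\cap J^{-}(z)$ for $z\in I^{+}(x)$, and the construction, from the two approximating sequences, of a timelike curve that leaves and re-enters every $V\subseteq U$ are the standard proofs, correctly executed. The identity $\overline{J^{\pm}(p)}=\overline{I^{\pm}(p)}$ you record is also right and is the key to the rest.

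For (ii) your reduction is sound, and the lemma you flag as the ``genuine difficulty'' --- closedness of the relation $\leq$ under causal simplicity --- needs no limit-curve theorem at all: if $p_n\leq q_n$ with $(p_n,q_n)\to(p,q)$, then for every $r\in I^{+}(q)$ one has $q_n\in I^{-}(r)$ eventually, hence $p_n\ll r$ and $p\in\overline{I^{-}(r)}=J^{-}(r)$; thus $I^{+}(q)\subseteq J^{+}(p)$, and since $J^{+}(p)$ is closed and $q\in\overline{I^{+}(q)}$, we get $q\in J^{+}(p)$, i.e.\ $p\leq q$. With that, $J^{-}(K)$ is closed for compact $K$ exactly as you say, and outer continuity follows. (A shorter standard route: $J^{\pm}(p)=\overline{I^{\pm}(p)}$ makes causal simplicity imply the reflecting property in one line, after which one quotes Hawking--Sachs.) The one genuine gap is (iii): your sketch rests on a limit-curve construction plus the Hawking--Sachs equivalence between causal continuity and ``distinguishing and reflecting,'' neither of which you prove, and you end by citing the same references the paper cites; so that implication is, in your version as in the paper's, an appeal to the literature rather than a proof. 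To make (iii) self-contained you would have to either prove the Hawking--Sachs characterization or argue directly that a failure of strong causality at $p$ in a distinguishing, reflecting spacetime produces a point $q\neq p$ with $I^{-}(q)=I^{-}(p)$.
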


It is known that a space-time $M$ is globally hyperbolic if and only
if there exists an $(n-1)$ dimensional hypersurface $\Sigma$ such
that every inextendible timelike curve meets $\Sigma$ exactly once.
This implies that the surface must be achronal and we call the
surface a Cauchy surface. Though a Cauchy surface must be achronal,
it is shown that if $M$ is globally hyperbolic, then $M$ has a
spacelike Cauchy surface. Since we assume that $M$ is globally
hyperbolic with non-compact Cauchy surface $\Sigma$ in this paper,
we can assume that the non-compact Cauchy surface $\Sigma$ is
spacelike.

\begin{definition}
Let $S$ be an achronal subset of $M$. The edge of $S$ is a set of
points $x$ such that for every neighborhood $U$ of $x$, there are
two points $y$ and $z$ in $U$ and two timelike curves in $U$ from
$y$ to $z$, just one of which meets $S$.
\end{definition}

By the above definition, it is easy to see that if $S$ is achronal,
we have $\overline{S}-S \subset edge S \subset \overline{S}$.

\begin{proposition} \label{nbd-edge}
Let $S \subset M$ be achronal. Then $p \notin edge S$ if and only if
there is a neighborhood $U$ of $p$ such that $S \cap U$ is an
achronal boundary in the space-time manifold $U$.
\end{proposition}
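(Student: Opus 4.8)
The plan is to prove the two implications of the equivalence separately, using two soft preliminary facts. The first is that being an edge point is a \emph{local} property: for an open set $U\ni p$ one has $p\in edge\, S$ if and only if $p\in edge_U(S\cap U)$, where $edge_U$ is computed inside the space-time $U$ --- this is immediate, since the points and curves occurring in the definition of edge may always be chosen inside a prescribed neighbourhood of $p$. In particular $edge\, S$ is closed: the neighbourhood witnessing $p\notin edge\, S$ may be taken open, and it then witnesses $p'\notin edge\, S$ for every one of its points $p'$. The second fact is that if $F$ is a future set in a space-time $N$ (that is, $I^+(F)\subset F$), then $\mathrm{int}\,F$ is again a future set, because $I^+(\mathrm{int}\,F)$ is open and contained in $I^+(F)\subset F$, hence in $\mathrm{int}\,F$; dually $\mathrm{int}(N\setminus F)$ is a past set, and $N\setminus\partial F=\mathrm{int}\,F\sqcup\mathrm{int}(N\setminus F)$.

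For the implication ``$\Leftarrow$'', suppose $U\ni p$ is open and $S\cap U=\partial_U F$ for a future set $F\subset U$. If $S\cap U=\emptyset$ then $p\notin\overline S$ (as $U$ is open), so $p\notin edge\, S$ since $edge\, S\subset\overline S$; so assume $F$ is proper and nonempty. Put $A=\mathrm{int}_U F$ and $B=\mathrm{int}_U(U\setminus F)$, so that $A$ is an open future set, $B$ an open past set, and $U\setminus(S\cap U)=A\sqcup B$. I would argue by contradiction: if $p\in edge\, S$, then by locality there are $y,z\in U$ and timelike curves $\gamma_1,\gamma_2$ in $U$ from $y$ to $z$ with $\gamma_1\cap S=\emptyset$ and $\gamma_2\cap S\neq\emptyset$. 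Being connected and disjoint from $S\cap U=\partial_U F$, the curve $\gamma_1$ lies in $A$ or in $B$; say $\gamma_1\subset A$, so $y\in A$. But then every point of $\gamma_2$ other than $y$ lies in $I^+(y;U)\subset A$ (as $A$ is a future set containing $y$) while $y\in A$ as well, so $\gamma_2\subset A$ and hence $\gamma_2\cap\partial_U F=\emptyset$, contradicting $\gamma_2\cap S\neq\emptyset$. The case $\gamma_1\subset B$ is symmetric, using that $B$ is a past set. Therefore $p\notin edge\, S$.

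For the implication ``$\Rightarrow$'', suppose $p\notin edge\, S$. If $p\notin\overline S$, then $U:=M\setminus\overline S$ is a neighbourhood of $p$ with $S\cap U=\emptyset$, which is vacuously an achronal boundary in $U$. So assume $p\in\overline S$; since $\overline S\setminus S\subset edge\, S$, in fact $p\in S$. As $edge\, S$ is closed, I would choose a small connected neighbourhood $U$ of $p$ with $U\cap edge\, S=\emptyset$. Then $\Sigma:=S\cap U$ is achronal in $U$ with $edge_U(\Sigma)=\emptyset$ by locality, and since $\overline{\Sigma}^U\setminus\Sigma\subset edge_U(\Sigma)=\emptyset$ the set $\Sigma$ is closed in $U$. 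Now I invoke the structure theorem for closed achronal sets with empty edge (\cite{Penrose1972}, \cite{Oneill}): $\Sigma$ is a closed topological hypersurface in $U$, and --- after shrinking $U$ --- we may choose coordinates $(t,x)\in(-\delta,\delta)\times D$ on $U$ in which $\partial_t$ is a future-directed timelike vector field and $\Sigma=\{t=f(x)\}$ for a continuous $f:D\to(-\delta/2,\delta/2)$. Set $U^+:=\{(t,x)\in U:t>f(x)\}$. Using the achronality of $\Sigma$ and the fact that each vertical segment $\{x=\mathrm{const}\}$ is a future timelike curve in $U$ meeting $\Sigma$ exactly once, one checks that $U^+=I^+(\Sigma;U)$; hence $U^+$ is an open future set in $U$, and since $f$ is continuous $\partial_U U^+=\overline{U^+}^U\setminus U^+=\{t=f(x)\}=\Sigma=S\cap U$. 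Thus $S\cap U=\partial_U\big(I^+(\Sigma;U)\big)$ is an achronal boundary in $U$, which completes the proof.

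The only substantive ingredient is the structure theorem invoked in the last paragraph: that an achronal set with empty edge is, near each of its points, a $C^0$ graph transverse to the time direction. Its proof rests on a limit-curve argument together with the compactness of small causal diamonds, and it is precisely there that one rules out $S$ being ``crumpled'' near $p$. All the remaining steps are point-set topology and routine causal bookkeeping, so I expect that to be the main obstacle if one wants a fully self-contained argument rather than citing the structure theorem.
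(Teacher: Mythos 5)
Your proof is correct, and it is worth noting at the outset that the paper itself offers no argument for this proposition beyond the citation ``See Proposition 5.8 in Penrose'', so you are supplying something the paper omits rather than diverging from it. Your two preliminary observations (locality of the edge, and the splitting $U\setminus\partial_U F=\mathrm{int}_U F\sqcup\mathrm{int}_U(U\setminus F)$ into an open future set and an open past set) are sound, and the ``$\Leftarrow$'' direction goes through: since \emph{both} endpoints $y,z$ of $\gamma_1$ land in $A$ (resp.\ $B$), the curve $\gamma_2$ is trapped in $A$ (resp.\ $B$) whichever time-orientation it carries --- as written you only treat the case where $\gamma_2$ is future-directed from $y$, but the past-directed case is the same sentence with $z$ in place of $y$. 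The ``$\Rightarrow$'' direction is where the weight sits, and you have located it correctly; two remarks there. First, what you invoke is strictly more than the bare statement of Proposition~\ref{edge-closed}: you need that, after shrinking, $\Sigma$ is a graph over the \emph{entire} transverse slice $D$, each vertical timelike segment meeting it exactly once --- surjectivity of the graph does not follow from ``closed topological hypersurface'' alone. It does follow from the \emph{proof} of that result, and in fact it can be extracted directly from $p\notin edge\,S$ without any citation: take the coordinate cylinder inside the neighbourhood witnessing $p\notin edge\,S$, with top and bottom faces in $I^{\pm}(p)$, compare each vertical segment with a timelike curve through $p\in S$ joining the same endpoints (the latter meets $S$, hence so must the former), and use achronality for uniqueness and for the continuity of $f$. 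Second, your closing characterisation of that ingredient as resting on ``a limit-curve argument and compactness of small causal diamonds'' is off the mark --- the argument just sketched is elementary and uses neither --- though this does not affect your proof. With these caveats your argument is a legitimate, essentially self-contained replacement for the paper's citation, and close in spirit to Penrose's own proof of his Proposition 5.8.
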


\begin{proof}
See Proposition 5.8 in \cite{Penrose1972}
\end{proof}

By the above proposition, it is easy to prove the following.

\begin{corollary}
If $S$ is achronal, then $edge S$ is closed.
\end{corollary}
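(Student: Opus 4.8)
The plan is to prove the equivalent assertion that the complement $M \setminus edge\, S$ is open, from which the closedness of $edge\, S$ is immediate. The whole point is that Proposition \ref{nbd-edge} already isolates the local obstruction to lying in the edge, and that the neighborhood it supplies works uniformly for all of its own points; so the corollary is essentially a ``spreading out'' of that proposition.

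Concretely, I would first pick an arbitrary $p \in M \setminus edge\, S$. Since $S$ is achronal, Proposition \ref{nbd-edge} applies at $p$ and yields a neighborhood $U$ of $p$ such that $S \cap U$ is an achronal boundary in the space-time manifold $U$; without loss of generality $U$ is open (replace it by its interior if one's convention admits non-open neighborhoods). The key step is then the observation that this single $U$ certifies every one of its points as lying outside the edge: given any $q \in U$, the set $U$ is also a neighborhood of $q$, and $S \cap U$ is by construction an achronal boundary in $U$, so the ``if'' direction of Proposition \ref{nbd-edge}, read this time at the point $q$, gives $q \notin edge\, S$. Hence $U \subset M \setminus edge\, S$, and since $p$ was arbitrary, $M \setminus edge\, S$ is open, i.e.\ $edge\, S$ is closed.

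I do not expect a genuine obstacle here, since the substantive content is carried entirely by Proposition \ref{nbd-edge}; the only care needed is bookkeeping about the word ``neighborhood''. If neighborhoods are required to be open there is nothing to check, and if they are not, one simply passes to the interior $U^{\circ}$ and notes that $S \cap U^{\circ}$ remains an achronal boundary in $U^{\circ}$, because being an achronal boundary is a local condition that restricts to open subsets. With that point settled, the argument closes.
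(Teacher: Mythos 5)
Your proof is correct and follows exactly the route the paper intends: the paper gives no written proof but states that the corollary follows from Proposition \ref{nbd-edge}, and your argument (the neighborhood $U$ furnished at $p$ certifies every $q\in U$ as well, so the complement of $edge\,S$ is open) is precisely that. The worry about non-open neighborhoods is even easier to dispose of than you suggest, since for any $q$ in the interior of $U$ the set $U$ itself is already a neighborhood of $q$ satisfying the condition of Proposition \ref{nbd-edge}, so no restriction to $U^{\circ}$ is needed.
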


\begin{proposition} \label{edge-closed}
Let $S$ be achronal, then the followings hold.\\
 (i) $S$ is a
topological hypersurface if and only if $S \cap edge S =
\emptyset$.\\
 (ii) $S$ is a closed topological hypersurface if and only if $edge
 S$ is empty.\\
\end{proposition}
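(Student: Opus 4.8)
The plan is to deduce both parts from Proposition \ref{nbd-edge}, together with the standard fact that an achronal boundary in a space-time is an embedded $(n-1)$-dimensional topological submanifold without boundary (see \cite{Penrose1972}), and the inclusions $\overline{S}-S\subset edge\,S\subset\overline{S}$ already noted for achronal $S$.

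For (i), the implication that $S\cap edge\,S=\emptyset$ forces $S$ to be a topological hypersurface is immediate: if $p\in S$ then $p\notin edge\,S$, so by Proposition \ref{nbd-edge} there is a neighborhood $U$ of $p$ with $S\cap U$ an achronal boundary in the space-time $U$; an achronal boundary is an embedded topological $(n-1)$-submanifold, so $p$ has a neighborhood in $S$ homeomorphic to $\mathbb{R}^{n-1}$, and as $p\in S$ was arbitrary, $S$ is a topological hypersurface. For the converse, suppose $S$ is a topological hypersurface and fix $p\in S$. First I would pass to a convex normal neighborhood $U$ of $p$ with coordinates $(t,x^1,\dots,x^{n-1})$ in which $\partial_t$ is future-directed timelike and each line $t\mapsto(t,x)$ is timelike; achronality of $S$ forces $S\cap U$ to meet every such line at most once, so $S\cap U$ is the graph of a function $t=h(x)$ over some subset $A$ of the $x$-slice. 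Since $S\cap U$ is an $(n-1)$-dimensional topological manifold, projection along $\partial_t$ is a continuous bijection of $S\cap U$ onto $A$, so invariance of domain shows $A$ is open and $h$ is continuous on $A$. Shrinking to a sub-cylinder $U'=A'\times J$ over a ball $A'\Subset A$ about $p$ (with $J$ a $t$-interval), the graph of $h|_{A'}$ separates $U'$ into the open regions above and below it; the vertical timelike segments show the region above lies in the chronological future of $S\cap U'$ and the one below in its chronological past, and using achronality together with the fact that $U'$ contains no closed causal curves one checks that the set on and above the graph is a future set whose boundary is exactly $S\cap U'$. Hence $S\cap U'$ is an achronal boundary in $U'$, and Proposition \ref{nbd-edge} yields $p\notin edge\,S$. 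As $p$ was arbitrary, $S\cap edge\,S=\emptyset$.

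Part (ii) then follows formally. If $edge\,S=\emptyset$ then $S\cap edge\,S=\emptyset$, so $S$ is a topological hypersurface by (i); moreover $\overline{S}-S\subset edge\,S=\emptyset$ gives $S=\overline{S}$, so $S$ is a closed topological hypersurface. Conversely, if $S$ is a closed topological hypersurface then (i) gives $S\cap edge\,S=\emptyset$, while $edge\,S\subset\overline{S}=S$ because $S$ is closed; together these force $edge\,S=\emptyset$.

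I expect the converse half of (i) to be the main obstacle: the real work is converting the abstract hypothesis that $S$ is a topological hypersurface into the concrete local normal form ``$S$ is an achronal boundary near $p$'', to which Proposition \ref{nbd-edge} applies. The delicate points are that $S\cap U$ genuinely appears as the graph of a continuous function over an open set — which uses achronality essentially, with invariance of domain supplying openness of the domain and continuity of $h$ — and that the two complementary regions are consistently the chronological future and past of that graph, which again rests on achronality and on the absence of closed causal curves in $U$. Once this normal form is in hand, both statements reduce to bookkeeping with the inclusions relating $edge\,S$, $S$, and $\overline{S}$.
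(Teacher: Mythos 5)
Your argument is correct; note that the paper itself gives no proof here, deferring entirely to Chapter 14 of O'Neill, and what you have written is essentially the standard argument from that reference (the easy direction of (i) via Proposition \ref{nbd-edge} plus the fact that achronal boundaries are topological hypersurfaces, the hard direction via the local graph representation $t=h(x)$ over timelike coordinate lines with invariance of domain supplying openness of the domain and continuity of $h$, and (ii) as formal bookkeeping with $\overline{S}-S\subset edge\,S\subset\overline{S}$). The one place to be slightly more explicit is the identification of $S\cap U'$ as an achronal boundary: the clean choice of future set is $F=I^{+}(S\cap U',U')$, for which achronality of $S$ shows no point strictly below the graph lies in $F$, so that $\partial F\cap U'$ is exactly the graph.
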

\begin{proof}
See Chapter 14 in \cite{Oneill}
\end{proof}

\begin{proposition} \label{proposition-edge}
Let $x$ be in $\partial{I^-(S)}-\overline{S}$. Then there exists a
null geodesic on $\partial{I^-(S)}$ with past endpoint $x$, which is
either future-inextendible or has a future endpoint on
$\overline{S}$.
\end{proposition}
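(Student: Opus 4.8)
The plan is to realize the geodesic as a limit of future-directed timelike curves running from points of $I^-(S)$ near $x$ out to points of $S$.

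\emph{Step 1 (setting up the limit curve).} Since $I^-(S)$ is open, $x\in\partial I^-(S)$ forces $x\notin I^-(S)$ and provides a sequence $x_k\to x$ with $x_k\in I^-(S)$; for each $k$ choose $s_k\in S$ and a future-directed timelike curve $\lambda_k$ from $x_k$ to $s_k$. Because $x\notin\overline S$, some neighbourhood of $x$ is disjoint from $\overline S$, so the curves $\lambda_k$ have lengths (in an auxiliary Riemannian metric) bounded away from $0$ and cannot collapse to the point $x$. Since $M$ is globally hyperbolic, hence strongly causal, the limit curve theorem applies: after passing to a subsequence, the $\lambda_k$ converge to a future-directed causal curve $\gamma$ with past endpoint $x$ which is either future-inextendible, or has a future endpoint $s$ that is an accumulation point of $\{s_k\}$ and therefore lies in $\overline S$.

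\emph{Step 2 ($\gamma$ lies on the boundary).} Every point of $\lambda_k$ strictly preceding $s_k$ lies in $I^-(s_k)\subseteq I^-(S)$, while $s_k\in\overline{I^-(s_k)}\subseteq\overline{I^-(S)}$; hence $\lambda_k\subseteq\overline{I^-(S)}$, and passing to the limit gives $\gamma\subseteq\overline{I^-(S)}$. Conversely, if some $y$ on $\gamma$ lay in $I^-(S)$, then from $x\le y$ and $y\ll s'$ for some $s'\in S$ the basic property of causal relations would give $x\ll s'$, i.e. $x\in I^-(S)$, contradicting $x\in\partial I^-(S)$; so $\gamma\cap I^-(S)=\emptyset$ and therefore $\gamma\subseteq\overline{I^-(S)}\setminus I^-(S)=\partial I^-(S)$.

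\emph{Step 3 (from causal curve on an achronal boundary to null geodesic).} I would first check that $\partial I^-(S)$ is achronal: if $p,q\in\partial I^-(S)$ with $p\ll q$, then $I^+(p)$ is an open neighbourhood of $q$ and hence meets $I^-(S)$ in some $r$, so $p\ll r\ll s''$ for some $s''\in S$, whence $p\in I^-(S)$, a contradiction. A causal curve contained in an achronal set can contain no timelike subsegment (its endpoints would be chronologically related) and, by the standard shortcut/deformation argument, must in fact be an unbroken null geodesic. Applying this to $\gamma$ and combining with Step 2, $\gamma$ is a null geodesic lying on $\partial I^-(S)$ with past endpoint $x$ which is either future-inextendible or has a future endpoint on $\overline S$, which is the assertion.

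\emph{Main obstacle.} The delicate part is Step 1: guaranteeing that the limiting object is a genuine nondegenerate causal curve and correctly extracting the dichotomy ``future-inextendible versus future endpoint on $\overline S$'' according to whether $\{s_k\}$ escapes every compact set or sub-converges --- this is precisely where the hypothesis $x\notin\overline S$ and strong causality of $M$ enter. A secondary subtlety is the classical lemma invoked in Step 3, whose proof rests on the fact that a broken or non-geodesic causal curve always admits a timelike deformation joining two of its points.
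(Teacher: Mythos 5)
Your proof is correct and follows essentially the same route as the paper: take timelike curves from points $x_k\in I^-(S)$ to points of $S$, pass to a causal limit curve from $x$, and use the achronality of $\partial I^-(S)$ (equivalently, $I^-(\overline S)=I^-(S)$) to force that curve to be an unbroken null geodesic. The only real difference is in execution --- you extract the dichotomy ``future-inextendible versus future endpoint in $\overline S$'' directly from the arclength-parametrized limit curve theorem, where the paper instead re-applies the construction at a putative endpoint off $\overline S$ and derives a contradiction with achronality; your version is, if anything, the more carefully justified of the two.
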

\begin{proof}
Since $x \in \partial{I^-(S)}$, we can choose a sequence $\{x_i\}$
in $I^-(S)$ which converges to $x$. Since $\{x_i\} \in I^-(S)$, we
can choose a sequence $\{p_i\}$ in $S$ such that there is a timelike
curve $\gamma_i$ from $x_i$ to $p_i$. Let $\gamma$ be a limit curve
of $\gamma_i$. Then $\gamma$ is a future directed causal curve from
$x$ to $\overline{S}$. If $\gamma$ is timelike at a point, then $x
\in I^-(\overline{S}) = I^-(S)$, which  contradicts to $x \in
\partial{I^-(S)}$. Thus $\gamma$ is a null geodesic.
Suppose that $\gamma$ is not inextendible. Then, it has a future
endpoint $y$ in $\partial{I^-(S)}$ since $\partial{I^-(S)}$ is
closed. If $y \notin  \overline{S}$ we can apply the above argument
to get another null geodesic which does not continue $\gamma$. Since
these two null geodesics have different directions, this contradicts
to the achronality of $\partial{I^-(S)}$.
\end{proof}

\section{Causally admissible slice} \label{section: 3}

Throughout this section we assume that $M$ is a space-time with a
non-compact Cauchy surface $\Sigma$.

\begin{definition}
A subset $S$ of a Cauchy surface $\Sigma$ is called a future
admissible set or future admissible slice (past admissible,
respectively) if there is $p$ in $M$ such that $S = J^-(p) \cap
\Sigma$ ($S=J^+(p) \cap \Sigma$, respectively). We call future
admissible sets and past admissible sets as causally admissible
sets. If $S_p = J^-(p) \cap \Sigma$ ($S_p = J^+(p) \cap \Sigma$,
respectively), we call $p$ a representative point of $S_p$.
\end{definition}

Since the same properties hold for past admissible sets, we only
investigate the properties of future admissible sets in this
section. For given $p \in J^+(\Sigma)$, let $S_p = J^-(p) \cap
\Sigma$ and $S_p^{\circ} = I^-(p) \cap \Sigma$. Then, since $M$ is
globally hyperbolic with non-compact Cauchy surface, $S_p$ is a
compact subset of $\Sigma$ and $\Sigma - S_p \neq \emptyset$ for any
$p \in M$. Since $edge S$ is a closed subset of compact $S_p$, $edge
S$ is also a compact subset of $M$. We state some properties of
$S_p$ and $S^{\circ}_p$ which will be used later.

\begin{proposition} \label{coro}
If $M$ is globally hyperbolic with a Cauchy surface $\Sigma$, and $S
\subset \Sigma$, then $edge S$ is the same with the set of boundary
points of $S$ in $\Sigma$.
\end{proposition}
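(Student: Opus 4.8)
The plan is to prove the two inclusions $edge\,S\subseteq\partial_\Sigma S$ and $\partial_\Sigma S\subseteq edge\,S$, where $\partial_\Sigma S$ denotes the set of boundary points of $S$ taken in the subspace topology of $\Sigma$, with Proposition~\ref{nbd-edge} as the principal tool in both directions. Throughout I use that $S$ is achronal (being contained in the achronal set $\Sigma$), that $\Sigma$ is a closed topological hypersurface of $M$ with $edge\,\Sigma=\emptyset$ by Proposition~\ref{edge-closed}(ii), and hence that $\overline{S}\subseteq\Sigma$, so closures taken in $\Sigma$ agree with closures taken in $M$. I will also record at the outset the elementary fact that the restriction of an achronal boundary to a smaller open set is again an achronal boundary there (if $B=\partial_U F$ for a future set $F$ in $U$ and $U'\subseteq U$ is open, then $F\cap U'$ is a future set in $U'$ and $B\cap U'=\partial_{U'}(F\cap U')$); combined with Proposition~\ref{edge-closed}(ii) this shows that whenever $S\cap U$ is an achronal boundary in $U$, it is in fact a closed topological hypersurface of $U$ with $edge_{U}(S\cap U)=\emptyset$.

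For the inclusion $edge\,S\subseteq\partial_\Sigma S$, let $p\in edge\,S$. From $edge\,S\subseteq\overline{S}\subseteq\Sigma$ we already have $p\in\mathrm{cl}_\Sigma S$, so it remains to rule out $p\in\mathrm{int}_\Sigma S$. If $p$ were in $\mathrm{int}_\Sigma S$, choose an open $W\subseteq M$ with $p\in W\cap\Sigma\subseteq S$, and, since $edge\,\Sigma=\emptyset$, use Proposition~\ref{nbd-edge} to choose a neighbourhood $U_0$ of $p$ with $\Sigma\cap U_0$ an achronal boundary in $U_0$. Put $U=U_0\cap W$. A short check using $S\subseteq\Sigma$ and $W\cap\Sigma\subseteq S$ gives $S\cap U=\Sigma\cap U$, which by the preliminary remark is an achronal boundary in $U$; Proposition~\ref{nbd-edge} then yields $p\notin edge\,S$, a contradiction. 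Hence $p\in\partial_\Sigma S$.

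For the reverse inclusion, let $p\in\partial_\Sigma S$ and suppose toward a contradiction that $p\notin edge\,S$. By Proposition~\ref{nbd-edge} there is a neighbourhood $U$ of $p$ with $S\cap U$ an achronal boundary in $U$; intersecting $U$ with the neighbourhood provided by Proposition~\ref{nbd-edge} for $\Sigma$ at $p$ (using $edge\,\Sigma=\emptyset$), we may further assume $\Sigma\cap U$ is an achronal boundary in $U$ as well. By the preliminary remark, $S\cap U$ and $\Sigma\cap U$ are closed topological hypersurfaces of $U$, both of dimension $n-1$, and $S\cap U\subseteq\Sigma\cap U$. Since $S\cap U$ is closed in $U$ and $p\in\overline{S}\cap U$, we get $p\in S\cap U$. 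Now Brouwer's theorem on invariance of domain, applied through local charts of the two hypersurfaces, shows that $S\cap U$ is open in $\Sigma\cap U$, hence open in $\Sigma$; therefore $p\in\mathrm{int}_\Sigma S$, contradicting $p\in\partial_\Sigma S$. Thus $p\in edge\,S$, completing the proof.

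The genuinely non-formal inputs are just two: the fact that an achronal boundary in $U$ restricts to an achronal boundary on open subsets and thus has empty edge there (bookkeeping with the definition of achronal boundary together with Proposition~\ref{edge-closed}), and the invariance-of-domain step asserting that a topological $(n-1)$-submanifold contained in another topological $(n-1)$-submanifold is relatively open in it. I expect the latter to be the conceptual crux, although it is entirely classical; all remaining steps are point-set manipulations built around Proposition~\ref{nbd-edge}.
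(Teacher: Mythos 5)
Your proposal is correct, but it reaches the nontrivial inclusion $\partial_\Sigma S\subseteq edge\,S$ by a genuinely different route from the paper's. The paper argues straight from the definition of the edge: given a boundary point $x$ of $S$ in $\Sigma$ and a neighbourhood $U$ of $x$, strong causality supplies an Alexandrov set $I^+(p)\cap I^-(q)\subseteq U$ around $x$ containing some $y_1\in S$ and $y_2\in\Sigma-S$; the timelike curve from $p$ to $q$ through $y_1$ meets $S$, while the one through $y_2$ meets $\Sigma$ only at $y_2$ (by achronality of $\Sigma$ a timelike curve meets it at most once) and hence misses $S$, so $x\in edge\,S$. You instead route everything through Proposition~\ref{nbd-edge}: if $p\notin edge\,S$, then $S\cap U$ is an achronal boundary, hence (via your restriction remark and Proposition~\ref{edge-closed}) an edgeless closed topological $(n-1)$-hypersurface of $U$ sitting inside the $(n-1)$-hypersurface $\Sigma\cap U$, and invariance of domain forces $S\cap U$ to be relatively open in $\Sigma$, contradicting $p\in\partial_\Sigma S$. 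Both arguments are sound. Yours uses strictly less causal structure --- only that $\Sigma$ is a closed, edgeless, achronal topological hypersurface, with no appeal to strong causality or to the Cauchy property --- so it would apply verbatim to any edgeless achronal hypersurface; the price is heavier topological input (invariance of domain, together with the standard facts that achronal boundaries are topological hypersurfaces and restrict to achronal boundaries on open subsets, which you correctly isolate and justify). The paper's construction is shorter and stays entirely within elementary causality theory. Your handling of the easy inclusion $edge\,S\subseteq\partial_\Sigma S$ via Proposition~\ref{nbd-edge} is likewise a legitimate way of making precise what the paper dismisses as ``almost trivial.''
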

\begin{proof}
The proof of $edge S \subset \partial{S}$ is almost trivial. Let $x$
be a boundary point of $S$ in $\Sigma$. Let $U$ be a neighborhood of
$x$ in $M$. Since $M$ is strongly causal, we can choose an
Alexandrov basis $I^+(p) \cap I^-(q)$ of $x$ such that $I^+(p) \cap
I^-(q) \subset U$. Since $x$ is a boundary point of $S$ in $\Sigma$,
we can choose $y_1 \in S$ and $y_2 \in \Sigma - S$ in $I^+(p) \cap
I^-(q)$. The existence of a timelike curve from $p$ to $q$ through
$y_1$ and a timelike curve from $p$ to $q$ through $y_2$ implies
that $x \in edge S$.
\end{proof}

From the above proposition, we can see immediately that $edge S =
edge (\Sigma-S)$ and $edge S_p = \emptyset$ if and only if $S_p =
\Sigma$. We also state some corollaries.

\begin{corollary}
If $\Sigma$ is non-compact, then $edge S_p$ and $edge S_p^\circ$ are
not empty for each $p \in I^+(\Sigma)$.
\end{corollary}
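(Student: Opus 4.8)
The plan is to convert the statement about edges into a purely topological one by means of Proposition \ref{coro}, which identifies $edge S$ with the set $\partial S$ of boundary points of $S$ in $\Sigma$ for every $S \subset \Sigma$. Thus it suffices to check that $S_p$ and $S_p^{\circ}$ are each a nonempty proper subset of $\Sigma$, and then to invoke the elementary fact that a nonempty proper subset $A$ of a connected topological space has $\partial A \neq \emptyset$; for otherwise $A$ would equal both its closure and its interior, hence be a nontrivial clopen set, contradicting connectedness.

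First I would record that $\Sigma$ is connected: since $M$ is globally hyperbolic it is homeomorphic to $\R \times \Sigma$, and $M$ is connected by the standing assumption on space-times, so $\Sigma$ is connected. (Alternatively, the map sending each point of $M$ to the unique point at which the inextendible timelike curve through it meets $\Sigma$ is a continuous retraction, exhibiting $\Sigma$ as a continuous image of the connected manifold $M$.) Next, nonemptiness: the hypothesis $p \in I^+(\Sigma)$ furnishes a point $x \in \Sigma$ with $x \ll p$, so $x \in I^-(p) \cap \Sigma = S_p^{\circ} \subset S_p$, whence $S_p^{\circ}$, and a fortiori $S_p$, is nonempty. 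Finally, properness is already contained in the discussion preceding this corollary: $S_p = J^-(p) \cap \Sigma$ is a compact subset of the non-compact manifold $\Sigma$, so $S_p \neq \Sigma$, and since $S_p^{\circ} \subset S_p$ we also have $S_p^{\circ} \neq \Sigma$.

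Combining these observations, $S_p$ and $S_p^{\circ}$ are nonempty proper subsets of the connected space $\Sigma$, so $\partial S_p \neq \emptyset$ and $\partial S_p^{\circ} \neq \emptyset$, and Proposition \ref{coro} then yields $edge S_p = \partial S_p \neq \emptyset$ and $edge S_p^{\circ} = \partial S_p^{\circ} \neq \emptyset$. I do not expect a genuine obstacle here; the only point worth dwelling on is the connectedness of $\Sigma$, which is why I would isolate it at the outset and supply the two independent justifications above rather than leaving it implicit.
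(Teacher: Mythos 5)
Your proof is correct and takes essentially the same route as the paper: both arguments come down to the fact that $S_p$ (hence also $S_p^{\circ}$) is a nonempty proper subset of the connected space $\Sigma$ --- properness coming from the compactness of $S_p$ versus the non-compactness of $\Sigma$ --- so its boundary in $\Sigma$, which Proposition~\ref{coro} identifies with the edge, cannot be empty. The only (cosmetic) difference is that the paper treats $S_p^{\circ}$ by first invoking Proposition~\ref{edge-closed}(ii) to conclude it is closed and then using that it is open in $\Sigma$, whereas you apply Proposition~\ref{coro} uniformly to both sets; you also make explicit the nonemptiness of $S_p^{\circ}$ (which is where $p \in I^+(\Sigma)$ rather than merely $p \in J^+(\Sigma)$ is used) and the connectedness of $\Sigma$, both of which the paper leaves implicit.
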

\begin{proof}
If $edge S_p = \emptyset$, then by the above proposition, we have
$\partial{S_p} = \emptyset$ and thus we have $S_p = \Sigma$, which
is a contradiction since $S_p$ is compact and $\Sigma$ is not. If
$edge S_p^\circ = \emptyset$ for some $p \in I^+(\Sigma)$, then
$S_p^\circ$ is closed by (ii) of Proposition ~\ref{edge-closed}.
Since $S_p^\circ$ is also open and $\Sigma$ is connected, we have
$S_p^\circ = \Sigma$. This implies that $S_p = \Sigma$, which is a
contradiction.
\end{proof}

\begin{corollary}
$edge S_p \subset \partial{I^-(p)} = \partial{J^-(p)}$
\end{corollary}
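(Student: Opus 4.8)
The plan is to reduce the whole statement to Proposition~\ref{coro}, which identifies $edge\,S_p$ with the topological boundary of $S_p$ taken \emph{inside} $\Sigma$. So there are really two things to check: the set equality $\partial I^-(p)=\partial J^-(p)$, and the inclusion of $edge\,S_p$ into this common boundary. For the equality I would use the two standard facts, valid in any spacetime, that $\operatorname{int}J^-(p)=I^-(p)$ and $\overline{J^-(p)}=\overline{I^-(p)}$: the first because $I^-(p)$ is open and contained in $J^-(p)$ while, conversely, a point $q$ in the interior of $J^-(p)$ has a neighborhood lying in $J^-(p)$ which meets $I^+(q)$, producing $r$ with $q\ll r\le p$ and hence $q\ll p$; the second by the usual convex normal neighborhood deformation argument. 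Together these give $\partial I^-(p)=\overline{I^-(p)}\setminus I^-(p)=\overline{J^-(p)}\setminus\operatorname{int}J^-(p)=\partial J^-(p)$. In our setting one can shortcut this: $M$ is globally hyperbolic, hence causally simple, so $J^-(p)$ is closed and the equality $\partial I^-(p)=\partial J^-(p)=J^-(p)\setminus I^-(p)$ is immediate.

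For the inclusion, I would take $x\in edge\,S_p$ and argue as follows. By Proposition~\ref{coro}, $x$ is a boundary point of $S_p$ relative to $\Sigma$; since $S_p=J^-(p)\cap\Sigma$ is compact, hence closed in $\Sigma$, this means precisely that $x\in S_p$ and $x\notin\operatorname{int}_\Sigma S_p$. From $x\in S_p\subset J^-(p)\subset\overline{I^-(p)}$ we already get that $x$ lies in the closure of $I^-(p)$. It then suffices to show $x\notin I^-(p)$. If instead $x\in I^-(p)$, then $x\in I^-(p)\cap\Sigma=S_p^\circ$, which is open in $\Sigma$ because $I^-(p)$ is open in $M$, and which is contained in $S_p$; hence $x$ would be an interior point of $S_p$ in $\Sigma$, contradicting $x\notin\operatorname{int}_\Sigma S_p$. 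Therefore $x\in\overline{I^-(p)}\setminus I^-(p)=\partial I^-(p)$, as required.

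I do not expect a real obstacle here: the content is entirely carried by Proposition~\ref{coro}, which converts the causal notion $edge$ into an ordinary topological boundary inside $\Sigma$, after which the argument is a short point-set chase. The one point demanding care is that we must land in the boundary of $I^-(p)$ \emph{as a subset of $M$}, not merely in the boundary of $S_p$ inside $\Sigma$; this is what forces the use of the inclusions $S_p\subset J^-(p)\subset\overline{I^-(p)}$ to control the closure, and the $\Sigma$-openness of $S_p^\circ$ to control the interior.
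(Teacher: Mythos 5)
Your proof is correct and takes essentially the same route as the paper's: both arguments reduce the inclusion to Proposition~\ref{coro} (identifying $edge\,S_p$ with $\partial_\Sigma S_p$) combined with the identity $\overline{I^-(p)}=J^-(p)$ valid in a globally hyperbolic space-time. The only cosmetic differences are that you exclude $x$ from $I^-(p)$ directly via the openness of $I^-(p)$, whereas the paper exhibits points of $\Sigma-S_p$ (hence outside $I^-(p)$) in every neighborhood of $x$, and that you spell out the equality $\partial I^-(p)=\partial J^-(p)$, which the paper leaves implicit.
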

\begin{proof}
Let $x$ be in $edge S_p$ and $U$ be any neighborhood of $x$ in $M$.
Since $edge S_p \subset \overline{S_p} = S_p$, $U$ contains a point
of $J^-(p)$. Since $M$ is globally hyperbolic, the closure of
$I^-(p)$ is the same with $J^-(p)$. Thus, $U$ contains a point in
$I^-(p)$. By the previous proposition, $x$ is a boundary point of
$S_p$ in $\Sigma$ and so $U$ contains a point of $\Sigma - S_p$,
which is not in $I^-(p)$.
\end{proof}

\begin{lemma} \label{edge-geodesic}
If $x \in edge S_p$, then $x \rightarrow p$ and thus there exists a
future-directed null geodesic from $x$ to $p$.
\end{lemma}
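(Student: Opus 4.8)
The plan is to extract the causal relation $x \le p$ directly from membership in $\overline{S_p}$, and then to rule out $x \ll p$ using the characterization of the edge as a topological boundary inside $\Sigma$. First I would note that, since $M$ is globally hyperbolic, $S_p = J^-(p)\cap\Sigma$ is compact, hence closed in $\Sigma$ and in $M$; combined with the inclusion $\overline{S_p} - S_p \subset edge\, S_p \subset \overline{S_p}$ recorded after the definition of edge, this gives $edge\, S_p \subset \overline{S_p} = S_p$. In particular $x \in S_p = J^-(p)\cap\Sigma$, so $x \le p$. Note also $x \ne p$, since $x \in \Sigma$ while $p \in I^+(\Sigma)$ (the standing assumption of this section), so a causal curve realizing $x\le p$ is nonconstant.

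Next I would argue that $x \ll p$ is impossible. Suppose to the contrary that $x \in I^-(p)$. Since $I^-(p)$ is open in $M$, there is a neighborhood $W$ of $x$ in $M$ with $W \subset I^-(p)$, and therefore $W \cap \Sigma \subset I^-(p)\cap\Sigma \subset J^-(p)\cap\Sigma = S_p$. Thus $x$ is an interior point of $S_p$ relative to the subspace topology on $\Sigma$. However, by Proposition~\ref{coro}, $edge\, S_p$ coincides with the set of boundary points of $S_p$ in $\Sigma$, so $x \in edge\, S_p$ cannot be an interior point of $S_p$ in $\Sigma$ --- a contradiction. Hence $x \not\ll p$.

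Combining the two steps, $x \le p$ but not $x \ll p$, which is precisely $x \rightarrow p$. Finally, applying part (iii) of the basic Proposition on causal relations in Section~\ref{section: 2}, the horismos relation $x \rightarrow p$ yields a null geodesic from $x$ to $p$ (in fact one without conjugate points); since the curve is future-directed by construction, this is the desired future-directed null geodesic.

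I do not expect a genuine obstacle here: the only point requiring care is the clean invocation of Proposition~\ref{coro} to identify $edge\, S_p$ with $\partial S_p$ in $\Sigma$ (which is where global hyperbolicity, via compactness of $S_p$ and closedness of $J^\pm$, is really used), together with openness of $I^-(p)$ to produce the interior-point contradiction. Everything else is a direct unwinding of definitions.
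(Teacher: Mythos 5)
Your proposal is correct and follows essentially the same route as the paper: closedness of $S_p$ gives $x \le p$, and openness of the chronological relation produces a neighborhood of $x$ in $\Sigma$ contained in $S_p$, contradicting $x \in edge\, S_p$ (you route this last step through Proposition~\ref{coro}, the paper appeals directly to the definition of edge, but the content is identical). No issues.
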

\begin{proof}
 Since $S_p$ is closed, we have $x \in S_p$ and $x
\leq p$. If $x \ll p$, then there exists a neighborhood $U$ of $x$
such that $y \ll p$ for all $y \in U$ since the relation $\ll$ is
open. This implies that $U \cap \Sigma$ is an open neighborhood of
$x$ in $\Sigma$ such that $U \cap \Sigma \subset S_p$, which is a
contradiction to the fact that $x \in edge S$. Thus, $x \rightarrow
p$ and we have a desired null geodesic.
\end{proof}

In Proposition ~\ref{proposition-edge}, we have seen that the
boundary of a past set is generated by null geodesics. When $M$ is
globally hyperbolic, we can state the result more precisely as the
following.

\begin{lemma} \label{endgeodesic}
Let $M$ be globally hyperbolic with a non-compact Cauchy surface
$\Sigma$ and $x \in
\partial{I^-(S)} - \overline{S}$ with $S \subset \Sigma$. Then there exists a future-directed null geodesic
from $x$ with future endpoint in $edge S$.
\end{lemma}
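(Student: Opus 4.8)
The plan is to take as our starting point Proposition~\ref{proposition-edge}. Since $S\subseteq\Sigma$ and $\Sigma$ is achronal, $S$ is achronal, so that proposition (applied to our $x\in\partial I^-(S)-\overline S$) yields a future-directed null geodesic $\gamma$ lying on $\partial I^-(S)$ with past endpoint $x$, which is \emph{either} future-inextendible \emph{or} has a future endpoint $y\in\overline S$. It then suffices to (a) exclude the future-inextendible alternative, and (b) upgrade ``$y\in\overline S$'' to ``$y\in edge\,S$''. Two preliminary remarks organize both steps. First, $I^-(S)\subseteq I^-(\Sigma)$, so $\gamma\subseteq\partial I^-(S)\subseteq\overline{I^-(\Sigma)}=J^-(\Sigma)=I^-(\Sigma)\cup\Sigma$. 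Second, $J^+(\Sigma)\cap J^-(\Sigma)=\Sigma$: a point $p$ in both but not on $\Sigma$ would satisfy $\sigma_1\le p\le\sigma_2$ with $\sigma_i\in\Sigma$, $\sigma_i\neq p$, and concatenating the two causal segments through $p$ gives either a timelike curve between two points of $\Sigma$ (contradicting achronality) or, if the segments align, a causal curve between two points of $\Sigma$ which, being part of an inextendible causal curve meeting $\Sigma$ only once, forces $\sigma_1=\sigma_2$ and hence a closed causal curve, impossible. Since $\Sigma$ is spacelike, no nondegenerate causal curve lies in $\Sigma$; therefore, if some non-endpoint $\gamma(t)$ lay on $\Sigma$, the portion of $\gamma$ beyond $\gamma(t)$ would lie in $J^+(\Sigma)\cap J^-(\Sigma)=\Sigma$ and be degenerate. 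Hence the interior of $\gamma$ lies in $I^-(\Sigma)$; in particular $x\in I^-(\Sigma)$.

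For step (a), suppose $\gamma$ is future-inextendible. Prepend to $\gamma$ the past half of an inextendible timelike curve through $x$ (its points lie in $I^-(x)\subseteq I^-(\Sigma)$); the concatenation $\tilde\gamma$ is an inextendible causal curve contained in $J^-(\Sigma)$. By global hyperbolicity $\tilde\gamma$ meets $\Sigma$ exactly once, at some $z$; but every point of $\tilde\gamma$ beyond $z$ lies in $J^+(z)\cap J^-(\Sigma)\subseteq J^+(\Sigma)\cap J^-(\Sigma)=\Sigma$, so $z$ would have to be a future endpoint of $\tilde\gamma$, contradicting future-inextendibility. Thus $\gamma$ has a future endpoint $y\in\overline S\subseteq\Sigma$.

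For step (b), suppose $y\notin edge\,S$. By Proposition~\ref{coro}, $edge\,S$ is the boundary of $S$ in $\Sigma$, so $y$ is an interior point of $S$ in $\Sigma$: there is an open set $V$ of $\Sigma$ with $y\in V\subseteq S$. Fix a convex normal neighborhood $N$ of $y$. For $t$ close enough to the final parameter, $\gamma(t)\in N\cap I^-(\Sigma)$ and $\gamma|_{[t,\cdot]}$ is a null geodesic from $\gamma(t)$ to $y$ inside $N$; because $\Sigma$ is spacelike this null geodesic meets $\Sigma$ transversally at $y$, so the behaviour of $\exp_{\gamma(t)}$ on $N$ shows that the future timelike cone of $\gamma(t)$ inside $N$ meets $\Sigma$ in a nonempty open subset of $\Sigma$ having $y$ in its closure, and hence meeting $V$ once $t$ is sufficiently close to the endpoint. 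Any such point $v\in V\subseteq S$ satisfies $\gamma(t)\ll v$, so $\gamma(t)\in I^-(S)$; but $I^-(S)$ is open and disjoint from $\partial I^-(S)$, which contains $\gamma(t)$ — a contradiction. Hence $y\in edge\,S$, and $\gamma$ is a future-directed null geodesic from $x$ with future endpoint in $edge\,S$, as required.

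The genuinely delicate point is step (b): everything else is a bookkeeping consequence of global hyperbolicity and of the decomposition $M=I^-(\Sigma)\sqcup\Sigma\sqcup I^+(\Sigma)$, whereas (b) hinges on the local claim that a null geodesic arriving \emph{transversally} at an interior point of $S$ pushes nearby points of the geodesic into $I^-(S)$. This is where spacelikeness of $\Sigma$ is used in an essential way, via the transversality of the geodesic to $\Sigma$ and the diffeomorphism property of the exponential map on a convex normal neighborhood.
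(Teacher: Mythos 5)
Your proof is correct, but it reaches the conclusion by a genuinely different route at both stages. For the existence of a future endpoint, the paper simply cites Corollary 3.32 of Beem--Ehrlich--Easley together with Proposition~\ref{proposition-edge}, whereas you rule out future-inextendibility explicitly via the decomposition $M=I^-(\Sigma)\sqcup\Sigma\sqcup I^+(\Sigma)$ and the fact that an inextendible causal curve meets a spacelike Cauchy surface exactly once; this makes the lemma self-contained, at the cost of leaning on those standard (but themselves nontrivial) Cauchy-surface facts, which your sketch of $J^+(\Sigma)\cap J^-(\Sigma)=\Sigma$ partly presupposes. The substantive divergence is in upgrading $y\in\overline{S}$ to $y\in edge\,S$: the paper argues straight from the definition of edge, building inside a suitable neighborhood $U$ of $y$ one timelike curve from $z_1$ to $z_2$ through a point of $S$ and a second one routed through a point $\gamma(t_0)\neq y$ of the geodesic, and deducing from ``$y\notin edge\,S$'' that the second curve must also meet $S$, whence $x\in I^-(S)$ --- a purely causal-theoretic argument needing only achronality and strong causality. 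You instead use Proposition~\ref{coro} to convert ``$y\notin edge\,S$'' into ``$y\in\mathrm{int}_\Sigma S$'' and then a convex normal neighborhood plus transversality of the null geodesic to the spacelike $\Sigma$ to force $\gamma(t)\in I^-(S)$; the transversality claim is sound, since $T_y\Sigma$ is spacelike while the tangent plane to the null cone of $\gamma(t)$ at $y$ is null, so $\Sigma$ must enter the open timelike cone arbitrarily close to $y$. Both arguments terminate in the same contradiction with membership in $\partial I^-(S)$; the paper's is more elementary in its prerequisites, yours is geometrically more transparent but uses the smooth spacelike structure in an essential way.
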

\begin{proof}
By Corollary 3. 32 in Reference \cite{Beem} and by Proposition
\ref{proposition-edge}, there is a future-directed null geodesic
$\gamma$ from $x$ to $y \in \overline{S}$. It remains to show that
$y \in edge S$. Assume that $y \notin edge S$. Then by the
definition of edge, we can choose a neighborhood $U$ of $y$ such
that if $\alpha$ is a timelike curve in $U$ from $z_1 \in U$ to $z_2
\in U $, then any other timelike curve from $z_1$ to $z_2$ must meet
$S$ if and only if $\alpha$ does. If we choose a point $z_1$ in $U
\cap I^-(y)$, then since $y \in \overline{S}$, an open neighborhood
$I^+(z_1) \cap I^-(z_2)$ of $y$ contains a point in $S$ for some
suitably chosen $z_2 \in U \cap I^+(y)$. Then we get the timelike
curve $\alpha$ from $z_1$ to $z_2$ through a point in $S$.
Therefore, every timelike curve in $U$ from $z_1$ to $z_2$ must meet
$S$. Since $y$ is the future endpoint of $\gamma$, there is $t_0$
such that $\gamma(t_0) \in I^+(z_1) \cap I^-(z_2)$ and $\gamma(t_0)
\neq y$. If we consider a future directed timelike curve from $z_1$
to $\gamma(t_0)$ followed by a timelike curve from $\gamma(t_0)$ to
$z_2$, then it must meet $S$ and we have $x \in I^-(S)$. This is a
contradiction to the fact that $x \in
\partial{I^-(S)}$.
\end{proof}

If $S$ is a future admissible set, then there is a point $p$ such
that $S = J^-(p) \cap \Sigma$. It is natural to ask whether such a
realizing point $p$ can be uniquely determined. It is easy to see
that in two-dimensional Einstein's static universe which has a
compact Cauchy surface, there are infinitely many points such that
$J^-(p) \cap \Sigma$ is the whole of $\Sigma$. However, the next
proposition tells us that such a point can be uniquely determined if
$\Sigma$ is non-compact.

\begin{proposition} \label{sp-unique}
If $\Sigma$ is non-compact and $S_p = S_q$, then $p = q$.
\end{proposition}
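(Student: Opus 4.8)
The plan is to show the hypothesis $p\neq q$ is untenable. Since a globally hyperbolic space-time is causal, it suffices to exclude each way in which $p$ and $q$ could be causally situated. A few quick reductions first: if $S_{p}=S_{q}=\emptyset$ the statement is vacuous; if $S_{p}=S_{q}$ is a single point then $p,q\in\Sigma$ (there $S_{p}=\{p\}$, since a null geodesic cannot join two distinct points of a spacelike hypersurface), so $p=q$; otherwise $S_{p}=S_{q}$ has at least two points, which rules out $p\in\Sigma$ (where $S_{p}=\{p\}$) and $p\in I^{-}(\Sigma)$ (where $S_{p}=\emptyset$), so $p,q\in I^{+}(\Sigma)$.

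The first step is to place $p$ and $q$ on a common null geodesic. In $M=I^{-}(\Sigma)\cup\Sigma\cup I^{+}(\Sigma)$ the causal pasts of $p$ and $q$ agree over $\Sigma$ by hypothesis, and over $I^{-}(\Sigma)$ because a future causal curve from $r\in I^{-}(\Sigma)$ with $r\le p$ must cross the Cauchy surface (recall $p\in I^{+}(\Sigma)$) at a point $c\in S_{p}=S_{q}$, whence $r\le c\le q$; so $\partial J^{-}(p)$ and $\partial J^{-}(q)$ coincide inside the open set $I^{-}(\Sigma)$. Now non-compactness is used: $S_{p}$ is a proper compact subset of $\Sigma$, so $edge\,S_{p}=edge\,S_{q}$ is non-empty; pick $x$ in it. By Lemma \ref{edge-geodesic} there is a null geodesic from $x$ to $p$ — a generator of $\partial J^{-}(p)$ — and prolonging it past $x$ into $I^{-}(\Sigma)$ it lies on $\partial J^{-}(p)=\partial J^{-}(q)$, hence is a segment of a generator of $\partial J^{-}(q)$, which runs up to $q$. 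A null geodesic is determined by any of its segments, so $p$ and $q$ lie on one null geodesic; in particular $p\le q$ or $q\le p$.

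The second step excludes a chronological relation. If $p\ll q$, choose $z$ with $p\ll z\ll q$; by Proposition 2.1(i), $w\le p$ implies $w\ll z$, so $J^{-}(p)\subseteq I^{-}(z)$. But $I^{-}(z)\cap\Sigma$ is non-empty (as $z\in I^{+}(\Sigma)$) and, being contained in the compact slice $S_{z}$, a proper subset of the connected manifold $\Sigma$; so it has a boundary point $\sigma^{*}\in\Sigma$. Then $\sigma^{*}\in\overline{I^{-}(z)}\setminus I^{-}(z)$, so $\sigma^{*}\notin J^{-}(p)$ and $\sigma^{*}\notin S_{p}$, while $\sigma^{*}\le z\ll q$ gives $\sigma^{*}\in S_{q}$ — a contradiction. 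Hence $p\not\ll q$, and symmetrically $q\not\ll p$; combining with the first step, the only remaining case is $p\neq q$ with $p\to q$ (or, symmetrically, $q\to p$).

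The case $p\to q$, $p\neq q$ is the heart of the matter and the step I expect to be hardest. Running the argument of the first step over every generator of $\partial J^{-}(p)$ — these realize all past-directed null directions at $p$ — shows that every future-directed null geodesic from $p$ passes through $q$. Since at least two such geodesics exist, none can stay on $\partial J^{+}(p)$ beyond $q$: if $h_{1}$ did, reaching a point $y\in\partial J^{+}(p)$ with $q<y$, then $p\to y$, but routing from $p$ to $q$ along a different generator $h_{2}$ and then along $h_{1}$ to $y$ produces a broken causal curve, forcing $p\ll y$ — a contradiction. Hence $\partial J^{+}(p)\subseteq J^{-}(q)$, so $\partial J^{+}(p)$ is a closed subset of the compact set $J^{+}(p)\cap J^{-}(q)$ and is therefore compact. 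But $\partial J^{+}(p)$ is an achronal topological hypersurface, so in the product splitting $M\cong\R\times\Sigma$ of a globally hyperbolic space-time it projects injectively and continuously into $\Sigma$; a compact image is, by invariance of domain, open and closed in the connected manifold $\Sigma$, so $\Sigma$ would be compact — contrary to hypothesis. Therefore $p=q$. Everything but this last step is routine manipulation of Proposition 2.1, Lemma \ref{edge-geodesic}, and the Cauchy property; the crux is turning "two events with the same slice" into a compact future light cone, and hence into compactness of $\Sigma$.
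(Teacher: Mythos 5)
Your strategy is genuinely different from the paper's, and the final idea --- if $p\to q$ with $S_p=S_q$ then $\partial J^+(p)\subset J^+(p)\cap J^-(q)$ would be a compact, edgeless achronal hypersurface, forcing $\Sigma$ to be compact --- is attractive. But the proof has a gap whose source is cut points of the null generators, and it appears twice. In Step 1 you take an \emph{arbitrary} $x\in edge\, S_p$, the null geodesic from $x$ to $p$ given by Lemma \ref{edge-geodesic}, and assert that its prolongation past $x$ into $I^-(\Sigma)$ lies on $\partial J^-(p)$. That is false in general: $x$ may be the past cut point of this generator, in which case the prolongation immediately enters $I^-(p)$. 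Concretely, take $M=\R\times\Sigma$ with product metric, $\Sigma=S^1\times\R$ a flat cylinder of circumference $2\pi$ sitting at $t=0$, and $p=(\pi,\theta=0,z=0)$; then $x=(0,\pi,0)$ lies in $edge\, S_p$, two distinct generators from $p$ arrive at $x$, and either one prolonged past $x$ drops into $I^-(p)$. The claim is true for a \emph{suitably chosen} $x$, obtained bottom-up: pick $y\in\partial I^-(S_p)\cap I^-(\Sigma)$ and run a generator from $y$ up through $edge\, S_p$ to $p$ via Proposition \ref{proposition-edge} and Lemma \ref{endgeodesic} --- which is exactly how the paper's proof starts. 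So Step 1 is repairable; Step 2 is fine and is essentially Proposition \ref{thm3}.

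Step 3, which you correctly identify as the crux, is not repairable as written. You claim that running Step 1 over every generator of $\partial J^-(p)$ shows \emph{every} future-directed null geodesic from $p$ passes through $q$, because the generators realize all past-directed null directions at $p$. They do so only near $p$: a generator stays on $\partial J^-(p)$ only until its past cut point, and that can occur strictly above $\Sigma$ (same cylinder, $p$ at height $T>\pi$: the generator winding around the circle leaves $\partial J^-(p)$ at height $T-\pi>0$ and never reaches $\Sigma$ on the boundary). For such a direction the identity $\partial J^-(p)=\partial J^-(q)$ on $I^-(\Sigma)$ gives you nothing, so you cannot conclude that the corresponding future null geodesic hits $q$; everything downstream (``at least two such geodesics exist'', $\partial J^+(p)\subset J^-(q)$, compactness) rests on this unproven claim. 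The paper instead disposes of the horismos case entirely above $\Sigma$: it takes \emph{two} points $x,x'\in edge\, S$, shows the geodesics $x\to p$, $x\to q$ must align (else $y\ll p$ or $y\ll q$ for the bottom-up point $y$), and then derives $x\ll q$ from a null geodesic broken at $p$ --- no control over all generators is needed. A minor point at the outset: if $S_p=S_q=\emptyset$ the statement is not vacuous but false (any two points of $I^-(\Sigma)$ have empty slice); the proposition is implicitly about representative points $p,q\in J^+(\Sigma)$, for which $S_p\neq\emptyset$ automatically.
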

\begin{proof}
We assume that $p \neq q$ and denote $S_p = S_q$ by $S$. Since
$I^-(S) \neq I^-(\Sigma)$, we can choose $y \in \partial{I^-(S) -
\overline{S}}$ and we must have $y \in \partial{I^-(p)}$ and $y \in
\partial{I^-(q)}$ since $S_p = S_q$. Then by Lemma ~\ref{endgeodesic},
there is a null geodesic $\gamma_x$ from $y$ to $x \in edge S$ which
generates $\partial{I^-(S)}$. Since $x \in edge S$, by Lemma
~\ref{edge-geodesic}, there are two null geodesics $\gamma_p$ and
$\gamma_q$ from $x$ to $p$ and $q$, which generates
$\partial{I^-(p)}$ and $\partial{I^-(q)}$, respectively. Let us
assume that the tangent vectors of $\gamma_p$ and $\gamma_q$ have
different directions at $x$. If $\gamma_x$ and $\gamma_p$ does not
constitute a single null geodesic, then we have $y \ll p$ which
contradicts to the fact that $y \in
\partial{I^-(S)}$. Thus $\gamma_x$ and $\gamma_p$ constitute a
single null geodesic. However, this implies that $\gamma_x$ and
$\gamma_q$ does not constitute a single null geodesic since
$\gamma_p$ and $\gamma_q$ have different directions at $x$ and thus
we have $y \ll q$. This is a contradiction to $y \in
\partial{I^-(q)}$. Therefore, $\gamma_p$ and
$\gamma_q$ must have the same direction at $x$. Since the two curves
are geodesics with the same initial direction, we can treat them as
the same null geodesic and we denote the geodesic by $\gamma$. By
the same argument we can choose another $x^\prime \in edge S$ and
two null geodesics $\gamma_p^\prime$ and $\gamma_q^\prime$ from
$x^\prime$ to $p$ and $q$. By the same reason as above,
$\gamma_p^\prime$ and $\gamma_q^\prime$ must have the same initial
direction. We denote the null geodesic by $\gamma^\prime$. Without
loss of generality, we can assume that $\gamma$ and $\gamma^\prime$
meet at $p$ first. If we follow the curve $\gamma$ from $x$ to $p$
and the curve $\gamma^\prime$ from $p$ to $q$, we must have that $x
\ll q$ since $\gamma$ and $\gamma^\prime$ have different directions
at $p$. This again gives a contradiction to the fact that $x \in
\partial{I^-(q)}$. This contradiction stems from the assumption that
$p \neq q$ and we must conclude that $p = q$
\end{proof}

\begin{proposition}
$S_p = \overline{S_p^{\circ}}$
\end{proposition}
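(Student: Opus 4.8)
The plan is to prove the two inclusions $\overline{S_p^{\circ}}\subseteq S_p$ and $S_p\subseteq\overline{S_p^{\circ}}$ separately; I read the stated equality as implicitly restricted to $p\in I^+(\Sigma)$ (for $p\in I^-(\Sigma)$ both sides are empty, while for $p\in\Sigma$ one has $S_p^{\circ}=\emptyset$ by achronality but $S_p=\{p\}$ by acausality, so some such restriction is really needed). The inclusion $\overline{S_p^{\circ}}\subseteq S_p$ is immediate: a timelike curve is in particular causal, so $S_p^{\circ}=I^-(p)\cap\Sigma\subseteq J^-(p)\cap\Sigma=S_p$, and $S_p$ was already observed to be compact, hence closed, so $\overline{S_p^{\circ}}\subseteq\overline{S_p}=S_p$.

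For the reverse inclusion, fix $x\in S_p$. If $x\ll p$ then $x\in S_p^{\circ}$ and there is nothing to prove, so assume $x\leq p$ but $x\not\ll p$; note $x\neq p$ since $p\notin\Sigma$. I will show that every neighbourhood $V$ of $x$ in $M$ meets $S_p^{\circ}$. Choose a future-directed causal curve $c\colon[0,1]\to M$ from $c(0)=x$ to $c(1)=p$. The key preliminary observation is that $c(\epsilon)\in I^+(\Sigma)$ for every $\epsilon\in(0,1]$: it cannot lie on $\Sigma$, since $x\leq c(\epsilon)$ and acausality of $\Sigma$ would then force $c(\epsilon)=x$, whereas $c$, having nonzero tangent, cannot return to $x$ in the causal space-time $M$; and it cannot lie in $I^-(\Sigma)$, for then $x\leq c(\epsilon)\ll\sigma'$ for some $\sigma'\in\Sigma$ would give $x\ll\sigma'$, contradicting achronality of $\Sigma$; since $M\setminus\Sigma=I^+(\Sigma)\sqcup I^-(\Sigma)$ for a Cauchy surface, this forces $c(\epsilon)\in I^+(\Sigma)$. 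Consequently (by definition of $I^+(\Sigma)$) $I^-(c(\epsilon))\cap\Sigma\neq\emptyset$, and any $w$ in it satisfies $w\ll c(\epsilon)\leq p$, hence $w\ll p$; thus $\emptyset\neq I^-(c(\epsilon))\cap\Sigma\subseteq I^-(p)\cap\Sigma=S_p^{\circ}$ for every $\epsilon\in(0,1]$.

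It now suffices to prove the claim that for every neighbourhood $V$ of $x$ there is $\epsilon>0$ with $I^-(c(\epsilon))\cap\Sigma\subseteq V$, and I expect this to be the only real obstacle. I would argue by contradiction: if it failed there would be $\epsilon_i\downarrow 0$ and points $w_i\in(I^-(c(\epsilon_i))\cap\Sigma)\setminus V$. By the previous paragraph $w_i\in S_p^{\circ}\subseteq S_p$, and $S_p$ is compact, so after passing to a subsequence $w_i\to w_\infty\in S_p\subseteq\Sigma$; since $V$ is open, $w_\infty\notin V$, and in particular $w_\infty\neq x$. Each $w_i$ is joined to $c(\epsilon_i)$ by a timelike curve, and as $w_i\to w_\infty$ and $c(\epsilon_i)\to x$ these curves eventually lie in a compact set of the form $J^+(K_1)\cap J^-(K_2)$, so a standard limit-curve argument (available because $M$ is globally hyperbolic, and used in the same way in the proof of Proposition~\ref{proposition-edge}) produces a future-directed causal curve from $w_\infty$ to $x$; hence $w_\infty\leq x$. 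But $w_\infty,x\in\Sigma$ and $\Sigma$ is acausal, so $w_\infty=x$, a contradiction. Granting the claim, for a given $V$ pick such an $\epsilon$; then $\emptyset\neq I^-(c(\epsilon))\cap\Sigma\subseteq V\cap S_p^{\circ}$, and since $V$ was arbitrary, $x\in\overline{S_p^{\circ}}$.

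Apart from the limit-curve step, the argument uses only the composition rules for $\ll$ and $\leq$ and three elementary properties of a spacelike Cauchy surface $\Sigma$: that it is closed, that it is acausal, and that $M=I^+(\Sigma)\sqcup\Sigma\sqcup I^-(\Sigma)$; all of these are standard (see Chapter~14 of \cite{Oneill}). The crux is thus the compactness-plus-limit-curve argument showing that $I^-(c(\epsilon))\cap\Sigma$ is eventually trapped inside any prescribed neighbourhood of $x$.
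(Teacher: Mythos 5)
Your proof is correct, but it follows a genuinely different route from the paper's. Both arguments reduce to showing that every neighbourhood of a point $x\in S_p$ with $x\rightarrow p$ meets $S_p^{\circ}$, but the mechanisms differ. The paper argues locally: since $M$ is globally hyperbolic, $J^-(p)=\overline{I^-(p)}$, so an Alexandrov neighbourhood $I^+(y)\cap I^-(z)$ of $x$ contains a point $y^\prime\in I^-(p)$; a past-directed timelike curve from $p$ through $y^\prime$ down to $y\in I^-(x)\subset I^-(\Sigma)$ must cross $\Sigma$, and strong causality is invoked to keep that crossing inside the prescribed neighbourhood. You instead travel up a causal curve $c$ from $x$ to $p$, note that $c(\epsilon)\in I^+(\Sigma)\cap J^-(p)$ for $\epsilon>0$ so that $\emptyset\neq I^-(c(\epsilon))\cap\Sigma\subset S_p^{\circ}$, and then pin these intersections near $x$ by a global compactness argument: compactness of $S_p$, the closedness of $\leq$ in globally hyperbolic space-times (your limit-curve step is exactly Lemma \ref{closed}, so you could cite it instead of rerunning the limit-curve construction), and acausality of the spacelike Cauchy surface. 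Your version is longer and uses global hyperbolicity ``in the large'' rather than locally, but it buys two things: it makes fully explicit why the point of $S_p^{\circ}$ produced actually lies in the given neighbourhood (a step the paper's proof passes over rather quickly, since a priori the curve from $p$ to $y$ could cross $\Sigma$ outside $U$), and it correctly records that the statement must be read with $p\in I^+(\Sigma)$, as it fails for $p\in\Sigma$.
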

\begin{proof}
Clearly, $S_p^{\circ} \subset S_p$ and thus $\overline{S_p^{\circ}}
\subset \overline{S_p} = S_p$ since $S_p$ is closed. Conversely we
show that $S_p \subset \overline{S_p^{\circ}}$. Let $x \in S_p$,
then $x \leq p$. If $x \ll p$, then $x \in S_p^{\circ} \subset
\overline{S_p^{\circ}}$. If $x$ and $p$ does not satisfy $x \ll p$,
then we have $x \rightarrow p$. This means that $x \in
\partial{J^-(p)}$. Since $M$ is globally hyperbolic, we have $J^-(p)
= \overline{I^-(p)}$ and thus any neighborhood of $x$ contains a
point of $I^-(p)$. Let $U$ be any neighborhood of $x$ such that no
causal curves intersect $U$ in a disconnected set and $I^+(y) \cap
I^-(z)$ be an Alexandrov neighborhood of $x$ such that $I^+(y) \cap
I^-(z) \subset U$. Then $I^+(y) \cap I^-(z)$ contains a point
$y^\prime$ in $I^-(p)$ and we get a timelike curve $\gamma$ from $p$
to $y$ through $y^\prime$. Since $y \in I^-(x)$ and $x \in \Sigma$,
$\gamma$ must meet $\Sigma$ in $I^+(y) \cap I^-(z) \subset U$. This
means that $U \cap S_p^{\circ} \neq \emptyset$.
\end{proof}

In Proposition ~\ref{sp-unique}, we have shown that the
representative point of a future causally admissible set is unique.
By the above proposition, we can see immediately that the same
result holds for $S^{\circ}_p$.

\begin{corollary}
If $S^{\circ}_p = S^{\circ}_q$, then $p = q$.
\end{corollary}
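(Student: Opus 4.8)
The plan is to reduce this corollary to the two results established immediately before it, namely the proposition $S_p = \overline{S_p^{\circ}}$ and Proposition~\ref{sp-unique} (which asserts that $S_p = S_q$ forces $p = q$ when $\Sigma$ is non-compact). The key observation is that the interior slices $S_p^{\circ}$ and the closed slices $S_p$ carry the same information once $\Sigma$ is non-compact, because the latter is recovered from the former by taking topological closure in $\Sigma$ (or in $M$; since $\Sigma$ is a topological hypersurface the two closures agree on subsets of $\Sigma$).

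First I would assume $S_p^{\circ} = S_q^{\circ}$. Taking closures of both sides gives $\overline{S_p^{\circ}} = \overline{S_q^{\circ}}$. Next I would invoke the preceding proposition, which states $S_p = \overline{S_p^{\circ}}$ and likewise $S_q = \overline{S_q^{\circ}}$; substituting yields $S_p = S_q$. Finally, since $\Sigma$ is non-compact by the standing hypothesis of this section, Proposition~\ref{sp-unique} applies and gives $p = q$, which is the claim.

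There is essentially no obstacle here: the corollary is a two-line formal consequence of results already in hand, and the only point requiring a moment's care is that the closure operation is being applied consistently (within $\Sigma$, or equivalently within $M$ restricted to $\Sigma$), which is exactly the content of the proposition $S_p = \overline{S_p^{\circ}}$. All the genuine work — producing null generators of $\partial I^-(S)$, the direction-matching argument at edge points, and the non-compactness input that lets one pick $y \in \partial I^-(S) - \overline{S}$ — has already been carried out in the proof of Proposition~\ref{sp-unique}, so nothing new needs to be proved.
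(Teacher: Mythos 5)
Your proof is correct and is exactly the paper's argument: take closures of both sides, apply the preceding proposition $S_p = \overline{S_p^{\circ}}$ to get $S_p = S_q$, and conclude $p = q$ from Proposition~\ref{sp-unique}. Nothing further is needed.
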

\begin{proof}
By taking the closure in both sides of  $S^{\circ}_p = S^{\circ}_q$,
we get $S_p = S_q$ by the above proposition. By Proposition
~\ref{sp-unique}, we have $p = q$.
\end{proof}

 In Proposition ~\ref{coro}, we have shown that $edge S_p$ is the
same with the set of boundary points of $S_p$ in $\Sigma$. Since
$S_p = \overline{S_p^{\circ}}$ the above proposition implies that
$S_p^{\circ} \cup edge S_p = S_p$ and $S_p - edge S_p = S_p^{\circ}$
in $\Sigma$. Thus, we have the following proposition.

\begin{proposition}
If $x \in S_p - edge S_p$, then there exists a future directed
timelike curve from $x$ to $p$.
\end{proposition}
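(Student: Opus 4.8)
The plan is to obtain the conclusion from the set equality $S_p - edge S_p = S_p^{\circ}$ recorded in the paragraph immediately preceding the statement, since $S_p^{\circ}=I^-(p)\cap\Sigma$ and membership in $I^-(p)$ is exactly the existence of a future-directed timelike curve to $p$. Thus the real work is to justify that equality. By Proposition~\ref{coro}, $edge S_p$ is the topological boundary of $S_p$ in $\Sigma$, so $S_p - edge S_p$ is the interior $\mathrm{int}_{\Sigma}S_p$ of $S_p$ relative to $\Sigma$; since $S_p^{\circ}$ is open in $\Sigma$ and contained in $S_p$, the inclusion $S_p^{\circ}\subseteq \mathrm{int}_{\Sigma}S_p$ is immediate, and the point is the reverse inclusion, i.e.\ that a point of $\Sigma$ which lies in $S_p$ but not in $S_p^{\circ}$ must be a boundary point of $S_p$ in $\Sigma$.

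To prove this reverse inclusion I would argue by contradiction. Suppose $x\in S_p$, $x\notin S_p^{\circ}$, and $x$ has an open $\Sigma$-neighborhood $V$ with $V\subseteq S_p\subseteq J^-(p)$. Then $x\le p$ but $x\not\ll p$, so $x\rightarrow p$, and since $M$ is globally hyperbolic we have $J^-(p)=\overline{I^-(p)}$ and $I^-(p)=\mathrm{int}\,J^-(p)$, so $x\in\partial I^-(p)=\partial J^-(p)$. Now I would use that $\partial I^-(p)$ is a null-generated achronal boundary: the null geodesic segment of $\partial I^-(p)$ through $x$ can be continued a little into the past of $x$ within $\partial I^-(p)$, and because its tangent at $x$ is null while $\Sigma$ is spacelike, this direction is transverse to $T_x\Sigma$ and the continuation enters $I^-(\Sigma)$; hence it supplies a point $x'\in\partial I^-(p)$, arbitrarily close to $x$, with $x'\in I^-(\Sigma)$. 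Since $x'\in\partial I^-(p)$ we have $x'\notin I^-(p)$, i.e.\ $x'\not\ll p$. Flowing $x'$ forward to its unique intersection $z'$ with $\Sigma$ along a timelike curve of a local foliation of a neighborhood of $x$ gives $x'\ll z'$; if $z'$ lay in $J^-(p)$ then $x'\ll z'\le p$ would force $x'\ll p$ by transitivity of the chronological and causal relations, a contradiction, so $z'\in\Sigma - S_p$. Letting $x'\to x$ produces points of $\Sigma - S_p$ converging to $x$, contradicting that $V\subseteq S_p$ is an open neighborhood of $x$ in $\Sigma$. Therefore $x\in S_p^{\circ}=I^-(p)\cap\Sigma$, i.e.\ $x\ll p$, and a future-directed timelike curve from $x$ to $p$ exists by the definition of $\ll$.

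The main obstacle is the middle step: extracting, from the nullness of the generators of $\partial I^-(p)$ together with the spacelikeness of the Cauchy surface, a point of $\partial I^-(p)$ strictly to the past of $x$ and arbitrarily close to it. This is precisely where the geometric hypothesis is used, and the caustic behaviour of $\partial I^-(p)$ must be handled either through the standard structure theory of achronal boundaries (every point of $\partial I^-(p)\setminus\{p\}$ lies on a null geodesic of $\partial I^-(p)$ that extends into the past within $\partial I^-(p)$) or, avoiding caustics, through the continuity of the Lorentzian time-separation function on a globally hyperbolic space-time. Everything before and after that step is purely formal manipulation of closures, interiors, and the transitivity rules for $\ll$ and $\le$; indeed, once the identity $S_p - edge S_p = S_p^{\circ}$ is granted as stated before the proposition, the proof collapses to the single implication $x\in S_p - edge S_p = S_p^{\circ}=I^-(p)\cap\Sigma\Rightarrow x\ll p$.
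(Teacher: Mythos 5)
Your final sentence is, in fact, the paper's entire proof: the paper simply quotes the identity $S_p - edge S_p = S_p^{\circ}$ from the paragraph preceding the proposition and observes that $S_p^{\circ}=I^-(p)\cap\Sigma$, so membership there is by definition the existence of a future-directed timelike curve to $p$. As a match for the paper's argument your reduction is therefore exactly right. Where you go beyond the paper is in noticing that the identity itself is not a formal consequence of the two facts cited there: $S_p=\overline{S_p^{\circ}}$ together with $edge S_p=\partial_{\Sigma}S_p$ (Proposition~\ref{coro}) only yields $S_p^{\circ}\subseteq S_p-edge S_p=\mathrm{int}_{\Sigma}S_p$, and the reverse inclusion is equivalent to $S_p^{\circ}$ being regular open in $\Sigma$; an open set $A$ with $\mathrm{int}\,\overline{A}\neq A$, such as $(0,1)\cup(1,2)$ in $\R$, shows the purely topological deduction fails in general. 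This is a genuine observation: the paper supplies no argument for the inclusion $\mathrm{int}_{\Sigma}S_p\subseteq S_p^{\circ}$, i.e.\ for the statement that a point $x\in\Sigma$ with $x\rightarrow p$ must lie in $edge S_p$, which is precisely the converse of Lemma~\ref{edge-geodesic}.

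Your proposed repair of that inclusion has the gap you yourself flag, and it is real: the null generator of $\partial I^-(p)$ through $x$ is guaranteed to lie on $\partial I^-(p)$ only on the segment from $x$ to $p$; its past extension beyond $x$ may immediately enter $I^-(p)$ when $x$ is the past endpoint of that generator on the boundary, in which case your point $x'$ satisfies $x'\ll p$, the forward flow can land inside $S_p$, and no point of $\Sigma-S_p$ is produced. The cleanest way to close the gap within the paper's own toolkit is Theorem~\ref{thm2}: for any $x\in\partial J^-(p)$ there exists $z'\in edge S_p$ such that $z'\rightarrow x\rightarrow p$ is a single null geodesic; if $x\in\Sigma$, then $z'\leq x$ with both points on the acausal surface $\Sigma$ forces $z'=x$, hence $x\in edge S_p$. (One can check that the proof of Theorem~\ref{thm2} does not use the present proposition, so invoking it here inverts the paper's ordering but creates no circularity.) In summary: the proposition is true, your reduction coincides with the paper's one-line proof, and the defect you correctly isolate lies in the paper's preceding paragraph rather than in anything you added; only your sketched patch for that defect still needs the caustic case handled.
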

\begin{proof}
By the above argument, we have $S_p - edge S_p = S_p^{\circ}$ and
the result follows.
\end{proof}

\begin{lemma} \label{timesp}
Let $A$ and $B$ be proper subsets of a topological space. If $A
\subset B$ and $\partial{A} \cap \partial{B} = \emptyset$, then $A
\subset int B$.
\end{lemma}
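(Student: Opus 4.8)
The plan is to prove the inclusion pointwise, using only two standard topological facts: for any subset $C$ of the ambient space one has the decomposition $\overline{C} = int C \cup \partial C$ with $int C \cap \partial C = \emptyset$, and the operations $C \mapsto int C$ and $C \mapsto \overline{C}$ are both monotone with respect to inclusion.

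First I would fix an arbitrary point $x \in A$ and observe that, since $x \in A \subseteq B \subseteq \overline{B}$, the decomposition of $\overline{B}$ forces exactly one of the alternatives $x \in int B$ or $x \in \partial B$. If $x \in int B$ there is nothing more to do, so the entire content of the proof is to rule out the case $x \in \partial B$.

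So suppose $x \in \partial B$. Then the hypothesis $\partial A \cap \partial B = \emptyset$ gives $x \notin \partial A$. But $x \in A \subseteq \overline{A} = int A \cup \partial A$, so $x \notin \partial A$ forces $x \in int A$; and by monotonicity of the interior, $int A \subseteq int B$, hence $x \in int B$. This contradicts $x \in \partial B$, since $int B$ and $\partial B$ are disjoint. Therefore $x \in int B$, and as $x \in A$ was arbitrary, $A \subseteq int B$.

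The argument is entirely formal and I do not anticipate a genuine obstacle; the only point requiring care is to use the partition $\overline{C} = int C \cup \partial C$ correctly, in particular the disjointness $int C \cap \partial C = \emptyset$, which is exactly what turns the last step into a contradiction, and to keep in mind that the interior is monotone \emph{in the direction of} the inclusion $A \subseteq B$ whereas the hypothesis on the boundaries is a disjointness condition rather than an inclusion. I would also remark in passing that the assumption that $A$ and $B$ are \emph{proper} subsets is not actually needed anywhere in this argument.
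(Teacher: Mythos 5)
Your proof is correct and amounts to the paper's own argument in slightly different clothing: the paper takes $x \in A - int B$, extracts from every neighborhood of $x$ a witness $z \notin B$ (hence $z \notin A$) to place $x$ directly in $\partial A \cap \partial B$ and contradict the hypothesis, whereas you contradict the disjointness of $int B$ and $\partial B$ after routing through $x \in int A$ via the decomposition $\overline{A} = int A \cup \partial A$ and the monotonicity $int A \subset int B$. Both are short, valid, pointwise arguments, and your closing observation that the properness of $A$ and $B$ is never used is accurate --- the paper's proof does not use it either.
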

\begin{proof}
Assume that $\exists x \in A - intB$. Then for any neighborhood $U$
of $x$ contains a point $z \notin B$ by the definition of interior.
Since $x \in A \subset B$, we have $x \in \partial{B}$. Since $z
\notin B$ and $A \subset B$, we have $z \notin A$ and thus $x \in A$
implies that $x \in
\partial{A}$. This contradicts to $\partial{A} \cap \partial{B} =
\emptyset$.
\end{proof}

\begin{proposition} \label{time-sp}
If $S_p \subset S_q$ and $edge S_p \cap edge S_q = \emptyset$, then
$S_p \subset S_q^{\circ}$. In other words, every point in $S_p$ can
be connected by a timelike curve to $q$.
\end{proposition}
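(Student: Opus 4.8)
\emph{Proof proposal.} The plan is to reduce the statement to the purely topological Lemma~\ref{timesp}, with all the geometric content already packaged in Proposition~\ref{coro} and in the identity $S_q^\circ = S_q - edge\, S_q$. First, recall that by Proposition~\ref{coro} the edge of any subset of $\Sigma$ coincides with its topological boundary taken in $\Sigma$; hence the hypothesis $edge\, S_p \cap edge\, S_q = \emptyset$ is literally the assertion $\partial S_p \cap \partial S_q = \emptyset$, boundaries computed in $\Sigma$. If $S_p = \emptyset$ the conclusion is vacuous, so we may assume $S_p \neq \emptyset$, which forces $S_q \neq \emptyset$ as well; moreover $S_p$ and $S_q$ are compact while $\Sigma$ is non-compact, so neither can equal $\Sigma$. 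Thus $S_p$ and $S_q$ are proper subsets of the topological space $\Sigma$ satisfying $S_p \subset S_q$ and $\partial S_p \cap \partial S_q = \emptyset$.

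Next I would apply Lemma~\ref{timesp} with $A = S_p$ and $B = S_q$ to obtain $S_p \subset int\, S_q$, the interior being taken in $\Sigma$. It remains to identify $int\, S_q$ with $S_q^\circ$. Since $M$ is globally hyperbolic it is causally simple, so $J^-(q)$ is closed in $M$ and therefore $S_q = J^-(q) \cap \Sigma$ is closed in $\Sigma$ (equivalently, use $S_q = \overline{S_q^\circ}$); for a closed set one has $int\, S_q = S_q - \partial S_q = S_q - edge\, S_q$, and we have already observed that $S_q - edge\, S_q = S_q^\circ$. Hence $S_p \subset S_q^\circ = I^-(q) \cap \Sigma$, which says precisely that every point $x \in S_p$ satisfies $x \ll q$, i.e. can be joined to $q$ by a future-directed timelike curve.

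I do not anticipate a genuine obstacle here: the argument is essentially bookkeeping, since the substance was front-loaded into Proposition~\ref{coro} (edge equals boundary in $\Sigma$), into the identity $S_q^\circ = S_q - edge\, S_q$ derived from $S_q = \overline{S_q^\circ}$, and into the elementary Lemma~\ref{timesp}. The only points deserving a moment's attention are the verification of the properness hypotheses of Lemma~\ref{timesp} — immediate from compactness of causally admissible slices against non-compactness of $\Sigma$ — and the separate (trivial) treatment of the degenerate case $S_p = \emptyset$.
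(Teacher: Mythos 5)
Your proposal is correct and follows essentially the same route as the paper: convert the edge condition to a boundary condition via Proposition~\ref{coro}, apply Lemma~\ref{timesp}, and identify $int\,S_q$ with $S_q^{\circ}$. The only difference is that you explicitly check the properness and non-emptiness hypotheses of Lemma~\ref{timesp} and spell out why $int\,S_q = S_q - edge\,S_q = S_q^{\circ}$, details the paper leaves implicit.
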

\begin{proof}
Since $edge S_p \cap edge S_q = \emptyset$, Proposition ~\ref{coro}
gives $\partial{S_p} \cap \partial{S_q} = \emptyset$ in the space
$\Sigma$. By Lemma ~\ref{timesp}, we have $S_p \subset intS_q =
S_q^{\circ}$.
\end{proof}

\section{Properties of causally admissible slices} \label{section: 4}

In the previous section, we have defined and investigated some
properties of causally admissible sets, which are the building
blocks for encoding the geometric information of space-time into its
Cauchy surface. In this section, we show how the causally admissible
sets can be used to encode the information. By the same reason, we
investigate and state the properties of future admissible slices
since the same properties hold for past admissible slices.
\begin{proposition} \label{thm1}
If $p \leq q$, then $S_p \subset S_q$.
\end{proposition}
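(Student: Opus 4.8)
The statement is an immediate consequence of the transitivity of the causal relation together with the definitions, so the plan is to unwind the definitions and concatenate causal curves. First I would take an arbitrary point $x \in S_p$. By definition of $S_p = J^-(p) \cap \Sigma$ this means $x \in \Sigma$ and $x \leq p$, i.e.\ there is a future-directed causal curve from $x$ to $p$ (or $x = p$).

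Next I would invoke the hypothesis $p \leq q$, which supplies a future-directed causal curve from $p$ to $q$. Concatenating the causal curve from $x$ to $p$ with the one from $p$ to $q$ yields a future-directed causal curve from $x$ to $q$; thus $x \leq q$, i.e.\ $x \in J^-(q)$. (This concatenation step is exactly the transitivity of $\leq$, which is used implicitly throughout the preliminaries; the degenerate cases $x=p$ or $p=q$ are trivial.) Since we also have $x \in \Sigma$, we conclude $x \in J^-(q) \cap \Sigma = S_q$.

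As $x \in S_p$ was arbitrary, this gives $S_p \subset S_q$, completing the argument. There is no real obstacle here: the only point worth a word is that $S_p$ and $S_q$ are defined with respect to the \emph{same} Cauchy surface $\Sigma$, so the factor $\cap\,\Sigma$ is preserved verbatim and the entire content is the transitivity of the causal relation.
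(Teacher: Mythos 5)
Your proof is correct and follows exactly the same route as the paper's: take $x \in S_p$, use transitivity of $\leq$ to get $x \leq q$, and conclude $x \in S_q$. The paper's version is just a more compressed statement of the same argument.
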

\begin{proof}
Let $x \in S_p$ then, by definition, we have $x \leq p$. Since $p
\leq q$, by the transitivity of the relation $\leq$, we have $x \leq
q$. This implies that $x \in S_q$ and the proof is completed.
\end{proof}

For chronological relation $\ll$, we have one more condition that
$edge S_p \cap edge S_q = \emptyset$, as the following proposition
shows.

\begin{proposition} \label{thm3}
 If $p \ll q$, then $S_p \subset S_q$ and $edge S_p \cap edge S_q
 = \emptyset$
\end{proposition}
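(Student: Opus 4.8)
The plan is to prove the two conclusions separately. The inclusion $S_p \subset S_q$ is immediate: if $p \ll q$ then in particular $p \leq q$, so Proposition~\ref{thm1} already gives $S_p \subset S_q$. The real content is the statement that $edge\, S_p \cap edge\, S_q = \emptyset$, and this is where all the work lies.

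To see that the edges are disjoint, I would argue by contradiction: suppose there is a point $x \in edge\, S_p \cap edge\, S_q$. By Lemma~\ref{edge-geodesic} applied to $x \in edge\, S_p$, we have $x \rightarrow p$, i.e.\ $x \leq p$ but not $x \ll p$; in particular $x \in \partial J^-(p)$. Now use the hypothesis $p \ll q$: by part (ii) of the opening Proposition (if $x \ll y$ and $y \leq z$ then $x \ll z$), or rather its companion, from $x \leq p$ and $p \ll q$ we get $x \ll q$ by part (i) of that Proposition. But $x \in edge\, S_q$ forces, again by Lemma~\ref{edge-geodesic}, that $x \rightarrow q$, i.e.\ \emph{not} $x \ll q$. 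This is the contradiction, so $edge\, S_p \cap edge\, S_q = \emptyset$.

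Let me restate the key step cleanly, since it is short enough to be the whole argument. Take any $x \in edge\, S_p$. Since $S_p$ is closed we have $x \in S_p$, hence $x \leq p$. Combined with $p \ll q$, part (i) of the basic Proposition on causal relations ("if $x \leq y$ and $y \ll z$ then $x \ll z$") yields $x \ll q$. By Lemma~\ref{edge-geodesic}, if $x$ were in $edge\, S_q$ we would have $x \rightarrow q$, which by definition of horismos means $x \leq q$ but not $x \ll q$ — contradicting $x \ll q$. Therefore $x \notin edge\, S_q$. Since $x \in edge\, S_p$ was arbitrary, $edge\, S_p \cap edge\, S_q = \emptyset$.

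The main obstacle here is essentially nil once one has Lemma~\ref{edge-geodesic} in hand: that lemma is precisely what converts "$x$ is an edge point of $S_p$" into the rigid causal statement "$x \rightarrow p$", and from there the chronology-beats-causality composition rule (part (i) of the basic Proposition) does the rest. If one wanted to avoid invoking Lemma~\ref{edge-geodesic} for the $S_q$ side, one could instead note directly that $x \ll q$ means there is an open neighborhood $U \ni x$ with $U \subset I^-(q)$, hence $U \cap \Sigma \subset S_q^\circ \subset S_q$ is an open neighborhood of $x$ in $\Sigma$ lying entirely inside $S_q$, so $x$ is an interior point of $S_q$ in $\Sigma$ and therefore not a boundary point; by Proposition~\ref{coro} this says $x \notin edge\, S_q$. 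Either route is routine; I would present the second, as it is self-contained.
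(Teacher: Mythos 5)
Your proposal is correct and follows essentially the same route as the paper: the inclusion $S_p \subset S_q$ from transitivity of $\leq$, and the edge disjointness by contradiction via Lemma~\ref{edge-geodesic} (which gives $x \rightarrow p$ and $x \rightarrow q$ for a common edge point) combined with the composition rule $x \leq p,\ p \ll q \Rightarrow x \ll q$. Your alternative interior-point argument for the $S_q$ side is a fine minor variant but does not change the substance.
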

\begin{proof}
 It is clear that $S_p \subset S_q$. If $S_p = S_q$, then $S_p =
 I^-(q) \cap \Sigma$ since $S_p = J^-(p) \cap \Sigma \subset I^-(q)
 \cap \Sigma \subset J^-(q) \cap \Sigma = S_q$. This implies that
 $S_p$ is open in $\Sigma$. Since $S_p$ is closed in $\Sigma$ and $\Sigma$ is
 connected, we must have $S_p = \Sigma$ which is a contradiction.

If there exists $x$ in $edge S_p \cap edge S_q$, then by Proposition
~\ref{edge-geodesic}, we have $x \rightarrow p$ and $x \rightarrow
q$. However, since $p \ll q$, we have $x \ll q$ which contradicts to
$x \rightarrow q$.
\end{proof}

By the above Proposition and Proposition \ref{time-sp}, we can see
that the set $S_p$ strictly increases as $p$ goes to its
chronological future.

One of the goals of this section is to show that the converses of
the above two propositions are also satisfied. To this end, we need
another tool, the concept of domain of dependence.

\begin{definition}
The future domain of dependence of $S$, $D^+(S)$, is the set of all
points $p$ such that every past-inextendible timelike curve from $p$
intersects $S$. The past domain of dependence $D^-(S)$, is defined
in a similar way by interchanging the roles of past and future. The
domain of dependence of $S$ is the set $D(S)=D^+(S) \cup D^-(S)$.
The future Cauchy horizon of $S$, $H^+(S)$, is defined as
$H^+(S)=\{x \,\, | \,\, x \in D^+(S) \,\, \mbox{and} \,\, I^+(x)
\cap D^+(S) = \emptyset \}$. The past Cauchy horizon, $H^-(S)$, is
defined in a similar way by interchanging the roles of past and
future. The Cauchy horizon of $S$ is the set $H(S)=H^+(S) \cup
H^-(S)$.
\end{definition}

In terms of domain of dependence, it is easy to see that $M$ is
globally hyperbolic if and only if there exists an achronal
hypersurface $\Sigma$ such that $D(\Sigma)=M$. The followings are
well-known properties of domain of dependence and the proof can be
found in Ref.\cite{Penrose1972}, \cite{Oneill}.

\begin{proposition} \label{ds1}
Let $S$ be achronal and closed. Then the followings hold.\\
(i) $D^+(S)$ is closed.\\
(ii) $H^+(S)$ is achronal and closed.\\
(iii) $\partial D^+(S) = H^+(S) \cup S$.\\
(iv) $\partial D(S) = H(S)$.\\
(v) $I^+(edgeS) \cap D^+(S) = \emptyset$.\\
\end{proposition}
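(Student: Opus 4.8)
The plan is to prove the five statements in the order given, relying throughout on two elementary facts about the achronal closed set $S$: (a) $D^{+}(S)\subseteq J^{+}(S)$, since a past-inextendible timelike curve from $p\in D^{+}(S)$ meets $S$ at a point of $J^{-}(p)$ — the same reasoning in fact gives $D^{+}(S)-S\subseteq I^{+}(S)$; and (b) $I^{-}(S)\cap J^{+}(S)=\emptyset$, because $s\leq p$ and $p\ll s'$ with $s,s'\in S$ would force $s\ll s'$ by the push-up property, contradicting achronality. The time-duals of (a) and (b) are used as well, together with two consequences of (b): if $a\in I^{-}(S)$ then $a\notin J^{+}(S)$, and then every past-inextendible timelike curve issuing from $a$ misses $S$ (one meeting $S$ would put $a$ in $I^{+}(S)$). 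I also use that $I^{+}(x)\cap D^{+}(S)=\emptyset$ is the same as $x\notin I^{-}(D^{+}(S))$, so that $H^{+}(S)=D^{+}(S)-I^{-}(D^{+}(S))$.

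For (i) I would show $M-D^{+}(S)$ is open. Given $p\notin D^{+}(S)$, fix a past-inextendible timelike curve $\gamma$ from $p$ with $\gamma\cap S=\emptyset$; then $p\notin S$, so $S$ being closed yields a convex normal neighbourhood $C$ of $p$ with $\overline{C}\cap S=\emptyset$ and a point $r$ on $\gamma$ such that the sub-arc of $\gamma$ from $p$ to $r$ lies in $C$. The points of $C$ reachable from $r$ by a future-directed timelike curve inside $C$ form an open neighbourhood $C'$ of $p$, and each $q\in C'$ is joined to $r$ by a timelike curve lying in $C\subseteq M-S$; concatenating it with the portion of $\gamma$ beyond $r$ gives a past-inextendible timelike curve from $q$ missing $S$, so $C'\cap D^{+}(S)=\emptyset$. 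For (ii), achronality of $H^{+}(S)$ is immediate ($x,y\in H^{+}(S)$ with $x\ll y$ would give $y\in I^{+}(x)\cap D^{+}(S)$), and closedness follows from $H^{+}(S)=D^{+}(S)-I^{-}(D^{+}(S))$ together with (i), since $I^{-}(D^{+}(S))$ is a union of open sets.

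For (iii), $H^{+}(S)\cup S\subseteq\partial D^{+}(S)$ because points of $I^{+}(x)$ accumulate at $x\in H^{+}(S)$ off $D^{+}(S)$, while for $p\in S$ points of $I^{-}(p)$ accumulate at $p$ and $I^{-}(p)\cap D^{+}(S)=\emptyset$ by (a) and (b). For the reverse inclusion, let $x\in\partial D^{+}(S)$ with $x\notin S$; then $x\in D^{+}(S)-S\subseteq I^{+}(S)$, and a hypothetical $y\in I^{+}(x)\cap D^{+}(S)$ yields, after choosing $s\in S$ with $s\ll x\ll y$, that $I^{+}(s)\cap I^{-}(y)\subseteq D^{+}(S)$: for $z$ in this open set (and $z\notin S$, the other case being trivial), prepending the segment from $y$ down to $z$ to any past-inextendible timelike curve from $z$ gives a past-inextendible timelike curve from $y$, hence one meeting $S$, and the hit point cannot lie on the segment since $s\ll z$ would then relate two points of $S$ chronologically; this makes $x$ interior, a contradiction, so $x\in H^{+}(S)$. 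For (iv), $D(S)$ is closed by (i) and its dual; $H(S)\subseteq\partial D(S)$ since for $x\in H^{+}(S)$ every $y\in I^{+}(x)$ lies outside $D^{+}(S)$ by definition and outside $D^{-}(S)$ because $y\in I^{+}(S)$ and $I^{+}(S)\cap D^{-}(S)=\emptyset$; and $\partial D(S)\subseteq H(S)$ because a boundary point $x$ lies in $\partial D^{+}(S)=H^{+}(S)\cup S$ or its dual by (iii), and if it falls in $S$ but in neither Cauchy horizon one picks $y^{-}\in I^{-}(x)\cap D^{-}(S)$ and $y^{+}\in I^{+}(x)\cap D^{+}(S)$ and runs the same prepend/append device on both sides to get $I^{+}(y^{-})\cap I^{-}(y^{+})\subseteq D(S)$, contradicting that $x$ is a boundary point.

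For (v), note it is equivalent to $edge S\subseteq H^{+}(S)$, since $edge S\subseteq\overline{S}=S\subseteq D^{+}(S)$ makes the statement say exactly that $I^{+}(x)\cap D^{+}(S)=\emptyset$ for $x\in edge S$. Fix $x\in edge S$ and $q\gg x$, and take a convex neighbourhood $C\subseteq I^{-}(q)$ of $x$. The definition of edge furnishes $a,b\in C$ (neither on $S$, else both of the two supplied curves would meet $S$), a timelike curve in $C$ from $a$ to $b$ missing $S$, and a companion timelike curve in $C$ from $a$ to $b$ meeting $S$ at some $w\in S$; hence $a\ll w\ll b\ll q$, so $b\in I^{+}(S)$ and $a\in I^{-}(S)$. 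Any past-directed timelike curve from $q$ to $b$ then misses $S$, for a hit point $s^{*}$ would satisfy $b\ll s^{*}$ and, with $w\ll b$, give $w\ll s^{*}$, relating two points of $S$; following it by the $S$-avoiding arc from $b$ to $a$ and then by any past-inextendible timelike curve from $a$ — which misses $S$ since $a\notin J^{+}(S)$ by (b) — produces a past-inextendible timelike curve from $q$ missing $S$, so $q\notin D^{+}(S)$, a contradiction. I expect the reverse inclusions in (iii) and (iv), realized by bouncing a past-inextendible timelike curve off a point of $D^{+}(S)$ and invoking achronality, to be the delicate part — (iv) most of all, for the case bookkeeping — whereas (v) becomes short once facts (a) and (b) are in hand.
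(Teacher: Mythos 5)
Your proposal is correct, but note that the paper does not actually prove this proposition: it lists the five statements as well-known facts and refers the reader to Penrose's \emph{Techniques of Differential Topology in Relativity} and to Chapter 14 of O'Neill. What you have written is a self-contained reconstruction of those textbook arguments, organized around your facts (a) $D^+(S)\subseteq J^+(S)$ with $D^+(S)-S\subseteq I^+(S)$ and (b) $I^-(S)\cap J^+(S)=\emptyset$, and it is sound: the openness-of-the-complement argument for (i) via a convex neighbourhood missing the closed set $S$ and a deviation point $r$ on the $S$-avoiding curve, the identity $H^+(S)=D^+(S)-I^-(D^+(S))$ for (ii), the Alexandrov block $I^+(s)\cap I^-(y)\subseteq D^+(S)$ for the reverse inclusion in (iii), and the detour through $a\in I^-(S)$, $b\in I^+(S)$ furnished by the definition of edge for (v) are exactly the standard devices in the cited sources. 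The one place your sketch compresses a genuine step is the inclusion $I^+(y^-)\cap I^-(y^+)\subseteq D(S)$ in (iv): unlike in (iii), a hit of $S$ \emph{can} occur on the prepended segment from $y^+$ down to $z$ (there is no $s\ll z$ available a priori), so ``the same device'' does not literally transfer. What saves the argument is that such a hit forces $z\in I^-(S)$, the dual appended segment from $y^-$ forces $z\in I^+(S)$, and achronality of $S$ forbids both at once; hence every $z$ in the block lies in $D^+(S)\cup D^-(S)$. With that one sentence made explicit, the proof is complete, and since the paper itself supplies no proof, your version is strictly more informative than the original.
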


We now investigate relations between $J^-(p)$ and $D^+(S_p)$.

\begin{proposition} \label{hsp}
If we let $S_p=J^-(p) \cap \Sigma$, then $p \in H^+(S_p)$.
\end{proposition}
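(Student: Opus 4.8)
The plan is to check directly the two conditions defining membership in $H^+(S_p)$: that $p\in D^+(S_p)$, and that $I^+(p)\cap D^+(S_p)=\emptyset$. Both rest on the defining property of the Cauchy surface $\Sigma$, namely that every inextendible timelike curve meets $\Sigma$ in exactly one point. Throughout I use that $p\in J^+(\Sigma)$ (equivalently $S_p\neq\emptyset$), which is the standing hypothesis of this section; fix once and for all a point $y\in\Sigma$ with $y\leq p$.

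First I would show $p\in D^+(S_p)$. Let $\gamma$ be an arbitrary past-inextendible timelike curve with future endpoint $p$, and extend it beyond $p$ to a future-inextendible, hence inextendible, timelike curve $\tilde\gamma$; it meets $\Sigma$ at a unique point $x$. The claim is that $x$ lies on the original arc $\gamma$. Indeed, if $x$ lay strictly to the future of $p$ on $\tilde\gamma$, then $p\ll x$, whence $y\leq p\ll x$ gives $y\ll x$ by transitivity of the causal relations; but $y,x\in\Sigma$, contradicting achronality of $\Sigma$ (the degenerate case $y=x$ being excluded by chronology). Therefore $x$ lies on $\gamma$, so either $x\ll p$ or $x=p\in\Sigma$, and in both cases $x\in J^-(p)\cap\Sigma=S_p$. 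Since $\gamma$ was arbitrary, $p\in D^+(S_p)$.

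Next I would show $I^+(p)\cap D^+(S_p)=\emptyset$ by exhibiting, for each $q$ with $p\ll q$, a single past-inextendible timelike curve from $q$ that avoids $S_p$. Since $p\ll q$, Proposition~\ref{thm3} gives $S_p\subset S_q$ with $edge S_p\cap edge S_q=\emptyset$, and then Proposition~\ref{time-sp} yields $S_p\subset S_q^{\circ}=I^-(q)\cap\Sigma$. This inclusion is strict: $S_p$ is closed and $S_q^{\circ}$ is open in the connected manifold $\Sigma$, so if they coincided then $S_p$ would be both open and closed, hence all of $\Sigma$ (it is nonempty), contradicting non-compactness of $\Sigma$. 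Choose $x\in S_q^{\circ}\setminus S_p$; then $x\in\Sigma$, $x\ll q$, and $x\notin S_p$. Concatenating a past-inextendible timelike curve ending at $x$ with a timelike curve from $x$ to $q$ produces a past-inextendible timelike curve $\delta$ from $q$ passing through $x$. Extending $\delta$ beyond $q$ to a future-inextendible timelike curve makes it inextendible, so it meets $\Sigma$ only at $x$; hence $\delta$ meets $\Sigma$, and a fortiori $S_p$, only at the point $x\notin S_p$. Thus $q\notin D^+(S_p)$, and $I^+(p)\cap D^+(S_p)=\emptyset$.

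I expect the only delicate point to be the bookkeeping about which portion of an inextendible timelike curve carries the unique $\Sigma$-crossing: in the first step one must rule out the crossing occurring strictly to the future of $p$, and in the second step one must confirm that the constructed curve $\delta$ crosses $\Sigma$ precisely at the chosen $x$. Both are settled by playing the ``exactly once'' property of $\Sigma$ against achronality of $\Sigma$ together with transitivity of the causal relations. Once both conditions are in hand, $p\in D^+(S_p)$ and $I^+(p)\cap D^+(S_p)=\emptyset$ say exactly that $p\in H^+(S_p)$.
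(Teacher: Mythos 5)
Your proof is correct and follows essentially the same route as the paper: the key step in both is that $p\ll q$ forces $S_p\subset S_q^{\circ}$ via Propositions~\ref{thm3} and~\ref{time-sp}, yielding a past-inextendible timelike curve from $q$ that misses $S_p$. You additionally supply the details the paper omits, namely the verification that $p\in D^+(S_p)$ and the clopen/connectedness argument showing the inclusion $S_p\subset S_q^{\circ}$ is strict; both are sound.
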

\begin{proof}
It is easy to see that $p \in D^+(S)$ and we omit the proof. Assume
that $p \notin H^+(S_p)$. Then there exists $x$ in $D^+(S_p)$ such
that $p \ll x$. Since $p \ll x$, by Proposition \ref{thm3} and
Proposition \ref{time-sp}, we have $S_p \subset S_x^{\circ} \subset
S_x$. In other words, there exists a past-directed timelike curve
from $x$ that does not meet $S_p$, which is a contradiction to $x
\in D^+(S_p)$.
\end{proof}

\begin{proposition} \label{ds2}
If $M$ is globally hyperbolic with Cauchy surface $\Sigma$, then
$D^+(S)\cap I^+(S)=I^+(S)-I^+(edgeS)$ for any $S \subset \Sigma$.
\end{proposition}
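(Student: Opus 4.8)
The plan is to prove the two inclusions separately. Throughout I would lean on Proposition~\ref{coro}, which says that $edge S$ is exactly the boundary of $S$ taken inside $\Sigma$, so that $\Sigma$ is the disjoint union of its $\Sigma$-interior $int S$, its $\Sigma$-boundary $edge S$, and its $\Sigma$-exterior $\Sigma-\overline S$; I also take $\Sigma$ to be smooth and spacelike (Section~\ref{section: 2}) and write $S^\circ_x=I^-(x)\cap\Sigma$ as in Section~\ref{section: 3}. For the inclusion $D^+(S)\cap I^+(S)\subseteq I^+(S)-I^+(edge S)$ it is enough to show $D^+(S)\cap I^+(edge S)=\emptyset$: if $e\in edge S$ and $e\ll p$, then $I^-(p)\cap\Sigma$ is a $\Sigma$-neighbourhood of the boundary point $e$, so it contains some $e'\in\Sigma-S$ with $e'\ll p$; prolonging a timelike curve from $e'$ to $p$ to a past-inextendible one $\nu$ with future endpoint $p$, achronality of $\Sigma$ forces $\nu\cap\Sigma=\{e'\}$, hence $\nu\cap S=\emptyset$ and $p\notin D^+(S)$.

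The substance is the reverse inclusion, and it rests on a lemma I would isolate and prove first: for every $x\in I^+(\Sigma)$ the set $S^\circ_x=I^-(x)\cap\Sigma$ is path-connected. To prove it, fix a future-directed timelike vector field $T$ on $M$ and let $\rho:M\to\Sigma$ send $w$ to the unique point where the $T$-integral curve through $w$ meets $\Sigma$; this is well defined and continuous, since each integral curve of $T$ is an inextendible timelike curve and therefore meets the Cauchy surface $\Sigma$ exactly once (equivalently, one may invoke the splitting $M\cong\R\times\Sigma$ and let $\rho$ be the projection). The key point is that for $w\in I^+(\Sigma)$ the curve meets $\Sigma$ strictly in the past of $w$ — it cannot meet it to the future, else $w\in I^+(\Sigma)\cap I^-(\Sigma)$, contradicting achronality of $\Sigma$ — so $\rho(w)\ll w$, whereas $\rho(w)=w$ for $w\in\Sigma$. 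Now, given $x\in I^+(\Sigma)$ and $z\in S^\circ_x$, choose a future-directed timelike curve $c$ from $z$ to $x$; since $z\in\Sigma$ and $\Sigma$ is achronal, $c(t)\in I^+(\Sigma)$ for $t>0$, so $\rho(c(t))\ll c(t)\ll x$ for $0<t<1$, while $\rho(c(0))=z\ll x$ and $\rho(c(1))=\rho(x)\ll x$. Hence $t\mapsto\rho(c(t))$ is a path in $S^\circ_x$ joining $z$ to the fixed point $\rho(x)\in S^\circ_x$, which proves the lemma.

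Granting the lemma, let $x\in I^+(S)-I^+(edge S)$. Then $x\in I^+(\Sigma)$, and $x\notin I^+(edge S)$ means $I^-(x)\cap edge S=\emptyset$, so $S^\circ_x\subseteq int S\sqcup(\Sigma-\overline S)$, a disjoint union of two sets open in $\Sigma$. Choosing $s\in S$ with $s\ll x$ gives $s\in S^\circ_x$ with $s\notin edge S$, hence $s\in int S$; by the lemma $S^\circ_x$ is connected, so $S^\circ_x\subseteq int S\subseteq S$. Finally, any past-inextendible timelike curve $\lambda$ from $x$, after prolongation to the future of $x$, is inextendible and hence meets $\Sigma$ — necessarily at a point $y$ strictly to the past of $x$, since the future part lies in $I^+(x)\subseteq I^+(\Sigma)$ and $x\notin\Sigma$ — so $y\in S^\circ_x\subseteq S$. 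Thus every such $\lambda$ meets $S$, so $x\in D^+(S)$, and therefore $x\in D^+(S)\cap I^+(S)$, completing the proof.

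The hard part is the lemma: $I^-(x)$ is an arbitrary open set and there is no formal reason why its slice by $\Sigma$ should be connected, so the argument genuinely uses the global structure of a globally hyperbolic space-time — here packaged in the retraction $\rho$ along a timelike vector field, or equivalently in the product splitting $M\cong\R\times\Sigma$. Everything else is routine bookkeeping with achronality of $\Sigma$, prolongation of inextendible causal curves, and the transitivity law $a\le b\ll c\Rightarrow a\ll c$ (together with its mirror) recorded in Section~\ref{section: 2}.
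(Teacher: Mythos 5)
Your proof is correct, but the substantive half goes by a genuinely different route than the paper's. For the reverse inclusion the paper argues by contraposition: if $x\in I^+(S)$ but $x\notin D^+(S)$, a past-inextendible timelike curve from $x$ missing $S$ must cross $\partial I^+(S)$ at some $y$ (since points of $\Sigma$ cannot lie in $I^+(S)$ by achronality), and the past-directed analogue of Lemma~\ref{endgeodesic} then supplies a null generator from $y$ back to a point of $edge\,S$, so that $x\in I^+(edge\,S)$. You instead argue directly: $x\notin I^+(edge\,S)$ forces $I^-(x)\cap\Sigma$ to avoid $\partial_\Sigma S$, and your connectedness lemma then traps $I^-(x)\cap\Sigma$ inside $int_\Sigma S$, so every past-inextendible timelike curve from $x$ lands in $S$. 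The new ingredient is the path-connectedness of $I^-(x)\cap\Sigma$, which you obtain from the continuous retraction $\rho$ along a timelike vector field --- i.e.\ from the topological splitting $M\cong\R\times\Sigma$, a standard fact but one the paper never invokes; your argument for it (pushing a timelike curve from $z$ to $x$ down to $\Sigma$ and checking $\rho(c(t))\ll x$ throughout) is sound. The trade-off: you avoid the null-geodesic-generator machinery of Proposition~\ref{proposition-edge} and Lemma~\ref{endgeodesic} entirely, at the cost of importing the Geroch-type retraction; the paper's route is shorter in context because Lemma~\ref{endgeodesic} is already established and reused elsewhere, and it yields the slightly finer information that the obstruction to membership in $D^+(S)$ is realized by an actual null generator ending on $edge\,S$. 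Your re-proof of the easy inclusion (where the paper simply cites Proposition~\ref{ds1}(v)) is also correct and is essentially the standard argument for that cited fact.
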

\begin{proof}
To show the inclusion $D^+(S)\cap I^+(S) \subset I^+(S)-I^+(edgeS)$
is easy from (v) in Proposition \ref{ds1}. To show the reverse
inclusion, let us assume that there exists $x$ in $I^+(S)$ such that
$x \notin D^+(S)$. Then, there exists a past-inextendible timelike
curve $\gamma$ from $x$ that does not meet $S$. Since $\gamma$ is
inextendible and $M$ is globally hyperbolic, $\gamma$ must meet
$\Sigma$ at a point in $\Sigma - S$. Thus, $\gamma$ must meet
$\partial I^+(S)$ at $y$, say. By Lemma \ref{endgeodesic}, there
exists a null geodesic from $y$ to a point in $edge S$. Then the
curve $\gamma$ from $x$ to $y$ joined with null geodesic from $y$ to
the point in $edge S$ implies that $x \in I^+(edge S)$ and the proof
is completed.
\end{proof}

\begin{lemma} \label{closed}
If $M$ is globally hyperbolic, then the relation $\leq$ is closed.
\end{lemma}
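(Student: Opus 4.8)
The plan is to prove that the causal relation $\leq$ is closed as a subset of $M \times M$; that is, if $\{(p_i, q_i)\}$ is a sequence with $p_i \leq q_i$ for all $i$ and $p_i \to p$, $q_i \to q$, then $p \leq q$. First I would dispose of the trivial case $p = q$. Assuming $p \neq q$, for each $i$ choose a future-directed causal curve $\gamma_i$ from $p_i$ to $q_i$. The natural tool is a limit curve argument: since $M$ is strongly causal (indeed globally hyperbolic), and the endpoints converge, a subsequence of the $\gamma_i$ converges to a limit causal curve $\gamma$ with past endpoint $p$. The key point to nail down is that this limit curve actually \emph{reaches} $q$ rather than escaping to infinity or stopping short.

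The main obstacle is exactly controlling the limit curve: a priori the limit of the $\gamma_i$ could be a past-directed-endpoint-at-$p$ causal curve that is future-inextendible and never passes through $q$. This is where global hyperbolicity must be used decisively. I would argue as follows. Fix a point $r$ with $q \ll r$ (possible since $I^+(q)$ is open and nonempty); then for large $i$ we have $q_i \in I^-(r)$, and also $p_i \in J^+(\Sigma') $ is controlled — more usefully, pick $s \ll p$, so that for large $i$, $s \ll p_i$, hence $s \ll q_i \ll r$, i.e.\ every $\gamma_i$ (suitably extended slightly at the past end from $s$) lies in $J^+(s) \cap J^-(r)$, which is compact by global hyperbolicity. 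A causal limit curve of curves contained in a fixed compact set, with converging endpoints $p$ and $q$, must itself be a causal curve from $p$ to $q$ — it cannot be inextendible because inextendible causal curves in a strongly causal space-time cannot be imprisoned in a compact set. This yields $p \leq q$.

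Concretely, the steps I would carry out, in order, are: (1) reduce to $p \neq q$; (2) choose $s \ll p$ and $r \gg q$, and observe $s \ll p_i$, $q_i \ll r$ for large $i$, so the relevant curve segments lie in the compact set $K = J^+(s) \cap J^-(r)$; (3) invoke the limit curve lemma (the same machinery cited via Corollary 3.32 of \cite{Beem} and already used in the proof of Lemma~\ref{endgeodesic}) to extract a causal limit curve $\gamma$ through $p$; (4) use that $\gamma \subset K$ together with strong causality to conclude $\gamma$ is not future-inextendible, so it has a future endpoint; (5) check the future endpoint is $q$ using convergence of $q_i$ and, if needed, the achronality/causality structure to rule out the endpoint lying elsewhere on $\gamma$ before $q$. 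Alternatively, and perhaps more cleanly, one can bypass limit curves entirely: since $M$ is causally simple (Theorem~1.4(i) of the excerpt, via global hyperbolicity), $J^-(r)$ is closed for each $r$; then from $p_i \leq q_i \ll r$ we get $p_i \in J^-(r)$, and letting $r$ range over a countable neighborhood basis of points with $r \gg q$, a short argument with $\bigcap_{r \gg q} J^-(r) = J^-(q)$ (which holds because $J^-(q) = \overline{I^-(q)}$ in a globally hyperbolic space-time) gives $p \in J^-(q)$, i.e.\ $p \leq q$. I would present the causal-simplicity route as the main line since it is shorter and self-contained within the results already established, and mention the limit-curve argument as the geometric picture behind it.
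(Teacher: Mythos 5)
Your proposal is correct, but a comparison with ``the paper's proof'' is slightly moot here: the paper does not prove this lemma at all, it simply cites Lemma~22 of Chapter~14 of O'Neill. The argument in that reference is precisely your first route: trap the connecting causal curves in the compact set $J^+(s)\cap J^-(r)$ with $s\ll p$ and $r\gg q$, extract a causal limit curve from $p$, and use strong causality to exclude imprisonment of an inextendible curve in a compact set, so the limit curve terminates at $q$. Your second route, via causal simplicity, is a genuine alternative and is arguably preferable in the context of this paper, since it uses only the causality hierarchy already quoted (globally hyperbolic $\Rightarrow$ causally simple) and the closedness of the individual sets $J^{\pm}(r)$, with no limit-curve machinery. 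The one step you should tighten is the identity $\bigcap_{r\gg q}J^-(r)=J^-(q)$: your parenthetical justification via $J^-(q)=\overline{I^-(q)}$ does not quite deliver it as stated (to land $p$ in $\overline{I^-(q)}$ you would need the reflectivity implication $I^+(q)\subset I^+(p)\Rightarrow I^-(p)\subset I^-(q)$, which is an extra input, albeit one available in causally continuous space-times). The clean finish is on the future side: from $p\in J^-(r)$ for all $r\gg q$ one gets $I^+(q)\subset I^+(p)$ by interposing $q\ll r'\ll r$, hence $q\in\overline{I^+(q)}\subset\overline{I^+(p)}=J^+(p)$, the last equality by causal simplicity, so $p\leq q$. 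With that repair both of your arguments are complete; the limit-curve version matches the cited source, while the causal-simplicity version is shorter and self-contained relative to the results the paper already assumes.
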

\begin{proof}
See the Lemma 22 of Chapter 14 in Ref \cite{Oneill}.
\end{proof}

\begin{proposition} \label{hs}
Let $x \in H^+(S)-S$ with $S=S_p$ for some $p \in I^+(\Sigma)$ where
$\Sigma$ is a Cauchy surface. Then there exists a null geodesic on
$H^+(S)$ from $x$ to a point in $edge S$.
\end{proposition}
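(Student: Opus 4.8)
The plan is to mimic the structure of the proofs of Proposition~\ref{proposition-edge} and Lemma~\ref{endgeodesic}, producing the null generator of $H^+(S)$ through $x$ by a limit-curve argument and then forcing its future endpoint to land on $edge S$. First I would recall from Proposition~\ref{ds1} that $H^+(S)$ is achronal and closed and that $\partial D^+(S)=H^+(S)\cup S$. Since $x\in H^+(S)-S$, pick a sequence $x_i\to x$ with $x_i\notin D^+(S)$; for each $i$ there is a past-inextendible timelike curve $\gamma_i$ from $x_i$ missing $S$, and since $M$ is globally hyperbolic each $\gamma_i$ meets $\Sigma$ at some $q_i\in\Sigma-S$. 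Taking a limit curve $\gamma$ of the $\gamma_i$ through $x$, the usual argument shows $\gamma$ cannot be timelike at any point (else $x\in I^-(\Sigma-S)$, but then nearby points of $H^+(S)$ would be chronologically related, contradicting achronality of $H^+(S)$, or more directly $x$ would fail to be in $\overline{D^+(S)}$ correctly), so $\gamma$ is a past-directed null geodesic; reversing orientation, we have a future-directed null geodesic issuing from $x$. The key point is that this generator stays on $H^+(S)$: any point slightly to the future of $x$ along $\gamma$ still lies in $H^+(S)$ because $\gamma$ reversed is a limit of past-inextendible causal curves avoiding $S$, hence witnesses that those points are not in $I^+(S)$-side of $D^+(S)$ in the interior sense.

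Next I would follow the generator $\gamma$ into the future as far as it remains on the closed set $H^+(S)$. If it is future-inextendible, I must rule this out: an inextendible causal curve in a globally hyperbolic space-time meets the Cauchy surface $\Sigma$, but $H^+(S)\cap\Sigma\subset S$ while the generator avoids $S$ (it came from curves missing $S$ and $H^+(S)$ is achronal so it cannot re-enter $S$ from the future side), so $\gamma$ must have a future endpoint $y$, and by closedness of $H^+(S)$ we get $y\in H^+(S)$. It remains to show $y\in edge S$. Suppose not; then as in the proof of Lemma~\ref{endgeodesic}, choose a neighborhood $U$ of $y$ such that, relative to $U$, whether a timelike curve from $z_1$ to $z_2$ meets $S$ is independent of the curve. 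Since $y\in H^+(S)\subset\overline{D^+(S)}$ and $y\notin S$, and using that $\gamma$ passes through points of $U$ strictly before $y$ on one side and that $H^+(S)$ is generated by curves avoiding $S$, I would derive a contradiction by constructing a short timelike curve through $U$ that detours around $\gamma(t_0)$ and thereby forces a point of $\gamma$ near $y$ into $I^+(S)\cap \mathrm{int}\,D^+(S)$, contradicting $\gamma\subset H^+(S)$ (which by definition satisfies $I^+(\gamma(t_0))\cap D^+(S)=\emptyset$).

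The main obstacle I anticipate is the bookkeeping in the last step: carefully arranging the neighborhood $U$ of $y$ and the auxiliary points $z_1, z_2$ so that the "edge-triviality" of $U$ genuinely conflicts with membership of a nearby generator point in $H^+(S)$, rather than merely in $D^+(S)$. In Lemma~\ref{endgeodesic} the contradiction was with $x\in\partial I^-(S)$; here the analogous leverage is Proposition~\ref{ds2}, which identifies $D^+(S)\cap I^+(S)$ with $I^+(S)-I^+(edge S)$, so I would phrase the contradiction as: if $y\notin edge S$ then a small perturbation puts $\gamma(t_0)$ into $I^+(S)$ while keeping it in $D^+(S)$ (because $y\notin edge S$ means no generator of $\partial I^+(S)$ forces it out), hence $\gamma(t_0)\in I^+(edge S)^c\cap I^+(S)\subset \mathrm{int}\,D^+(S)$, so $\gamma(t_0)\notin H^+(S)$, contradicting that $\gamma$ is a generator of $H^+(S)$. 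Once this local argument is pinned down, the proposition follows; the global and limit-curve parts are routine given the machinery already assembled in Sections~\ref{section: 2} and \ref{section: 3}.
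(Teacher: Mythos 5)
Your overall strategy --- build the null generator through $x$ by a limit-curve argument and then force its endpoint onto $edge S$ by a local argument in the style of Lemma~\ref{endgeodesic} --- is genuinely different from the paper's, and it has real gaps at both of its crucial steps. First, the claim that the limit curve lies on $H^+(S)$ is only asserted (``hence witnesses that those points are not in the $I^+(S)$-side of $D^+(S)$ in the interior sense''). Every point of your curves $\gamma_i$ lies outside $D^+(S)$ (its past portion is a past-inextendible timelike curve missing $S$), so points of the limit curve lie in $M-\mathrm{int}\,D^+(S)$; but to place them on $\partial D^+(S)=H^+(S)\cup S$ you must also show that every neighborhood of such a point meets $D^+(S)$, and you give no argument for that. (The paper spends the entire second half of its proof on exactly this verification for its own geodesic.) Second, and more seriously, the step that actually lands the endpoint in $edge S$ is never carried out: you explicitly defer it (``once this local argument is pinned down''), and the one concrete inference you offer --- $\gamma(t_0)\in I^+(S)-I^+(edge S)\subset\mathrm{int}\,D^+(S)$ --- is false. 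Proposition~\ref{ds2} identifies $I^+(S)-I^+(edge S)$ with $D^+(S)\cap I^+(S)$, a set which contains $H^+(S)\cap I^+(S)$, i.e.\ boundary points of $D^+(S)$; so membership in it cannot contradict membership in $H^+(S)$. There is also a persistent orientation confusion: $edge S\subset S\subset\Sigma$ lies to the causal \emph{past} of $x$, so the generator must be followed into the past, whereas you speak of ``reversing orientation'' to obtain a ``future-directed null geodesic issuing from $x$'' and of its ``future endpoint'' in $edge S$.

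For comparison, the paper avoids limit curves of the $\gamma_i$ entirely. It takes $x_i\in I^+(x)$, which are automatically outside $D^+(S)$ because $x\in H^+(S)$, applies Proposition~\ref{ds2} to produce $y_i\in edge S$ with $y_i\ll x_i$, uses compactness of $edge S$ and closedness of $\leq$ (Lemma~\ref{closed}) to extract $y\in edge S$ with $y\leq x$, and rules out $y\ll x$ by (v) of Proposition~\ref{ds1}; this yields $y\rightarrow x$ and hence the null geodesic directly, with the endpoint in $edge S$ obtained for free rather than at the end. Only then does it check that the geodesic lies on $H^+(S)$, by exhibiting in every neighborhood of each of its points both a point of $D^+(S)$ and a point outside $D^+(S)$. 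If you want to salvage your route, the missing ingredient is precisely the statement that generators of $H^+(S)$, followed into the past, are either past-inextendible or terminate on $edge S$ --- which is essentially the proposition being proved, so it must be supplied with a full argument rather than gestured at.
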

\begin{proof}
Let $\{x_i\}$ be a sequence in $I^+(x)$ such that $x_{i+1} \ll x_i$
and $x_i$ converges to $x$. Then, since $x \in H^+(S)$, $x_i \notin
D^+(S)$. By Proposition \ref{ds2}, for each $x_i$, there exists
$y_i$ in $edge S$ such that $y_i \ll x_i$. Since $edge S$ is
compact, without loss of generality, we can assume that there exists
a point $y \in edge S$, such that $y_i$ converges to $y$. Since $y_i
\ll x_i$ and $M$ is globally hyperbolic, we have $y \leq x$ by the
above lemma. If $y \ll x$, then $x \in I^+(edge S)$, which
contradicts to $x \in D^+(S)$. Thus, we must have $y \rightarrow x$
and we get a null geodesic $\eta$. It remains to show that every
point on $\eta$ lies on $H^+(S)$. Let $z$ be a point on the null
geodesic $\eta$. Let $U$ be a neighborhood of $z$ that lies in
$I^+(S)$. Then any chronological future point of $z$ in $U$ must be
in $I^+(edge S)$, which is is not in $D^+(S)$ by Proposition
\ref{ds2}. Let $z^{\prime}$ be a point in $U$ such that $z^{\prime}
\ll z$. Then $z \rightarrow x$ implies that $z^{\prime} \ll x$. Let
$\gamma$ be a timelike curve from $x$ to $z^{\prime}$. Then, since
$z^{\prime} \in U \subset I^+(S)$, $\gamma$ does not meet $\Sigma$.
Thus any past inextendible timelike curve $\gamma^{\prime}$ from
$z^{\prime}$ must meet $S$ since $\gamma \cup \gamma^{\prime}$ is a
past inextendible timelike curve from $x$ and $x \in H^+(S)$. That
is to say, any neighborhood $U$ of $z$ contains points in $D^+(S)$
and not in $D^+(S)$. Therefore, we have $z \in \partial
D^+(S)=H^+(S) \cup S$, and $z \in H^+(S)$.
\end{proof}

In Proposition \ref{proposition-edge}, we have seen that for $x \in
\partial{J^-(p)}$, there exists a future-directed null geodesic from $x$ to $p$
which generates $\partial{J^-(p)}$. If we extend the generating null
geodesic to the past, we can not assure that the geodesic be on
$\partial{J^-(p)}$ as in the case with two-dimensional Einstein's
static universe. However, in the following theorem, we show that if
the generating null geodesic is extended to the past beyond $x$, it
must reach the point in $edge S_p$ if $\Sigma$ is non-compact, which
plays the key role for the proof of converses of Proposition
\ref{thm1} and Proposition \ref{thm3}.

\begin{thm} \label{thm2}
Let $M$ be globally hyperbolic with non-compact Cauchy surface
$\Sigma$. For $p \in I^+(\Sigma)$, let $S_p = J^-(p) \cap \Sigma$.
Then, for any $x \in \partial{J^-(p)}$, there exists $z^{\prime}$ in
$edge S_p$ such that $z^{\prime} \rightarrow x \rightarrow p$
constitute a single null geodesic.
\end{thm}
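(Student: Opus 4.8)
\textit{Proof proposal.} The plan is to take the null generator of $\partial J^-(p)$ that ends at $p$ and passes through $x$, follow it into the past beyond $x$, and read off $z'$ as the point where this geodesic first meets the Cauchy surface; the whole difficulty will be to show that $z'$ lands in $edge S_p$ rather than in its interior $S_p^{\circ}$, and this is exactly where the non-compactness of $\Sigma$ must be used. First dispose of $x=p$: by Proposition~\ref{hsp}, $p\in H^+(S_p)$, and Proposition~\ref{hs} then gives a null geodesic on $H^+(S_p)$ from $p$ to a point of $edge S_p$, which is the configuration claimed. So assume $x\neq p$; since $M$ is globally hyperbolic, $J^-(p)=\overline{I^-(p)}$, hence $\partial J^-(p)=\partial I^-(p)$, and $x\leq p$ but $x\not\ll p$, i.e.\ $x\rightarrow p$. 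Thus there is a null geodesic $\gamma$ from $x$ to $p$ without conjugate points, and every point of $\gamma$ lies on $\partial J^-(p)$ (a point of $\gamma$ inside $I^-(p)$ would force $x\ll p$). Extend $\gamma$ past $x$ to a past-inextendible null geodesic $\tilde\gamma$; because $\Sigma$ is a spacelike Cauchy surface, $\tilde\gamma$ meets $\Sigma$ in exactly one point $z'$, and --- since the portion $\gamma$ of $\tilde\gamma$ from $x$ to $p$ lies in $\overline{I^+(\Sigma)}$ and meets $\Sigma$ at most at $x$ --- the point $z'$ lies on $\tilde\gamma$ at or before $x$, so $z'\leq x\leq p$, i.e.\ $z'\in J^-(p)\cap\Sigma=S_p$, and $z',x,p$ occur in that order on the single null geodesic $\tilde\gamma$. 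It now suffices to prove $z'\in edge S_p$: for then $z'\notin S_p^{\circ}=I^-(p)\cap\Sigma$ (recall $S_p-edge S_p=S_p^{\circ}$, Section~\ref{section: 3}), so $z'\not\ll p$ and hence also $z'\not\ll x$ (otherwise $z'\ll x\leq p$), so that $z'\rightarrow x\rightarrow p$ holds along $\tilde\gamma$. When $x\in\Sigma$ we already have $z'=x$ and $x\in\partial J^-(p)\cap\Sigma\subseteq edge S_p$ by Proposition~\ref{coro}; so from now on take $x\in I^+(\Sigma)$.

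Assume, for contradiction, that $z'\in S_p^{\circ}$, i.e.\ $z'\ll p$. Parametrise $\tilde\gamma$ so that $\tilde\gamma(0)=z'$, $\tilde\gamma(a)=x$ and $\tilde\gamma(b)=p$ with $0<a\leq b$, and set $t_1=\sup\{\,t\in[0,a]:\tilde\gamma(t)\notin\partial J^-(p)\,\}$. Since $z'\in I^-(p)$ and $I^-(p)$ is open, the set is nonempty; since $\partial J^-(p)$ is closed, $y:=\tilde\gamma(t_1)\in\partial J^-(p)$, one has $\tilde\gamma(t)\in\partial J^-(p)$ for every $t\in(t_1,b]$, and there are parameters arbitrarily close to $t_1$ from below at which $\tilde\gamma$ lies in $I^-(p)$. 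As $t_1\leq a<b$ we have $y\neq p$, so $\tilde\gamma|_{[t_1,b]}$ is the null generator of $\partial J^-(p)$ through $x$, with $y$ as its past endpoint \emph{on} $\partial J^-(p)$; beyond $y$, to the past, this geodesic leaves $\partial J^-(p)$ into $I^-(p)$ and, being past-inextendible, eventually meets $\Sigma$ at $z'\in I^-(p)$. In other words, the generator through $x$ \emph{pinches} before reaching the Cauchy surface --- precisely the behaviour flagged in the paragraph preceding the theorem, which does occur when the Cauchy surface is compact (two-dimensional Einstein static universe), and there it is accompanied by $S_p=\Sigma$ and $edge S_p=\emptyset$.

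The point $y$ is then a point of non-differentiability of the closed achronal hypersurface $\partial J^-(p)$, a past endpoint at which at least one further null generator of $\partial J^-(p)$ abuts $y$ with a distinct null direction. Extending that second generator into the past it too reaches $\Sigma$, and comparing the two null directions at $y$ --- rounding the corner there, exactly as the geodesics $\gamma_p,\gamma_q$ are played against one another at the $edge$-point in the proof of Proposition~\ref{sp-unique} --- one wishes to produce a timelike curve between two points that are causally separated, contradicting the achronality of $\partial J^-(p)$ or of $\Sigma$. The hypothesis that $\Sigma$ is non-compact --- which, by the corollaries of Proposition~\ref{coro}, guarantees $\emptyset\neq edge S_p\neq S_p=\Sigma$ and that a generator reaching $\Sigma$ on $\partial J^-(p)$ actually lands in $edge S_p$ (since $\partial J^-(p)\cap\Sigma=edge S_p$) --- is what is supposed to forbid this pinching and make the contradiction go through. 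Exhibiting that contradiction carefully is the step I expect to be the real obstacle; everything else is bookkeeping with $S_p$, $S_p^{\circ}$ and $edge S_p$ from Sections~\ref{section: 2}--\ref{section: 3}.

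An essentially equivalent route localises the same difficulty: show first that $x\in H^+(S_p)$. The inclusion $x\in D^+(S_p)$ is immediate, since any past-inextendible timelike curve from $x$ crosses $\Sigma$ at a point $r$ with $r\ll x\leq p$, hence $r\in I^-(p)\cap\Sigma=S_p^{\circ}\subseteq S_p$. One then needs $I^+(x)\cap D^+(S_p)=\emptyset$ --- equivalently $D^+(S_p)\subseteq J^-(p)$ (granting which, $z\in I^+(x)\cap D^+(S_p)$ would give $x\ll z\leq p$, so $x\ll p$, absurd) --- and this inclusion again fails without the non-compactness of $\Sigma$, so I expect it to be the crux here too. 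Once $x\in H^+(S_p)$ is known, Proposition~\ref{hs} produces a null geodesic on $H^+(S_p)$ from $x$ down to a point $z'\in edge S_p$; by achronality of $H^+(S_p)$ this geodesic realises a horismos relation, and since the generator of $\partial J^-(p)$ through $x$ (towards $p$) and the generator of $H^+(S_p)$ through $x$ (away from $p$) share the point $x$ on the common achronal hypersurface, they fit together into the single null geodesic $z'\rightarrow x\rightarrow p$ demanded by the theorem.
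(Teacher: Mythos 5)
Your proposal correctly reduces the theorem to a single claim --- that the past extension of the null generator of $\partial J^{-}(p)$ through $x$ reaches $\Sigma$ at a point of $edge\, S_p$ rather than first dipping into $I^{-}(p)$ (equivalently, in your second route, that $x\in H^{+}(S_p)$, i.e.\ $I^{+}(x)\cap D^{+}(S_p)=\emptyset$) --- but it does not prove that claim. You say so yourself, twice: ``Exhibiting that contradiction carefully is the step I expect to be the real obstacle'' and ``this inclusion again fails without the non-compactness of $\Sigma$, so I expect it to be the crux here too.'' That step \emph{is} the theorem; everything surrounding it in your write-up (the reduction $\partial J^{-}(p)\cap\Sigma=edge\,S_p$, the dichotomy between the generator staying on $\partial J^{-}(p)$ or entering $I^{-}(p)$, the handling of $x=p$ and $x\in\Sigma$) is correct bookkeeping, but the pinching scenario is never actually excluded. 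Your sketch of how to exclude it (a second generator abutting the pinch point $y$, ``rounding the corner'') is not carried out, and it is not clear it can be carried out in that form: achronality of $\partial J^{-}(p)$ is not violated merely by two generators meeting at a past endpoint with distinct directions, so the contradiction you hope for does not fall out of the local picture at $y$ alone.

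The paper avoids confronting the pinch point directly. Instead of extending the generator backwards from $x$, it approximates $p$ from the chronological future by a decreasing sequence $p_i\to p$; Proposition~\ref{hsp} gives $p_i\notin D^{+}(S_p)$ while $x\in D^{+}(S_p)$, so each past-directed timelike curve from $p_i$ to $x$ must cross $H^{+}(S_p)$ at some $z_i$. Proposition~\ref{hs} then drops a null generator of $H^{+}(S_p)$ from each $z_i$ to a point $z_i'\in edge\,S_p$, and one passes to limits $z_i\to z$, $z_i'\to z'$ using compactness of $edge\,S_p$ and of $J^{+}(x)\cap J^{-}(p_1)$ together with closedness of $\leq$ (Lemma~\ref{closed}); the horismos relations $z'\rightarrow z$, $z\rightarrow p$, $x\rightarrow z$ survive the limit by Proposition~\ref{ds2} and achronality of $H^{+}(S_p)$ and $\partial J^{-}(p)$, and the usual corner-rounding argument forces all the segments onto one geodesic. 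The non-compactness of $\Sigma$ enters precisely where you predicted, but packaged as the non-emptiness and compactness of $edge\,S_p$ and the domain-of-dependence machinery of Propositions~\ref{hsp}, \ref{ds2} and \ref{hs}. To repair your proof you would need to supply this (or an equivalent) argument for the missing step; as written, the proposal identifies the difficulty without resolving it.
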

\begin{proof}
Choose a sequence $\{p_i\}$ in $I^+(p)$ such that $p_{i+1} \ll p_i$
and $p_i$ converges to $p$. Then by Proposition \ref{hsp}, each $p_i
\notin D^+(S_p)$. Since $x \rightarrow p$ and $p \ll p_i$, we have
$x \ll p_i$ and thus we can choose a past directed timelike curve
$\alpha_i$ from each $p_i$ to $x$. Since $p_i \notin D^+(S_p)$ and
$x \in D^+(S_p)$, the curve $\alpha_i$ must meet $H^+(S_p)$ at, say,
$z_i$. By Proposition \ref{hs}, we get another sequence
$\{z_i^{\prime}\}$ in $edge S_p$ such that $z_i^{\prime} \rightarrow
z_i$. Since $edge S_p$ is compact, there exists a point $z^{\prime}$
in $edge S_p$ such that $z_i^{\prime}$ converges to $z^{\prime}$.
For the sequence $\{z_i\}$, since each $z_i$ is in $J^+(x) \cap
J^-(p_1)$ which is compact, there exists a point $z$ in $J^+(x) \cap
J^-(p_1)$ such that $z_i$ converges to $z$. Since $H^+(S_p)$ is
closed and $z_i \in H^+(S_p)$, we have $z \in H^+(S_p) \subset
D^+(S_p)$. Since $z_i^{\prime} \leq z_i$ and $M$ is globally
hyperbolic, we have $z^{\prime} \leq z$ by Lemma \ref{closed}. If
$z^{\prime} \ll z$, then $z \in I^+(edge S_p)$ which contradicts to
Proposition \ref{ds2} since $z \in D^+(S_p)$. Therefore, we must
have $z^{\prime} \rightarrow z$. Likewise, $z_i \ll p_i$ implies
that $z \leq p$. If $z \ll p$, then this contradicts to the
achronality of $H^+(S_p)$. Therefore we must have $z \rightarrow p$.

We have shown that $z^{\prime} \rightarrow z$ and $z \rightarrow p$.
If the corresponding two null geodesics from $z^{\prime}$ to $z$ and
from $z$ to $p$ have different direction at $z$, then we have
$z^{\prime} \ll p$, which contradicts to Lemma \ref{edge-geodesic}.
Therefore the two null geodesics constitute a single null geodesic.
This implies that $z \in
\partial{J^-(p)}$. If $x = z$, then we have the desired null geodesic
 $z^{\prime} \rightarrow z=x \rightarrow p$.
 Now let us assume that $x \neq z$. Since $x \leq
z_i$ and $M$ is globally hyperbolic, we have $x \leq z$ by Lemma
\ref{closed}. If $x \ll z$, then this contradicts to the achronality
of $\partial{J^-(p)}$ and thus we have $x \rightarrow z$. If two
null geodesics from $x \rightarrow z$ and $z \rightarrow p$ have
different direction at $z$ the, we have $x \ll p$ which contradicts
to $x \rightarrow p$. Thus, two null geodesics $x \rightarrow z$ and
$z \rightarrow p$ constitute a single null geodesic. By the
uniqueness of geodesics, we have $z^{\prime} \rightarrow x
\rightarrow z \rightarrow p$, which constitute a single null
geodesic and the proof is completed.
\end{proof}

Now we are in a position to prove the converse of Proposition
\ref{thm1} and Proposition \ref{thm3}.

\begin{proposition} \label{converse}
If $S_p \subset S_q$ for $p$ and $q$ in $J^+(\Sigma)$, then $p \leq
q$.
\end{proposition}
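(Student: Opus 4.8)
The plan is to argue by contraposition: assume $S_p \subset S_q$ but $p \not\leq q$, and derive a contradiction using Theorem~\ref{thm2} to locate a null geodesic generator of $\partial{J^-(p)}$ that terminates in $edge S_p \subset S_p \subset S_q$. First I would dispose of the easy sub-cases. Since $p \in J^+(\Sigma)$, either $p \in \Sigma$ or $p \in I^+(\Sigma)$; if $p \in \Sigma$ then $S_p = \{p\}$ (by achronality), so $p \in S_q = J^-(q)\cap\Sigma$ gives $p \leq q$ immediately. So I may assume $p \in I^+(\Sigma)$, hence $edge S_p \neq \emptyset$ by the Corollary following Proposition~\ref{coro}.

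The heart of the argument is this. Pick any point $x \in edge S_p$. By Theorem~\ref{thm2} applied with $x$ itself playing the role of the boundary point (note $x \in edge S_p \subset S_p \subset \partial{J^-(p)}$ since $x \rightarrow p$ by Lemma~\ref{edge-geodesic}), there is $z^\prime \in edge S_p$ with $z^\prime \rightarrow x \rightarrow p$ a single null geodesic. Now $z^\prime \in edge S_p \subset S_p \subset S_q = J^-(q)\cap\Sigma$, so $z^\prime \leq q$, and also $z^\prime \rightarrow p$. If I can show that the null geodesic from $z^\prime$ through $x$ actually reaches $p$, and compare it with a causal curve from $z^\prime$ to $q$: the point is to choose $x$ so as to force a chronological relation. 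Concretely, take $p_i \in I^+(p)$ with $p_i \to p$; since $z^\prime \rightarrow p \ll p_i$ we get $z^\prime \ll p_i$. If we also knew $p_i \not\leq q$ for all large $i$ we could still not immediately conclude — so instead I would run the following cleaner version: suppose $p \not\leq q$. Since $\leq$ is closed (Lemma~\ref{closed}) and $p_i \to p$, we have $p_i \not\leq q$ for all large $i$, i.e. $p_i \notin J^+(q) \supset \Sigma\text{-independent}$; equivalently $S_{p_i}$ need not contain $S_q$. Now consider instead a point $y \in \partial{I^-(S_q)} - \overline{S_q}$ (which exists since $\Sigma$ is non-compact so $S_q \neq \Sigma$, whence $I^-(S_q) \neq M$): by Lemma~\ref{endgeodesic} there is a null geodesic from $y$ to a point $w \in edge S_q$. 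The real mechanism must instead exploit $z^\prime \ll p_i$ together with $z^\prime \in S_q$ to build a timelike curve from $\Sigma$ to $q$ that "overshoots," contradicting $p \not\leq q$; the cleanest route is: from $z^\prime \in S_q$ there is a causal curve $z^\prime \leq q$, and from the single null geodesic $z^\prime \rightarrow x \rightarrow p$, if this geodesic and the curve to $q$ emanate from $z^\prime$ in different null directions (or one is timelike), transitivity (Proposition 2.1) yields $x \ll q$ or $p \ll q$; in the former case push forward along $x \rightarrow p$ to again get $p \ll q$, contradicting $p \not\leq q$. The remaining case is that $z^\prime \rightarrow x \rightarrow p$ and the generator of $\partial{J^-(q)}$ from $z^\prime$ to $q$ coincide as null geodesics near $z^\prime$; then by geodesic uniqueness $p$ lies on the generator through $q$, and one argues $p \in J^-(q)$ directly, i.e. $p \leq q$.

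The main obstacle I anticipate is the last case: ruling out, or rather turning to advantage, the situation where the unique null geodesic through $z^\prime$ serves as a generator of \emph{both} $\partial{J^-(p)}$ and $\partial{J^-(q)}$. Here I would use the non-compactness of $\Sigma$ once more: if $edge S_p$ had \emph{every} point $z^\prime$ lying on a common generator with $q$, then following such generators one should be able to slide $p$ monotonically toward $q$ along null geodesics, and since $S_p \subset S_q$ strictly or not, Proposition~\ref{time-sp} (applied when $edge S_p \cap edge S_q = \emptyset$) or Proposition~\ref{sp-unique} (when $S_p = S_q$, forcing $p = q$) closes the gap. So the proof will split on whether $edge S_p \cap edge S_q = \emptyset$: if so, Proposition~\ref{time-sp} gives $S_p \subset S_q^\circ$, and picking $x \in edge S_p \subset S_q^\circ$ yields a timelike curve $x \ll q$; combined with the null geodesic $x \rightarrow p$ extended — wait, that gives $p$ in the future of $x$, not $x$ in the future of $p$ — so instead use $z^\prime \in edge S_p \subset S_q^\circ$, so $z^\prime \ll q$, and $z^\prime \rightarrow x \rightarrow p$, hence by Proposition 2.1(ii) nothing direct; rather I compare directions at $z^\prime$: the timelike curve to $q$ and the null geodesic to $p$ leave $z^\prime$ in different directions (timelike vs. null), so a broken curve $p \leftarrow x \leftarrow z^\prime \to q$ cannot be made timelike, but going $z^\prime \to q$ first then... this does not chain. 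The resolution is to apply Theorem~\ref{thm2} not at $x \in edge S_p$ but at a point $x \in \partial{J^-(p)}$ chosen close to $\partial{J^-(q)}$, specifically $x$ on a generator of $\partial{J^-(q)}$ landing in $edge S_q$; then the $z^\prime \in edge S_p$ produced satisfies $z^\prime \leq q$ while $x,p$ lie on one side, and the direction comparison at $z^\prime$ forces $p \ll q$ unless everything is collinear, in which case $p \leq q$ outright. If $edge S_p \cap edge S_q \neq \emptyset$, pick $x$ in this intersection: then $x \rightarrow p$ and $x \rightarrow q$ by Lemma~\ref{edge-geodesic}, and Theorem~\ref{thm2} gives $z^\prime \in edge S_p$ with $z^\prime \rightarrow x \rightarrow p$ a single null geodesic; if the geodesic $x \rightarrow q$ has a direction at $x$ different from that of $x \rightarrow p$, then extending $z^\prime \rightarrow x$ and appending $x \rightarrow q$ makes $z^\prime \ll q$, contradicting Lemma~\ref{edge-geodesic} ($z^\prime \in edge S_q$ as well, since $z^\prime \in edge S_p$ and $S_p \subset S_q$ need not give that — so I instead get $z^\prime \ll q$ directly contradicting nothing unless $z^\prime \in edge S_q$); hence I would show $z^\prime \in edge S_q$ too, or else the directions agree and $x \rightarrow q$ extends $z^\prime \rightarrow x$, so $p$ and $q$ both lie on one null geodesic issuing from $x$, and achronality of $\partial{J^-(q)}$ together with $x \rightarrow q$ pins down the order to give $p \leq q$. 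Because several of these case-closings are delicate, I expect the write-up to mirror closely the structure of the proof of Proposition~\ref{sp-unique}, invoking Theorem~\ref{thm2}, Lemma~\ref{edge-geodesic}, geodesic uniqueness, and the achronality of the null boundaries repeatedly.
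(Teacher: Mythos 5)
Your proposal does not reach a correct proof; it is a chain of attempts, several of which you yourself flag as not closing, and the one step you repeatedly lean on is logically invalid. From $z^\prime \rightarrow x \rightarrow p$ (a null geodesic to the \emph{future} of $z^\prime$) together with $z^\prime \leq q$ (a causal curve also to the \emph{future} of $z^\prime$), no transitivity or ``corner'' argument can produce a causal relation between $p$ and $q$: both points merely lie in $J^+(z^\prime)$, and two causal curves emanating from a common past endpoint cannot be concatenated. This is the inference you invoke in the passage ``transitivity (Proposition 2.1) yields $x \ll q$ or $p \ll q$,'' and it fails. The subsequent case analysis (collinearity of generators, splitting on $edge\, S_p \cap edge\, S_q$) inherits this gap, and the ``remaining case'' where $p$ and $q$ lie on a common null generator is also unresolved, since nothing you set up controls the \emph{order} of $p$ and $q$ along that generator. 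More broadly, you apply Theorem \ref{thm2} to $\partial J^-(p)$ and work with $edge\, S_p$, which is the wrong boundary to interrogate: the hypothesis $S_p \subset S_q$ is information about where past-directed curves from $p$ land relative to $S_q$, so the useful boundary is $\partial J^-(q)$.

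The paper's argument is short and different in mechanism. Assume $p \notin J^-(q)$ and follow a past-directed timelike curve $\gamma$ from $p$; it meets $\Sigma$ in $S_p^\circ \subset S_q \subset J^-(q)$, so $\gamma$ starts outside $J^-(q)$ and ends inside it, hence crosses $\partial J^-(q)$ at some $x$. Theorem \ref{thm2}, applied to $q$ at this crossing point, yields a past-directed null geodesic from $x$ to a point of $edge\, S_q$. Concatenating the timelike segment of $\gamma$ from $p$ to $x$ with this null segment gives $p \in I^+(edge\, S_q)$; since every neighborhood of an edge point of $S_q$ meets $\Sigma - S_q$, the open set $I^-(p)$ then contains a point of $\Sigma - S_q$, i.e.\ $S_p^\circ \not\subset S_q$, contradicting the hypothesis. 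The intermediate-value step along $\gamma$ (locating the crossing of $\partial J^-(q)$) is the idea your proposal is missing.
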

\begin{proof}
Assume that $p \notin J^-(q)$ and let $\gamma$ be a past-directed
timelike curve from $p$. Then $\gamma$ must meet $S_p^{\circ}$.
Since $S_p^{\circ} \subset S_q^{\circ} \subset S_q$ by hypothesis,
$\gamma$ must meet $\partial{J^-(q)}$ at a point $x$. By Theorem
\ref{thm2}, there exists a past-directed null geodesic from $x$ to a
point in $edge S_q$. By combining the timelike curve $\gamma$ and
the null geodesic, we have $p \in I^+(edge S_q)$. By the definition
of edge, this implies that a past-directed timelike curve from $p$
can reach a point in $\Sigma$ outside $S_q$. However, this
contradicts to $S_p \subset S_q$. Thus we have, $p \in J^-(q)$ and
$p \leq q$.
\end{proof}

By combining Proposition \ref{thm1} and Proposition \ref{converse},
we have the following theorem.

\begin{thm} \label{causal}
Let $p$ and $q$ be in $J^+(\Sigma)$.\\ Then $p \leq q$ if and only
if $S_p \subset S_q$.
\end{thm}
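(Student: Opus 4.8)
The plan is simply to assemble the two halves that have already been established. The forward implication ``$p \le q \Rightarrow S_p \subset S_q$'' is precisely Proposition~\ref{thm1}, whose proof is a one-line appeal to transitivity of $\le$: any $x$ with $x \le p$ satisfies $x \le q$, hence $S_p = J^-(p)\cap\Sigma \subset J^-(q)\cap\Sigma = S_q$. Nothing further is needed in that direction.

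For the reverse implication ``$S_p \subset S_q \Rightarrow p \le q$'' I would invoke Proposition~\ref{converse} directly. If one wished to reprove it from scratch, the strategy is the standard ``push a past-directed curve to the boundary'' argument: assume for contradiction that $p \notin J^-(q)$ and choose a past-directed timelike curve $\gamma$ issuing from $p$. Since $M$ is globally hyperbolic, $\gamma$ meets $S_p^\circ$, and because $S_p^\circ \subset S_q^\circ \subset S_q$ the curve $\gamma$ must cross $\partial J^-(q)$ at some point $x$. Applying Theorem~\ref{thm2} at $x$ produces a past-directed null geodesic from $x$ to a point of $edge S_q$; concatenating it with the timelike segment of $\gamma$ from $p$ to $x$ gives $p \in I^+(edge S_q)$, so by the definition of edge a past-directed timelike curve from $p$ reaches a point of $\Sigma$ lying outside $S_q$, contradicting $S_p \subset S_q$. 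Hence $p \in J^-(q)$, i.e.\ $p \le q$.

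The only genuinely substantive ingredient behind the non-trivial direction is Theorem~\ref{thm2} (the generating null geodesic of $\partial J^-(p)$, continued to the past beyond $x$, must reach $edge S_p$), which itself rests on the Cauchy-horizon analysis of Propositions~\ref{hsp}, \ref{ds2}, and \ref{hs} together with the non-compactness of $\Sigma$. Since that theorem is already available to us, I anticipate no real obstacle: combining Proposition~\ref{thm1} and Proposition~\ref{converse} yields the biconditional immediately, and the remaining ``work'' is purely bookkeeping.
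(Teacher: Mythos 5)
Your proposal is correct and matches the paper exactly: the theorem is obtained by combining Proposition~\ref{thm1} (the forward direction via transitivity of $\leq$) with Proposition~\ref{converse} (the reverse direction via Theorem~\ref{thm2} and the edge argument), and your sketch of the latter reproduces the paper's own proof of that proposition. No gaps.
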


Now we can prove how the causally admissible slices of $\Sigma$
determine timelike relations in $J^+(\Sigma)$.

\begin{thm} \label{timelike}
Let $p$ and $q$ be in $J^+(\Sigma)$.\\ Then $p \ll q$ if and only if
$S_p \subset S_q$ and $edge S_p \cap edge S_q = \emptyset$.
\end{thm}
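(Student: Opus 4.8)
The plan is to handle the two implications separately. The ``only if'' direction requires nothing new — it is exactly Proposition~\ref{thm3} — so all of the work lies in the converse, and my strategy there is to push the argument of Proposition~\ref{converse} one step further: use Theorem~\ref{thm2} to produce the null geodesic generating $\partial{J^-(q)}$ through the disputed point, and then observe that the point where that geodesic meets $\Sigma$ is an edge point of \emph{both} $S_p$ and $S_q$, contradicting the hypothesis.

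Concretely, assuming $S_p \subset S_q$ and $edge S_p \cap edge S_q = \emptyset$, I would first invoke Theorem~\ref{causal} to obtain $p \leq q$, so that it only remains to exclude the horismos case $p \rightarrow q$. Suppose $p \rightarrow q$. Since $M$ is globally hyperbolic, $J^-(q) = \overline{I^-(q)}$, so $p \in J^-(q) - I^-(q)$ forces $p \in \partial{J^-(q)}$. (I dispatch the trivial case $q \in \Sigma$ at the outset: then $S_q = \{q\}$, hence $S_p = \{q\}$ and $edge S_p = edge S_q = \{q\}$, so the hypothesis already fails; thus I may take $q \in I^+(\Sigma)$, as Theorem~\ref{thm2} requires.) Applying Theorem~\ref{thm2} with apex $q$ to the point $x = p$ then yields a point $z^\prime \in edge S_q$ with $z^\prime \rightarrow p \rightarrow q$ forming a single null geodesic.

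The heart of the proof is then to show $z^\prime \in edge S_p$ as well. The relation $z^\prime \rightarrow p$ records three facts: $z^\prime \in \Sigma$, $z^\prime \leq p$, and \emph{not} $z^\prime \ll p$. The first two give $z^\prime \in S_p$; the third gives $z^\prime \notin I^-(p) \cap \Sigma = S_p^\circ$. Since $S_p = S_p^\circ \cup edge S_p$ (the decomposition obtained earlier from $S_p = \overline{S_p^\circ}$ together with Proposition~\ref{coro}), this places $z^\prime$ in $edge S_p$. Hence $z^\prime \in edge S_p \cap edge S_q$, contradicting the hypothesis; so $p \rightarrow q$ is impossible and $p \ll q$, which finishes the converse.

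I expect the only genuine obstacle to be Theorem~\ref{thm2} itself, which is already available; granting it, everything else is a short chain of transitivity facts and the partition $S_p = S_p^\circ \cup edge S_p$. The one delicate point is the bookkeeping in the third paragraph — verifying that the endpoint on $\Sigma$ of the geodesic generating $\partial{J^-(q)}$ through $p$ automatically lies in the edge of the \emph{smaller} slice $S_p$ — but once the horismos relation $z^\prime \rightarrow p$ is noted, even that step is immediate.
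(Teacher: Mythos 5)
Your proof is correct and follows essentially the same route as the paper's: the forward direction is Proposition \ref{thm3}, and the converse obtains $p \leq q$ from Proposition \ref{converse} and then rules out $p \rightarrow q$ by applying Theorem \ref{thm2} at $x = p$ to produce a point of $edge\, S_q$ that must also lie in $edge\, S_p$. Your third paragraph (using the decomposition $S_p = S_p^{\circ} \cup edge\, S_p$) and your dispatch of the case $q \in \Sigma$ merely make explicit two small steps that the paper leaves implicit.
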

\begin{proof}
The ``only if" part is proved in Proposition \ref{thm3}. It remains
to show the ``if" part. By Proposition \ref{converse}, $S_p \subset
S_q$ implies that $p \leq q$. If $p \rightarrow q$, then we have $p
\in
\partial{J^-(q)}$ and by Theorem \ref{thm2} again, there exists $z
\in edge S_q$ such that two null geodesics $z \rightarrow p$ and $p
\rightarrow q$ constitute a single null geodesic. However, this
implies that $z \in edge S_p \cap edgeS_q$, which is a
contradiction. This contradiction stems from the assumption that $p
\rightarrow q$. Therefore, we must conclude that $p \ll q $.
\end{proof}

We next investigate how the causally admissible slices determine the
horismos relations in $J^+(\Sigma)$.

\begin{proposition} \label{horis1}
Let $p$ and $q$ be in $J^+(\Sigma)$ and such that $p \neq q$ and $p
\rightarrow q$, then $S_p \subset S_q$ and $edge S_p \cap edge S_q$
has only one element.
\end{proposition}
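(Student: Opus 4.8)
The plan is to split the statement into three parts --- the inclusion $S_p \subset S_q$, the nonemptiness of $edge S_p \cap edge S_q$, and the uniqueness of the point it contains --- of which only the last is substantial. It is convenient to dispose first of the degenerate cases: if $p \in \Sigma$, then, since $\Sigma$ is acausal, $S_p = J^-(p)\cap\Sigma = \{p\} = edge S_p$, while $p \in edge S_q$ because $p \leq q$ forces $p \in S_q$ and $p \not\ll q$ forces $p \notin S_q^{\circ}$, so $edge S_p \cap edge S_q = \{p\}$ and we are done; similarly $p \in I^+(\Sigma)$ together with $p \neq q$, $p \leq q$ forces $q \in I^+(\Sigma)$. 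So from now on I would assume $p, q \in I^+(\Sigma)$.

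The inclusion is immediate: $p \rightarrow q$ gives $p \leq q$, hence $S_p \subset S_q$ by Proposition~\ref{thm1}. For nonemptiness, note that $p \rightarrow q$ means exactly that $p \in \partial J^-(q)$, so Theorem~\ref{thm2} applied to $q$ yields a point $z \in edge S_q$ and a single null geodesic $\gamma$ realizing $z \rightarrow p \rightarrow q$. Since $z \leq p$, $z \in \Sigma$ and $z \not\ll p$, we have $z \in S_p - S_p^{\circ} = edge S_p$, so $z \in edge S_p \cap edge S_q$. (Alternatively, nonemptiness is immediate from Theorem~\ref{timelike}: were the intersection empty, $S_p \subset S_q$ would force $p \ll q$, contradicting $p \rightarrow q$.)

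It remains to show that any $x \in edge S_p \cap edge S_q$ equals $z$. By Lemma~\ref{edge-geodesic} there is a future-directed null geodesic $\alpha$ from $x$ to $p$, and also $x \rightarrow q$, that is, $x \not\ll q$. The crucial step is that $\alpha$ must fit together smoothly with the $p$-to-$q$ segment of $\gamma$ at $p$: otherwise the concatenation of $\alpha$ with that segment is a causal curve from $x$ to $q$ with a corner at $p$, which can be deformed into a timelike curve from $x$ to $q$ --- the device already used in the proof of Proposition~\ref{sp-unique} --- contradicting $x \rightarrow q$. Hence the concatenation is a single null geodesic that agrees with $\gamma$ beyond $p$, so it, and in particular $\alpha$, is a segment of the inextendible null geodesic extending $\gamma$; thus the past endpoint $x$ of $\alpha$ lies on that geodesic, as does $z$. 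Since $\Sigma$ is a spacelike, hence acausal, Cauchy surface, a null geodesic cannot meet $\Sigma$ in two distinct points (this acausality is already used tacitly in Proposition~\ref{sp-unique}); as $x, z \in \Sigma$, we conclude $x = z$, so $edge S_p \cap edge S_q$ has exactly one element.

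The main obstacle is the uniqueness step, and within it the observation that the horismos hypotheses $p \rightarrow q$ and $x \rightarrow q$ (the latter holding for every $x$ in the intersection by Lemma~\ref{edge-geodesic}) force all such $x$ onto the single null generator through $p$ and $q$; once this is in hand, acausality of the spacelike Cauchy surface finishes the argument at once. The inclusion and the nonemptiness are routine consequences of Proposition~\ref{thm1} and Theorem~\ref{thm2} (or Theorem~\ref{timelike}).
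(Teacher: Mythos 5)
Your proof is correct and follows essentially the same route as the paper: the inclusion from Proposition \ref{thm1}, nonemptiness by contraposition of Theorem \ref{timelike} (or directly from Theorem \ref{thm2}), and uniqueness via the corner-smoothing argument forcing every intersection point onto the single null generator through $p$ and $q$. Your version is in fact slightly more explicit than the paper's, which tacitly relies on the same facts you spell out --- that the generator is unique by uniqueness of geodesics and that it can meet the acausal surface $\Sigma$ only once.
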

\begin{proof}
It is obvious that $S_p \subset S_q$. If $edge S_p \cap edge S_q =
\emptyset$, then by Theorem \ref{timelike}, we have $p \ll q$, which
is a contradiction. Thus
 the set $edge S_p \cap edge S_q$ has at least one point. Let us
 assume that $edge S_p \cap edge S_q$ has two different points $x$
 and $y$. Then, $x \rightarrow p$ and $p \rightarrow q$ implies that
 $x \leq q$. On the other hand, $x \in edge S_q$ implies that $x
 \rightarrow q$. In other words, two null geodesics $x \rightarrow
 p$ and $p \rightarrow q$ constitute a single null geodesic.

 Since $y \in edge S_p$, we have $y \rightarrow p$. If we follow the
 null geodesic from $y$ to $p$ and then follow null geodesic from
 $p$ to $q$, we have $y \ll q$ by the uniqueness of geodesic. This
contradicts to the fact that $y \in edge S_q$. This contradiction
stems from the assumption that $edge S_p \cap edge S_q$ has more
than one point. Thus, $edge S_p \cap edge S_q$ must have exactly one
element.
\end{proof}

We now prove the converse of the above proposition.

\begin{proposition} \label{horis2}
Let $p$ and $q$ be in $J^+(\Sigma)$ such that $p \neq q$. If $S_p
\subset S_q$ and $edge S_p \cap edge S_q $ has only one element,
then $p \rightarrow q$.
\end{proposition}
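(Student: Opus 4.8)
The plan is to obtain this as an essentially immediate corollary of the characterizations already established, namely Theorem~\ref{causal} and Theorem~\ref{timelike}, so that no substantially new argument is needed; the heavy lifting has already been done in Theorem~\ref{thm2} and its consequences.

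First I would apply Theorem~\ref{causal}: since $p, q \in J^+(\Sigma)$ and $S_p \subset S_q$ by hypothesis, we conclude $p \leq q$. Recalling that horismos is defined as ``$p \leq q$ but not $p \ll q$,'' it therefore suffices to show that $p \ll q$ fails; combining that with $p \leq q$ will give exactly $p \rightarrow q$.

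To rule out $p \ll q$, I would argue by contradiction. Suppose $p \ll q$. Then the ``only if'' direction of Theorem~\ref{timelike} (that is, Proposition~\ref{thm3}) forces $edge S_p \cap edge S_q = \emptyset$. This contradicts the hypothesis that $edge S_p \cap edge S_q$ has exactly one element. Hence $p \not\ll q$, and together with $p \leq q$ we obtain $p \rightarrow q$, which completes the proof.

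I do not anticipate any genuine obstacle here. The one point worth a brief remark is that the hypothesis $p \neq q$ is not actually invoked by this argument; it is retained only to match Proposition~\ref{horis1}, of which this proposition is the converse. (In a globally hyperbolic, hence chronological, space-time one always has $p \not\ll p$, so the conclusion would in any case hold in the degenerate situation $p=q$.)
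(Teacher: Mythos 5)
Your proof is correct and follows essentially the same route as the paper: both deduce $p \leq q$ from Theorem~\ref{causal} and then rule out $p \ll q$ via the nonempty edge intersection. The only cosmetic difference is that you invoke Proposition~\ref{thm3} directly, whereas the paper re-runs that proposition's two-line argument inline (taking the point $x$ in $edge\, S_p \cap edge\, S_q$, noting $x \rightarrow q$ by Lemma~\ref{edge-geodesic}, and deriving $x \ll q$ from $p \ll q$); citing the already-proved result as you do is, if anything, tidier.
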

\begin{proof}
By Theorem \ref{causal}, $S_p \subset S_q$ implies $p \leq q$. Let
$x$ be the unique element in $edge S_p \cap edge S_q$. Then, we have
$x \rightarrow p$ and $x \rightarrow q$. If we assume that $p
\nrightarrow q$, then $p \ll q$ and thus we have $x \ll q$. This is
a contradiction to $x \rightarrow q$. Therefore we must have $p
\rightarrow q$.
\end{proof}

By combining the above two Propositions, we have the following
theorem.

\begin{thm} \label{horisthm}
Let $p$ and $q$ be two different points in $J^+(\Sigma)$.\\ Then $p
\rightarrow q$ if and only if $S_p \subset S_q$ and $edge S_p \cap
edge S_q$ has only one point.
\end{thm}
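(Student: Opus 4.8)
The plan is to prove Theorem~\ref{horisthm} simply by combining Proposition~\ref{horis1} and Proposition~\ref{horis2}, since these two propositions are precisely the two directions of the stated equivalence. The forward direction ``$p \rightarrow q$ implies $S_p \subset S_q$ and $edge S_p \cap edge S_q$ has exactly one point'' is the content of Proposition~\ref{horis1}, and the reverse direction ``$S_p \subset S_q$ together with $edge S_p \cap edge S_q$ being a singleton implies $p \rightarrow q$'' is exactly Proposition~\ref{horis2}. So the proof itself is essentially one line invoking these two results.

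First I would state that the ``only if'' direction follows at once from Proposition~\ref{horis1}: if $p$ and $q$ are distinct points of $J^+(\Sigma)$ with $p \rightarrow q$, then by that proposition $S_p \subset S_q$ and $edge S_p \cap edge S_q$ consists of a single point. Then I would state that the ``if'' direction is Proposition~\ref{horis2}: if $p \neq q$ lie in $J^+(\Sigma)$, $S_p \subset S_q$, and $edge S_p \cap edge S_q$ has exactly one element, then $p \rightarrow q$. Combining the two gives the biconditional, and the proof is complete.

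There is no genuine obstacle here, since all the work has already been done in the two preceding propositions, which in turn rest on Theorem~\ref{thm2} (the key structural fact that a past-extended generating null geodesic of $\partial J^-(p)$ must reach $edge S_p$ when $\Sigma$ is non-compact), on Theorem~\ref{causal}, and on Theorem~\ref{timelike}. The only thing to be careful about is the hypothesis $p \neq q$: the characterization of horismos genuinely requires distinctness, since $p \rightarrow p$ is false while $S_p \subset S_p$ trivially holds. I would therefore make sure the theorem statement and both invoked propositions carry the assumption that $p$ and $q$ are two \emph{different} points of $J^+(\Sigma)$, which they do.

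Concretely, the proof reads as follows.

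\begin{proof}
The ``only if" part is Proposition~\ref{horis1}, and the ``if" part is Proposition~\ref{horis2}.
\end{proof}
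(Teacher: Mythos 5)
Your proposal is correct and matches the paper exactly: the paper states Theorem~\ref{horisthm} with the one-line remark ``By combining the above two Propositions, we have the following theorem,'' which is precisely your argument. Your additional care about the hypothesis $p \neq q$ being present in both invoked propositions is sound and consistent with the paper.
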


\section{Functions preserving causally admissible system } \label{section: 5}

In the previous sections, we have defined causally admissible slices
in Cauchy surface $\Sigma$ and investigated its properties in
connection with causal properties of $M$. In this section, we show
how the future admissible slices determine the structure of
$I^+(\Sigma)$. Since $I^+(\Sigma)$ is open in $M$, $I^+(\Sigma)$ is
a space-time in its own right.

Let $\mathcal{C}^+$ be the set of all future admissible slices of
$M$ with respect to $\Sigma$. That is to say, $\mathcal{C}^+ = \{
S_p = J^-(p) \cap \Sigma \,\, | \,\, p \in J^+(\Sigma) \}$ and we
call it future admissible system.

In the following, we review some known results and we apply the
results in previous sections to get new results.

\begin{definition}
A bijective function $f : M \rightarrow M^{\prime} $ is called a
causal isomorphism if $p \leq q$ $\Leftrightarrow$ $f(p) \leq f(q)$
and a chronological isomorphism if $p \ll q$ $\Leftrightarrow$ $f(p)
\ll f(q)$. If there exists a causal isomorphism (chronological
isomorphism, resp.) between $M$ and $M^\prime$, then we say that $M$
and $M^\prime$ are causally isomorphic (chronologically isomorphic,
resp.).
\end{definition}

The followings are well-known facts and can be found in Ref.
\cite{Malament}, \cite{Fullwood}, \cite{HKM}.

\begin{thm} \label{known}
Let $f : M \rightarrow M^\prime $ be a bijection between two
chronological space-times. Then the followings hold. \\
(i) $f$ is a causal isomorphism if and only if $f$ is a
chronological isomorphism. \\
(ii) If $f$ is a causal isomorphism, then $f$ is a homeomorphism. \\
(iii) If $f$ is a causal isomorphism, then $f$ is a smooth
conformal diffeomorphism.\\
\end{thm}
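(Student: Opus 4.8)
The plan is to treat parts (i)--(iii) as the combined content of the theorems of Hawking--King--McCarthy and of Malament: first prove (i), that the relations $\ll$ and $\leq$ determine one another on a chronological space-time; then derive (ii), that a causal isomorphism is a homeomorphism; and finally bootstrap (iii), reconstructing the conformal and differentiable structure from the null geodesics.

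For (i), note that $p \ll q$ always implies $p \leq q$, so only the mutual reconstruction of the two relations is at issue. To recover $\leq$ from $\ll$ I would use that, in a chronological (indeed distinguishing) space-time, $p \leq q$ holds exactly when $I^{+}(q) \subseteq I^{+}(p)$ and $I^{-}(p) \subseteq I^{-}(q)$: the ``only if'' direction is immediate from the elementary facts that $x \leq y \ll z$ and $x \ll y \leq z$ both force $x \ll z$, while the ``if'' direction follows by taking a future-directed timelike sequence $q_{i} \to q$, noting $q_{i} \in I^{+}(p)$, and extracting a limit causal curve from $p$ to $q$ (here the distinguishing/causality hypotheses are what keep the limit curve from being imprisoned or degenerate). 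A bijection preserving $\ll$ in both directions carries $I^{\pm}(p)$ onto $I^{\pm}(f(p))$, hence preserves these inclusions, hence is a causal isomorphism. For the converse one uses the dual description that $p \ll q$ iff $p \leq q$ and $q$ lies in the manifold interior of $J^{+}(p)$; since a causal isomorphism is a homeomorphism (cf.\ (ii)), it preserves interiors of $J^{+}$, and so preserves $\ll$.

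For (ii): granting (i), $f$ and $f^{-1}$ preserve $\ll$, so $f$ maps each Alexandrov set $I^{+}(p) \cap I^{-}(q)$ bijectively onto $I^{+}(f(p)) \cap I^{-}(f(q))$ and is therefore a homeomorphism for the Alexandrov topologies of $M$ and $M'$. When both space-times are strongly causal this already finishes the argument, since by the standard theorem the Alexandrov topology then coincides with the manifold topology; for merely chronological space-times one instead appeals to Malament's result that the family of continuous timelike curves --- equivalently the relation $\ll$ --- determines the manifold topology.

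The real work is (iii). A causal isomorphism is a homeomorphism by (ii), and, being simultaneously a causal and (by (i)) a chronological isomorphism, it also preserves horismos: $p \to q \Leftrightarrow f(p) \to f(q)$. Using that horismos is realized by a null geodesic without conjugate points, I would then give an intrinsic, order-theoretic description of the relation ``$r$ lies on the null geodesic segment from $p$ to $q$'' --- as a maximal-chain/generator condition inside the horismos relation, i.e.\ that $r$ generates $\partial J^{+}(p)$ between $p$ and $q$ --- and conclude that $f$ carries the unparametrized null geodesics of $M$ bijectively onto those of $M'$. A homeomorphism taking null geodesics to null geodesics preserves the conformal (null-cone) structure, and the last step is the regularity argument of Hawking--King--McCarthy that upgrades such a map from continuous to smooth; once $f$ is a smooth diffeomorphism preserving the conformal class one has $f^{*}g' = \Omega^{2} g$ for a positive smooth $\Omega$, which is exactly the assertion of (iii). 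The main obstacle is precisely this last bundle of facts: extracting a differentiable structure from data that is a priori only topological-plus-causal, and then proving smoothness of $f$; this is the technical heart of the cited works, and the one place where I would invoke them rather than reprove everything from scratch.
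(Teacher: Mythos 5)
The paper does not actually prove this theorem: its entire ``proof'' is the sentence preceding the statement, which cites Malament, Fullwood and Hawking--King--McCarthy. Your decision to sketch the classical arguments and defer the analytic core of (iii) (extracting the differentiable structure and upgrading $f$ to a smooth conformal map) to those references is therefore consistent with what the paper itself does. However, your sketches of (i) and (ii) contain genuine errors that the cited papers are careful to avoid.

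First, the characterization you use to recover $\leq$ from $\ll$ --- that $p \leq q$ iff $I^+(q) \subseteq I^+(p)$ and $I^-(p) \subseteq I^-(q)$ --- is false even in strongly causal space-times. In two-dimensional Minkowski space with the origin deleted, take $p=(-1,-1)$ and $q=(1,1)$: both inclusions hold (no future-directed timelike curve from $p$, and no past-directed timelike curve from $q$, passes through the deleted point, so the sets $I^{\pm}$ are unchanged), yet $p \not\leq q$, because the only candidate causal curve is the null geodesic through the deleted origin. Your limit-curve argument breaks exactly here: without compactness of $J^+(p)\cap J^-(q)$, i.e.\ without global hyperbolicity, the limit curve emanating from $p$ need not reach $q$. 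The relation defined by those two inclusions coincides with $\leq$ only under a reflectivity hypothesis (causal continuity), not under mere chronology or distinguishability. Second, your argument is circular: the converse half of (i) (causal isomorphism $\Rightarrow$ chronological isomorphism) invokes (ii) (``since a causal isomorphism is a homeomorphism''), while your proof of (ii) begins ``granting (i)''. Third, in (ii) the identification of the Alexandrov topology with the manifold topology requires strong causality; for merely chronological space-times the Alexandrov topology can be strictly coarser, and Malament's theorem itself requires the space-times to be past and future distinguishing, not just chronological --- so the statement as given is already stronger than what the references supply. None of this damages the paper, since the theorem is only ever applied there to globally hyperbolic space-times, but as a proof of the statement as written your proposal does not go through.
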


Let $M$ and $M^\prime$ be globally hyperbolic space-times with
non-compact Cauchy surfaces $\Sigma$ and $\Sigma^\prime$,
respectively. We assume that $\Sigma$ and $\Sigma^\prime$ be given
the corresponding future admissible systems $\mathcal{C}^+$ and
$\mathcal{C}^{\prime +}$, respectively, and we denote these by
$(\Sigma, \mathcal{C}^+)$ and $(\Sigma^\prime, \mathcal{C}^{\prime
+})$.

In view of Theorem \ref{known}, for two space-times $I^+(\Sigma)$
and $I^+(\Sigma^\prime)$ to be causally isomorphic, it is natural to
expect that there be a globally defined causal isomorphism between
$I^+(\Sigma)$ and $I^+(\Sigma^\prime)$. However, as the following
theorem shows, we only need a bijection between $\Sigma$ and
$\Sigma^\prime$ which preserves the structures of $\mathcal{C}^+$
and $\mathcal{C}^{\prime +}$.

\begin{thm} \label{subcentral1}
If $f : (\Sigma, \mathcal{C}^+) \rightarrow (\Sigma^\prime,
\mathcal{C}^{\prime +})$ is a bijection that satisfies the following
two conditions, then $I^+(\Sigma)$ and $I^+(\Sigma^\prime)$ are
causally isomorphic. \\
(i) For any $S \in \mathcal{C}^+$, $f(S) \in \mathcal{C}^{\prime +}$
and for any $S^\prime \in \mathcal{C}^{\prime +}$, there exists $S
\in \mathcal{C}^+$ such that $f(S) = S^\prime$.\\
(ii) For any $S_1$, $S_2$ in $\mathcal{C}^+$, we have $S_1 \subset
S_2$ if
and only if $f(S_1) \subset f(S_2)$.\\
\end{thm}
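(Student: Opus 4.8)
The plan is to build the causal isomorphism between $I^+(\Sigma)$ and $I^+(\Sigma')$ out of the bijection $f$ by translating everything through the correspondence $p \mapsto S_p$. First I would record the key bookkeeping fact from Section 3: by Proposition \ref{sp-unique}, the map $p \mapsto S_p = J^-(p) \cap \Sigma$ is a bijection from $J^+(\Sigma)$ onto $\mathcal{C}^+$. Restricting attention to $I^+(\Sigma) \subset J^+(\Sigma)$, I would note that $p \in I^+(\Sigma)$ iff $edge S_p \neq \Sigma$ and, more to the point, the Corollary after Proposition \ref{coro} guarantees $edge S_p \neq \emptyset$ for $p \in I^+(\Sigma)$; in any case the subclass of $\mathcal{C}^+$ coming from $I^+(\Sigma)$ is intrinsically characterized (e.g.\ those $S_p$ for which there is no $S_q \subsetneq S_p$ with... — one should pin down one such characterization, but it is not the crux). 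Call this subclass $\mathcal{C}^+_I$. Then I would define $F : I^+(\Sigma) \to M'$ by: given $p \in I^+(\Sigma)$, take $S_p \in \mathcal{C}^+_I$, apply $f$ to get $f(S_p) \in \mathcal{C}^{\prime+}$ (by condition (i)), and let $F(p)$ be the unique representative point $q' \in J^+(\Sigma')$ with $S_{q'}' = f(S_p)$, which exists and is unique by Proposition \ref{sp-unique} applied in $M'$.

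Next I would check $F$ is a well-defined bijection from $I^+(\Sigma)$ onto $I^+(\Sigma')$. Well-definedness and injectivity are immediate from the three bijections ($p \leftrightarrow S_p$, $f$, $S_{q'}' \leftrightarrow q'$) being composed. Surjectivity onto $J^+(\Sigma')$ uses the second half of condition (i); the remaining point is that $F$ carries $I^+(\Sigma)$ exactly onto $I^+(\Sigma')$, and here I would lean on condition (ii): since $f$ preserves inclusions in both directions, it preserves the edge-intersection data that Theorems \ref{causal}, \ref{timelike}, \ref{horisthm} show is equivalent to the causal, chronological, and horismos relations. In particular the intrinsic characterization of $\mathcal{C}^+_I$ inside $\mathcal{C}^+$ — whatever order-theoretic form it takes — is an inclusion-theoretic statement, hence preserved by $f$, so $F(I^+(\Sigma)) = I^+(\Sigma')$.

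Then comes the heart: showing $F$ is a causal isomorphism, i.e.\ $p \leq q \iff F(p) \leq F(q)$ for $p,q \in I^+(\Sigma)$. By Theorem \ref{causal}, $p \leq q \iff S_p \subset S_q$. By condition (ii), $S_p \subset S_q \iff f(S_p) \subset f(S_q)$, i.e.\ $S'_{F(p)} \subset S'_{F(q)}$. Applying Theorem \ref{causal} in $M'$ (valid since $M'$ is globally hyperbolic with non-compact Cauchy surface $\Sigma'$), this holds iff $F(p) \leq F(q)$. Chaining these equivalences gives exactly the causal isomorphism property. One should also remark that $F$ automatically preserves $\ll$ and $\to$ by the same argument applied to Theorems \ref{timelike} and \ref{horisthm}, since $f$ preserves the ``$edge S_p \cap edge S_q = \emptyset$'' and ``has exactly one point'' conditions — but this is subsumed once we have the causal isomorphism, by Theorem \ref{known}(i), since $I^+(\Sigma)$ and $I^+(\Sigma')$ are open in globally hyperbolic space-times, hence chronological.

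The main obstacle I expect is not the formal chaining of equivalences but two subsidiary points that must be handled carefully. The first is giving a genuinely intrinsic (purely order-theoretic, phrased only via $\subset$ on $\mathcal{C}^+$) description of which slices $S_p$ arise from $p \in I^+(\Sigma)$ as opposed to $p \in \Sigma$ or $p$ on the boundary, so that $F$ provably maps $I^+(\Sigma)$ onto $I^+(\Sigma')$ and not merely into $J^+(\Sigma')$; the clean route is to observe $p \in I^+(\Sigma)$ iff $S_p$ has nonempty interior in the sense $S_p^\circ \neq \emptyset$, and then translate ``$S_p^\circ \neq \emptyset$'' into inclusion language using $S_p = \overline{S_p^\circ}$ and the fact that $S_p^\circ = S_p - edge S_p$ together with Theorems \ref{timelike}/\ref{horisthm}. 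The second is making sure the edge-disjointness and one-point-edge-intersection conditions really are recoverable from $f$'s behaviour on inclusions alone — but condition (ii) is an ``if and only if'' on inclusions for \emph{all} pairs in $\mathcal{C}^+$, and $edge S_p$, being the boundary of $S_p$ in $\Sigma$ (Proposition \ref{coro}), is determined by the inclusion lattice, so this goes through. Once these are in place, the conclusion follows by invoking Theorem \ref{causal} in both $M$ and $M'$ and transporting inclusions across $f$ via condition (ii).
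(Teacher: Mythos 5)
Your core argument is the paper's own: define $F$ (the paper's $f^+$) by $S'_{F(p)}=f(S_p)$, get well-definedness and injectivity from Proposition \ref{sp-unique}, surjectivity from condition (i), and the equivalence $p\leq q \Leftrightarrow F(p)\leq F(q)$ by chaining Theorem \ref{causal} in $M$, condition (ii), and Theorem \ref{causal} in $M'$. That chain is exactly right and is the heart of the matter.

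The one place you diverge --- and leave a loose end --- is in showing $F(I^+(\Sigma))=I^+(\Sigma')$. You look for an intrinsic, purely order-theoretic characterization of the subclass $\mathcal{C}^+_I\subset\mathcal{C}^+$ of slices coming from $I^+(\Sigma)$, propose ``$S_p^\circ\neq\emptyset$'' as the criterion, and assert it can be translated into inclusion language without carrying out the translation; as written this step is not complete. The paper avoids the issue entirely: define $f^+$ on all of $J^+(\Sigma)$, observe that for $x\in\Sigma$ one has $S_x=\{x\}$ so that $f^+$ restricted to $\Sigma$ coincides with $f$, and then use that $f^+:J^+(\Sigma)\to J^+(\Sigma')$ and $f:\Sigma\to\Sigma'$ are both bijections together with $J^+(\Sigma)-\Sigma=I^+(\Sigma)$ (for a spacelike Cauchy surface) to conclude $f^+(I^+(\Sigma))=I^+(\Sigma')$ by taking set differences. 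If you do want the intrinsic route, the clean statement is that the singletons are exactly the minimal elements of $(\mathcal{C}^+,\subset)$, hence condition (ii) forces $f$ to match them up; but the set-difference argument makes this unnecessary. Your closing remark that $\ll$ and $\rightarrow$ come for free from Theorem \ref{known}(i) once $\leq$ is preserved is correct and matches the paper's use of that theorem in the subsequent corollary.
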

\begin{proof}
We show that any function $f$ that satisfies the two conditions can
be extended to a globally defined causal isomorphism not only
between $I^+(\Sigma)$ and $I^+(\Sigma^\prime)$ but between
$J^+(\Sigma)$ and $J^+(\Sigma^\prime)$.

Define $f^+ : J^+(\Sigma) \rightarrow J^+(\Sigma^\prime)$ as
follows. For given $p \in J^+(\Sigma)$, we have $S_p = J^-(p) \cap
\Sigma \,\,$ in $\,\, \mathcal{C}^+$. Then, by condition (i),
$f(S_p) \in \mathcal{C}^{\prime +}$. By definition of
$\mathcal{C}^{\prime +}$, there exists a unique point $p^\prime \in
J^+(\Sigma^\prime)$ such that $S_{p^\prime} = f(S_p)$ by Proposition
\ref{sp-unique}. Define $f^+ : J^+(\Sigma) \rightarrow
J^+(\Sigma^\prime)$ by letting $f^+(p) = p^\prime$. i.e. $S_{f^+(p)}
= f(S_p)$. Then by Proposition \ref{sp-unique}, $f^+$ is
well-defined and, since $f = f^+$ on $\Sigma$, $f^+$ is an extension
of $f$. We now show that $f^+$ is a causal isomorphism between
$J^+(\Sigma)$
and $J^+(\Sigma^\prime)$.\\

(i) $f^+$ is injective. :\\
If $f^+(p)=f^+(q)$, then $f(S_p) = f(S_q)$ by the property that
$S_{f^+(p)} = f(S_p)$. Since $f$ is a bijection, we have $S_p = S_q$
and thus $p = q$ by Proposition \ref{sp-unique}.\\

(ii) $f^+$ is surjective. : \\
For any $p^\prime \in J^+(\Sigma^\prime)$, we have $S_{p^\prime} \in
\mathcal{C}^{\prime +}$. By the condition (i), there exists $S_p \in
\mathcal{C}^+$ such that $f(S_p)=S_{p^\prime}$. Then by
definition of $f^+$, we have $f^+(p)=p^\prime$.\\

(iii) $f^+$ is a causal isomorphism. : \\
Let $p \leq q$ in $J^+(\Sigma)$. Then by Theorem \ref{causal}, we
have $S_p \subset S_q$. By the condition (ii), we have $f(S_p)
\subset f(S_q)$ and thus $S_{f^+(p)} \subset S_{f^+(q)}$. Then, by
Theorem \ref{causal}, $f^+(p) \leq f^+(q)$. Conversely, let us
assume that $f^+(p) \leq f^+(q)$ in $J^+(\Sigma^\prime)$. Then by
definition of $f^+$ and Theorem \ref{causal}, we have $S_{f^+(p)}
\subset S_{f^+(q)}$. By the condition (ii), $f(S_p) \subset f(S_q)$
implies that $S_p \subset S_q$. Then we have $p \leq q$ by
Theorem \ref{causal}.\\

(iv) $f^+(I^+(\Sigma)) = I^+(\Sigma^\prime)$. : \\
Since $\Sigma$ is a spacelike Cauchy surface, we have
$J^+(\Sigma)-\Sigma = I^+(\Sigma)$. Furthermore, since $f^+ :
J^+(\Sigma) \rightarrow J^+(\Sigma^\prime)$ is an extension of $f :
\Sigma \rightarrow \Sigma^\prime$ and both are bijections, we have
$f^+(I^+(\Sigma))=I^+(\Sigma^\prime)$.\\

This completes the proof.
\end{proof}

Since the above function $f$ which satisfies the given two
conditions gives us the simple criterion to causal isomorphism, we
give it the following definition.

\begin{definition}
If a bijection $f : \Sigma \rightarrow \Sigma^\prime$ between two
 non-compact Cauchy surfaces satisfies the above two conditions, then
we call $f$ a future admissible function and we denote it by $(M,
\Sigma, \mathcal{C}^+) \stackrel {f}{\cong} (M^\prime,
\Sigma^\prime, \mathcal{C}^{\prime +})$
\end{definition}

In Section \ref{section: 4}, we have seen the equivalent conditions
for two points to be chronologically, causally, and horismos related
in terms of their future admissible slices. At first glance, it
seems to be that the relations are independent. However, Theorem
\ref{known} and Theorem \ref{subcentral1} tell us the following.

\begin{corollary}
If $f : (\Sigma, \mathcal{C}^+) \rightarrow (\Sigma^\prime,
\mathcal{C}^{\prime +})$ is a future admissible function, then the
followings hold.\\
(i) If $S_1$ and $S_2$ in $\mathcal{C}^+$ are such that $S_1 \subset
S_2$ and $edge S_1 \cap edge S_2 = \emptyset$, then $f(S_1) \cap
f(S_2) = \emptyset$.\\
(ii) If $S_1$ and $S_2$ in $\mathcal{C}^+$ are such that $S_1
\subset S_2$ and $edge S_1 \cap edge S_2$ has one element, then
$edge f(S_1) \cap edge f(S_2)$ has one element.
\end{corollary}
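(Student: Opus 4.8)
The plan is to deduce both items from the causal isomorphism produced in Theorem \ref{subcentral1}, by transporting along it the dictionary of Section \ref{section: 4} that translates the causal hierarchy on $J^{+}(\Sigma)$ into inclusion-and-edge information about future admissible slices. Recall from the proof of Theorem \ref{subcentral1} that a future admissible function $f$ extends to a bijection $f^{+}\colon J^{+}(\Sigma)\to J^{+}(\Sigma^{\prime})$ with $S_{f^{+}(p)}=f(S_{p})$ for all $p$ (well defined by Proposition \ref{sp-unique}), that $f^{+}$ is a causal isomorphism, and that $f^{+}\big(I^{+}(\Sigma)\big)=I^{+}(\Sigma^{\prime})$. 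The first step is to note that $f^{+}$ also preserves chronology. On the open sets $I^{+}(\Sigma)$ and $I^{+}(\Sigma^{\prime})$, which are space-times and are chronological as open subsets of globally hyperbolic space-times, $f^{+}$ restricts to a causal isomorphism, so Theorem \ref{known}(i) makes it a chronological isomorphism there. For $p\in\Sigma$ with $p\ll q$, achronality of $\Sigma$ forces $q\in I^{+}(\Sigma)$; choosing $r\in I^{+}(\Sigma)$ on the connecting timelike curve with $p\ll r\ll q$ gives $f^{+}(p)\le f^{+}(r)$ and $f^{+}(r)\ll f^{+}(q)$, hence $f^{+}(p)\ll f^{+}(q)$ by the transitivity rule $x\le y\ll z\Rightarrow x\ll z$. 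Thus $f^{+}$ preserves $\ll$, $\le$, equality, and therefore the horismos relation $\to$, throughout $J^{+}(\Sigma)$.

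Granting this, item (i) --- which I read with $edge$ on both sides, in parallel with (ii), since $f(S_1)\cap f(S_2)=f(S_1)\neq\emptyset$ whenever $S_1\subset S_2$ --- is short. Write $S_{1}=S_{p}$ and $S_{2}=S_{q}$ for the (unique) representative points. The hypotheses $S_{p}\subset S_{q}$ and $edge S_{p}\cap edge S_{q}=\emptyset$ are exactly the criterion of Theorem \ref{timelike}, so $p\ll q$; hence $f^{+}(p)\ll f^{+}(q)$, and Proposition \ref{thm3} applied in $M^{\prime}$ gives $edge S_{f^{+}(p)}\cap edge S_{f^{+}(q)}=\emptyset$. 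Since $S_{f^{+}(p)}=f(S_{1})$ and $S_{f^{+}(q)}=f(S_{2})$, this is precisely $edge f(S_{1})\cap edge f(S_{2})=\emptyset$.

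Item (ii) splits into a trivial case and the main case. If $S_{1}=S_{2}$ then $f(S_{1})=f(S_{2})$ and $edge f(S_{1})\cap edge f(S_{2})=edge f(S_{1})$; because $\Sigma$ is non-compact, a common slice with a one-point edge cannot have interior in $\Sigma$ (a proper compact subset of a connected manifold with nonempty interior has at least two boundary points), so its representative $p$ lies on $\Sigma$ and $S_{1}=\{p\}$, whence $f(S_{1})=\{f(p)\}$ and $edge f(S_{1})=\{f(p)\}$, a single point. Otherwise $S_{1}\subsetneq S_{2}$, the representative points satisfy $p\neq q$ (so $f^{+}(p)\neq f^{+}(q)$ by injectivity of $f^{+}$), and the hypotheses are exactly the criterion of Theorem \ref{horisthm}, giving $p\to q$; hence $f^{+}(p)\to f^{+}(q)$, and Proposition \ref{horis1} applied in $M^{\prime}$ says $edge S_{f^{+}(p)}\cap edge S_{f^{+}(q)}=edge f(S_{1})\cap edge f(S_{2})$ has exactly one element.

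I do not expect a genuine obstacle here: all the hard analysis was done in Sections \ref{section: 3}--\ref{section: 5}, and the corollary is essentially the observation that the causal isomorphism $f^{+}$ of Theorem \ref{subcentral1} is automatically compatible with the whole causal hierarchy, not merely with $\le$. The only points that deserve a careful line are the upgrade from causal to chronological isomorphism at boundary points of $\Sigma$ (handled above by interposing an interior point) and the bookkeeping identity $edge S_{f^{+}(p)}=edge f(S_{p})$ that follows directly from $S_{f^{+}(p)}=f(S_{p})$; neither is substantial.
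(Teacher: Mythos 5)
Your proof is correct and follows essentially the same route as the paper's: extend $f$ to the causal isomorphism $f^{+}$ of Theorem \ref{subcentral1}, upgrade it to a chronological (hence horismos-preserving) isomorphism via Theorem \ref{known}, and translate back through Theorem \ref{timelike}/Proposition \ref{thm3} and Theorem \ref{horisthm}/Proposition \ref{horis1}. You additionally patch two small points the paper glosses over --- the evident typo in the conclusion of item (i) and the degenerate case $S_{1}=S_{2}$ in item (ii), where Theorem \ref{horisthm} requires distinct representative points --- but these refinements do not change the argument.
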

\begin{proof}
Since $f$ is a future admissible function, the induced function $f^+
: I^+(\Sigma) \rightarrow I^+(\Sigma^\prime)$ is a causal
isomorphism. By Theorem \ref{known}, $f^+$ is also a chronological
isomorphism. Thus, if two future admissible slices $S_1$ and $S_2$
are such that $S_1 \subset S_2$ and $edge S_1 \cap edge S_2 =
\emptyset$, then their representative points are chronologically
related. Since $f^+$ is a chronological isomorphism, the image of
the representative points under $f^+$ must be also chronologically
related. Therefore the result follows from Theorem \ref{timelike}.
The second part can be proved in a similar way by Theorem
\ref{horisthm}.
\end{proof}

Although we have presented the materials in terms of future
admissible slices and future admissible functions, the same results
hold for past admissible slices and past admissible functions. For
the sake of completeness, we state the terminology and its
properties in the followings.

To distinguish future admissible slices and past admissible slices,
we denote future admissible slice by $S_p^+$ and past admissible
slice by $S_p^-$. To investigate the causal structure of
$J^-(\Sigma)$, we define $\mathcal{C}^-$ to be the set of all past
admissible slices of $M$ with respect to $\Sigma$. \\
i.e.$\mathcal{C}^- = \{ S_p^- = J^+(p) \cap \Sigma \,\, | \,\, p \in
J^-(\Sigma) \}$.

\begin{thm} \label{dual}
Let $p$ and $q$ be in $J^-(\Sigma)$. Then the followings hold.\\
(i) $p \leq q$ if and only if $S_q^- \subset S_p^-$. \\
(ii) $p \ll q$ if and only if $S_q^- \subset S_p^-$ and $edge S_p^-
\cap edge S_q^- = \emptyset$. \\
(iii) $p \rightarrow q$ if and only if $S_q^- \subset S_p^-$ and
$edge S_p^- \cap edge S_q^- $ has only one element.\\
\end{thm}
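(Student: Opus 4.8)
The plan is to obtain Theorem~\ref{dual} from the future-admissible results of Section~\ref{section: 4} by applying the time-duality of the set-up. The key observation is that all the structures involved are symmetric under reversal of time orientation: a space-time $M$ with time orientation reversed, call it $M^{*}$, is again globally hyperbolic with the same Cauchy surface $\Sigma$ (which is still spacelike and achronal, since achronality does not depend on the choice of time orientation). Under this reversal the chronological and causal relations swap directions, so $p \leq q$ in $M$ becomes $q \leq p$ in $M^{*}$, and $p \ll q$ in $M$ becomes $q \ll p$ in $M^{*}$; the horismos relation $p \rightarrow q$ similarly becomes $q \rightarrow p$. Crucially, $J^{+}_{M^{*}}(x) = J^{-}_{M}(x)$ and $I^{+}_{M^{*}}(x) = I^{-}_{M}(x)$, so a point $p \in J^{-}_{M}(\Sigma)$ is a point in $J^{+}_{M^{*}}(\Sigma)$, and its future admissible slice in $M^{*}$ is exactly $J^{-}_{M^{*}}(p) \cap \Sigma = J^{+}_{M}(p) \cap \Sigma = S_{p}^{-}$. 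The edge of a subset of $\Sigma$ is defined purely in terms of the manifold topology of $M$ and achronal boundaries (Proposition~\ref{nbd-edge}, Proposition~\ref{coro}), and hence is insensitive to the time orientation.

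With this dictionary in place, each statement of Theorem~\ref{dual} is a verbatim translation of a theorem already proved for the future case. Concretely, I would argue as follows. For (i): by Theorem~\ref{causal} applied in $M^{*}$, for $p, q \in J^{+}_{M^{*}}(\Sigma) = J^{-}_{M}(\Sigma)$ we have $p \leq_{M^{*}} q$ if and only if $S^{-}_{p} \subset S^{-}_{q}$, where here $S^{-}_{p}$ denotes the future admissible slice of $p$ computed in $M^{*}$, which we have just identified with the past admissible slice of $p$ in $M$. Now translate $p \leq_{M^{*}} q$ back to $M$: it means $q \leq_{M} p$. Hence $q \leq_{M} p$ if and only if $S^{-}_{p} \subset S^{-}_{q}$, which, after the harmless renaming $p \leftrightarrow q$, is precisely statement (i). Statements (ii) and (iii) follow in exactly the same manner from Theorem~\ref{timelike} and Theorem~\ref{horisthm} respectively, noting that the edge condition $edge\, S_{1} \cap edge\, S_{2} = \emptyset$ (resp.\ ``has only one element'') is unchanged under the orientation reversal because, as noted, edges depend only on the manifold topology.

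The main step requiring care — and really the only non-formal point — is verifying that $M^{*}$ genuinely falls within the hypotheses under which Sections~\ref{section: 3} and~\ref{section: 4} operate: namely that $M^{*}$ is a globally hyperbolic space-time with a \emph{non-compact} Cauchy surface. Time orientability is preserved by assumption, strong causality and the compactness of $J^{+}(p) \cap J^{-}(q)$ are manifestly orientation-independent, and $\Sigma$ remains a Cauchy surface of $M^{*}$ because an inextendible timelike curve in $M^{*}$ is just an inextendible timelike curve in $M$ traversed backwards, so it still meets $\Sigma$ exactly once. Non-compactness of $\Sigma$ is a statement about $\Sigma$ alone and is untouched. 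Once this is recorded, the proof is the three-line translation argument above, and it is enough to write: ``Apply Theorem~\ref{causal}, Theorem~\ref{timelike}, and Theorem~\ref{horisthm} to the space-time $M$ equipped with the reversed time orientation, for which $S^{-}_{p}$ is the future admissible slice of $p$, and re-express the resulting causal, chronological, and horismos relations in terms of the original time orientation of $M$.''
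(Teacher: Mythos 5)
Your proposal is correct and matches the paper's approach: the paper gives no separate proof of Theorem~\ref{dual}, simply asserting that ``the same results hold for past admissible slices,'' which is precisely the time-duality you spell out. Your explicit verification that the time-reversed space-time $M^{*}$ satisfies all the standing hypotheses (global hyperbolicity, non-compact Cauchy surface, and the identification of $S_p^-$ with the future admissible slice in $M^{*}$, with $edge$ unchanged) is a careful formalization of what the paper leaves implicit.
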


\begin{definition}
If a bijection $f : (\Sigma, \mathcal{C}^-) \rightarrow
(\Sigma^\prime, \mathcal{C}^{\prime -})$ between two non-compact
Cauchy surfaces satisfies the following two conditions,
then we say that $f$ is a past admissible function.\\
(i) For $S \in \mathcal{C}^-$, $f(S) \in \mathcal{C}^{\prime -}$ and
for any $S^\prime \in \mathcal{C}^{\prime -}$, there exists $S \in
\mathcal{C}^-$ such that $f(S) = S^\prime$.\\
(ii) For any $S_1$, $S_2$ in $\mathcal{C}^-$, we have $S_1 \subset
S_2$ if and only if $f(S_1) \subset f(S_2)$.\\
\end{definition}

We now state the dual form of Theorem \ref{subcentral1}, of which
the proof can be completed by the similar manner.

\begin{thm} \label{subcentral2}
If $f : (\Sigma, \mathcal{C}^-) \rightarrow (\Sigma^\prime,
\mathcal{C}^{\prime -})$ is a past admissible function, then
$I^-(\Sigma)$ and $I^-(\Sigma^\prime)$ are causally isomorphic.
\end{thm}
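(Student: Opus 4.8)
The plan is to mirror the proof of Theorem~\ref{subcentral1} verbatim, interchanging the roles of future and past throughout. The whole engine behind Theorem~\ref{subcentral1} is: (a) the representative point of an admissible slice is unique, which for past admissible slices is the ``past'' analogue of Proposition~\ref{sp-unique}; (b) the order-theoretic characterization of $\leq$ by inclusion of slices, which for the past side is part~(i) of Theorem~\ref{dual}; and (c) the identification $I^-(\Sigma) = J^-(\Sigma) - \Sigma$ valid because $\Sigma$ is a spacelike Cauchy surface. All three ingredients are available, so the argument is essentially a dictionary translation.

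Concretely, first I would define $f^- : J^-(\Sigma) \rightarrow J^-(\Sigma^\prime)$ by sending $p$ to the unique point $p^\prime$ with $S^-_{p^\prime} = f(S^-_p)$; uniqueness of $p^\prime$ follows from the past analogue of Proposition~\ref{sp-unique} (which the paper has already asserted holds for past admissible slices), and well-definedness uses condition~(i) in the definition of a past admissible function. Since $f = f^-$ on $\Sigma$, the map $f^-$ extends $f$. Next I would check injectivity (if $f^-(p) = f^-(q)$ then $f(S^-_p) = f(S^-_q)$, so $S^-_p = S^-_q$ by bijectivity of $f$, hence $p = q$) and surjectivity (any $p^\prime \in J^-(\Sigma^\prime)$ has $S^-_{p^\prime} \in \mathcal{C}^{\prime -}$, which by condition~(i) equals $f(S^-_p)$ for some $p$, so $f^-(p) = p^\prime$). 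These are the same two paragraphs as in Theorem~\ref{subcentral1}, verbatim up to decorations.

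For the causal isomorphism property, given $p \leq q$ in $J^-(\Sigma)$, part~(i) of Theorem~\ref{dual} gives $S^-_q \subset S^-_p$; condition~(ii) of the definition of past admissible function then yields $f(S^-_q) \subset f(S^-_p)$, i.e. $S^-_{f^-(q)} \subset S^-_{f^-(p)}$, and Theorem~\ref{dual}(i) again gives $f^-(p) \leq f^-(q)$. The converse runs the same chain of equivalences backwards. Finally, because $\Sigma$ and $\Sigma^\prime$ are spacelike Cauchy surfaces, $J^-(\Sigma) - \Sigma = I^-(\Sigma)$ and likewise for $\Sigma^\prime$; since $f^-$ is a bijection extending the bijection $f : \Sigma \rightarrow \Sigma^\prime$, it restricts to a bijection $I^-(\Sigma) \rightarrow I^-(\Sigma^\prime)$. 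Hence $f^-|_{I^-(\Sigma)}$ is a causal isomorphism onto $I^-(\Sigma^\prime)$, which is the claim.

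The only genuine subtlety — and the one place I would be careful rather than just copying — is making sure every lemma invoked in the future-directed development (the time-duality of Proposition~\ref{sp-unique}, Theorem~\ref{causal}, and the structural lemmas in Section~\ref{section: 3} about $edge$, $D^\pm$, and $H^\pm$) really is symmetric under reversal of time orientation; the paper repeatedly asserts ``the same properties hold for past admissible sets,'' so I would lean on that, but in a fully written-out proof one should note that time-reversal is an isometry of the causal structure (it swaps $I^+ \leftrightarrow I^-$, $D^+ \leftrightarrow D^-$, $H^+ \leftrightarrow H^-$) and that global hyperbolicity and non-compactness of $\Sigma$ are manifestly time-symmetric hypotheses, so no separate work is needed. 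Given that, there is no real obstacle; the proof is a routine dualization, which is exactly why the paper says it ``can be completed by the similar manner.''
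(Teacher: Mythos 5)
Your proposal is correct and is exactly what the paper intends: the paper gives no separate proof, stating only that Theorem~\ref{subcentral2} ``can be completed by the similar manner,'' and your dualization of the proof of Theorem~\ref{subcentral1} (using the past analogue of Proposition~\ref{sp-unique}, Theorem~\ref{dual}(i) with the reversed inclusion $S^-_q \subset S^-_p$, and the identification $I^-(\Sigma)=J^-(\Sigma)-\Sigma$) is precisely that argument. Your closing remark about checking time-symmetry of the supporting lemmas is a sensible extra care, but it does not change the route.
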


The proof of Theorem \ref{subcentral1} and Theorem \ref{subcentral2}
tell us that for $J^+(\Sigma)$ and $J^-(\Sigma)$ to be causally
isomorphic to $J^+(\Sigma^\prime)$ and $J^-(\Sigma^\prime)$,
respectively, we only need to compare the structures between
$\mathcal{C} = \mathcal{C}^+ \cup \mathcal{C}^-$ and
$\mathcal{C}^\prime = \mathcal{C}^{\prime +} \cup
\mathcal{C}^{\prime -}$ in which we identify $S_p^+$ and $S_p^-$ for
$p \in \Sigma$.

\begin{definition} For a given non-compact Cauchy surface $\Sigma$,
the set $\mathcal{C}=\mathcal{C}^+ \cup \mathcal{C}^-$ is called
causally admissible system on $\Sigma$.
\end{definition}

\begin{definition}
A bijection $f : \Sigma \rightarrow \Sigma^\prime$ between two
non-compact Cauchy surfaces is called a causally admissible function
if $f$ is both a future admissible function and a past admissible
function.
\end{definition}

We remark that a causally admissible function $f : \Sigma
\rightarrow \Sigma^\prime$ need not assumed to be continuous since
its definition and Theorem \ref{known} implies that $f$ must be a
homeomorphism as the following theorem shows.

\begin{thm} \label{central1}
Two space-times $M$ and $M^\prime$ with non-compact Cauchy surface
$\Sigma$ and $\Sigma^\prime$ are causally isomorphic if and only if
there exists a causally admissible function $f : (\Sigma,
\mathcal{C}) \rightarrow (\Sigma^\prime, \mathcal{C}^\prime)$
between the corresponding causally admissible systems.
\end{thm}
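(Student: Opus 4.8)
This is an equivalence whose two halves have quite different character, and I would treat them separately.

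\emph{Sufficiency.} Suppose $f:(\Sigma,\mathcal{C})\to(\Sigma^\prime,\mathcal{C}^\prime)$ is causally admissible, i.e.\ simultaneously a future and a past admissible function. By Theorem~\ref{subcentral1} it extends to a causal isomorphism $f^+:J^+(\Sigma)\to J^+(\Sigma^\prime)$ and, dually (Theorem~\ref{subcentral2} and Theorem~\ref{dual}), to a causal isomorphism $f^-:J^-(\Sigma)\to J^-(\Sigma^\prime)$ extending $f$. The plan is to glue $f^+$ and $f^-$ along $\Sigma$ into a global causal isomorphism $F:M\to M^\prime$. Two facts make this legitimate. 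First, since $\Sigma$ may be taken spacelike it is acausal, so $J^+(\Sigma)\cap J^-(\Sigma)=\Sigma$, and since $D(\Sigma)=M$ also $M=J^+(\Sigma)\cup J^-(\Sigma)$; likewise for $M^\prime$. Second, acausality forces $S_y=J^-(y)\cap\Sigma=\{y\}$ for $y\in\Sigma$, so the uniqueness of representative points (Proposition~\ref{sp-unique}) makes both $f^+$ and $f^-$ restrict to $f$ on $\Sigma$. Hence $F:=f^+\cup f^-$ is a well-defined bijection $M\to M^\prime$ carrying $J^{\pm}(\Sigma)$ onto $J^{\pm}(\Sigma^\prime)$.

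It remains to verify that $F$ and $F^{-1}$ preserve $\leq$, and I would argue by cases on where $p,q$ lie. If $p$ and $q$ both lie in $J^+(\Sigma)$, or both in $J^-(\Sigma)$, this is immediate, because any causal curve issuing from a point of $J^+(\Sigma)$ remains in $J^+(\Sigma)$, so the causal order among such points is intrinsic and is already preserved by $f^+$ (resp.\ $f^-$). The crossing case is $p\in I^-(\Sigma)$, $q\in I^+(\Sigma)$: since $M=I^-(\Sigma)\sqcup\Sigma\sqcup I^+(\Sigma)$ and $I^{\pm}(\Sigma)$ are open, a causal curve from $p$ to $q$ cannot avoid $\Sigma$, so $p\leq q$ if and only if $p\leq s\leq q$ for some $s\in\Sigma$; applying $f^-$ to $p\leq s$ and $f^+$ to $s\leq q$, and running the same separation argument in $M^\prime$, gives $p\leq q\iff F(p)\leq F(q)$. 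In the opposite configuration $p\in I^+(\Sigma)$, $q\in I^-(\Sigma)$, neither side can hold, again by the separation argument together with the achronality of $\Sigma$ (and of $\Sigma^\prime$), so there is nothing to check. This completes the sufficiency.

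\emph{Necessity.} Suppose $M$ and $M^\prime$ are causally isomorphic via $F:M\to M^\prime$. By Theorem~\ref{known} $F$ is a homeomorphism and a conformal diffeomorphism, hence carries inextendible timelike curves to inextendible timelike curves, so $F(\Sigma)$ is a non-compact Cauchy surface of $M^\prime$. Since $F$ is a bijection preserving $\leq$, one has $F(J^-(p)\cap\Sigma)=J^-(F(p))\cap F(\Sigma)$ and $F$ respects inclusions, so $F|_\Sigma$ is a causally admissible function from $(\Sigma,\mathcal{C})$ onto $(F(\Sigma),\mathcal{C}^{F(\Sigma)})$. Because a composition of causally admissible functions is again causally admissible (immediate from the definitions), it would now suffice to produce a causally admissible function $(F(\Sigma),\mathcal{C}^{F(\Sigma)})\to(\Sigma^\prime,\mathcal{C}^\prime)$ and compose. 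Thus the implication reduces to the single assertion: \emph{any two non-compact Cauchy surfaces of a globally hyperbolic space-time are linked by a causally admissible function.} I expect this to be the main obstacle. The natural tools are the poset isomorphisms $p\mapsto S_p$ between $J^{\pm}(\Sigma)$ and $\mathcal{C}^{\pm}$ (Proposition~\ref{sp-unique}, Theorem~\ref{causal}, Theorem~\ref{dual}) together with the reconstruction of the ambient causal order performed in the sufficiency part; but one must be careful, since by that same sufficiency part such a function would in turn produce a causal isomorphism of the ambient space-time onto itself carrying one Cauchy surface onto the other, so the reduction is tight and actually exhibiting the required bijection between the two systems of slices is the delicate point a complete proof must settle.
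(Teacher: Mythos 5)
Your sufficiency argument is essentially the one in the paper: extend $f$ to causal isomorphisms $f^{\pm}$ of $J^{\pm}(\Sigma)$ onto $J^{\pm}(\Sigma^\prime)$ via Theorems \ref{subcentral1} and \ref{subcentral2}, glue along the acausal spacelike surface $\Sigma$, and reduce the only nontrivial comparison, $p\in I^-(\Sigma)$ and $q\in I^+(\Sigma)$, to the passage of a connecting causal curve through a point $x\in\Sigma$; the paper writes this last step as $S^+_x=\{x\}\subset S^+_q$ and $S^-_x=\{x\}\subset S^-_p$, which is the same computation. That half is fine.

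The necessity half has a genuine gap, which you have located but not closed, and the completion you propose cannot work. You reduce the problem to the claim that any two non-compact Cauchy surfaces of a globally hyperbolic space-time are linked by a causally admissible function. This claim is false: by your own sufficiency argument together with Theorem \ref{known}(iii), a causally admissible function $(\Sigma_1,\mathcal{C}_1)\to(\Sigma_2,\mathcal{C}_2)$ between two Cauchy surfaces of the \emph{same} $M$ would extend to a conformal diffeomorphism of $M$ onto itself carrying $\Sigma_1$ onto $\Sigma_2$. In four-dimensional Minkowski space the globally defined conformal self-maps form a finite-dimensional group of affine maps (Poincar\'e transformations composed with dilations), which carry hyperplanes to hyperplanes; hence the hyperplane $\{t=0\}$ and a curved spacelike Cauchy surface are not linked by any causally admissible function. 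The paper avoids this issue entirely: in the necessity direction it does not use the given $\Sigma^\prime$ at all, but \emph{defines} $\Sigma^\prime := g(\Sigma)$ for the causal isomorphism $g$, checks that $g(\Sigma)$ is a non-compact Cauchy surface of $M^\prime$, and verifies that $g|_\Sigma$ is causally admissible onto it --- exactly the map you denote $F|_\Sigma$. In other words, the theorem has to be read as asserting the existence of \emph{some} pair of Cauchy surfaces joined by such an $f$, not the pair fixed in advance in the statement; with that reading your construction of $F|_{\Sigma}$ already completes the proof, and the reduction you flag as ``the main obstacle'' should be deleted rather than attempted.
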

\begin{proof}
By Theorem \ref{subcentral1} and Theorem \ref{subcentral2}, a
causally admissible function $f : (\Sigma, \mathcal{C}) \rightarrow
(\Sigma^\prime, \mathcal{C}^\prime)$ can be extended to future
admissible function $f^+ : J^+(\Sigma) \rightarrow
J^+(\Sigma^\prime)$ and past admissible function $f^- : J^-(\Sigma)
\rightarrow J^-(\Sigma^\prime)$, which are causal isomorphisms.
Since $f^+=f^-=f$ on $\Sigma$, we can define $F : M \rightarrow
M^\prime$ by $F(p)=f^+(p)$ if $p \in I^+(\Sigma)$, $F(p)=f^-(p)$ if
$p \in I^-(\Sigma)$ and $F(p)=f(p)$ if $p \in \Sigma$. Since $F$ is
a causal isomorphism when restricted to $J^+(\Sigma)$ and
$J^-(\Sigma)$, it remains to show that for $p \in I^-(\Sigma)$ and
$q \in I^+(\Sigma)$, $p \leq q$ if and only $F(p) \leq F(q)$.

Assume that $p \leq q$ and let $\gamma$ be a causal curve from $p$
to $q$ in $M$. Since $M$ is globally hyperbolic, $\gamma$ must meet
$\Sigma$ at $x$, say. Since $x \leq q$, by Theorem \ref{causal}, we
have $S_x^+=\{x\} \subset S_q^+$ and likewise, by Theorem
\ref{dual}, we have $S_x^-=\{x\} \subset S_p^-$. Since $f$ is a
causally admissible function, we have $f(S_x^+) \subset f(S_q^+)$
and $f(S_x^-) \subset f(S_p^-)$. This implies that $f(S_q^+) \cap
f(S_p^-)$ is non-empty since it has $f(x)$ as its common element.
Therefore, we can conclude that $F(p) \leq F(q)$ since $F(p) \leq
F(x)$ and $F(x) \leq F(q)$. By following the same manner, we can
show that if $F(p) \leq F(q)$, then $p \leq q$.

We now assume that $M$ and $M^\prime$ be causally isomorphic and $g
: M \rightarrow M^\prime$ be their causal isomorphism. If we let
$\Sigma$ be a spacelike Cauchy surface of $M$, then $\Sigma^\prime =
g(\Sigma)$ is a smooth spacelike hypersurface of $M^\prime$ since
$g$ is a conformal diffeomorphism by Theorem \ref{known}. Let
$\gamma^\prime$ be an inextendible timelike curve in $M^\prime$,
then $g^{-1} \circ \gamma$ is also an inextendible timelike curve in
$M$. Since $\Sigma$ is a Cauchy surface in $M$, $g^{-1} \circ
\gamma^\prime$ must meet $\Sigma$ exactly once at $x$, say. Then
$\gamma^\prime$ meets $\Sigma^\prime$ exactly once at $g(x)$. Thus,
$\Sigma^\prime$ is a Cauchy surface of $M^\prime$. We now define a
causally admissible functions $f : \Sigma \rightarrow \Sigma^\prime$
by use of the causal isomorphism $g$. i.e. We let $f(x) = g(x)$ for
$x \in \Sigma$, then obviously $f$ is a bijection. It remains to
show that $f$ is, in fact, a causally admissible function. For $p
\in J^+(\Sigma)$, let $S_p^+ = J^-(p) \cap \Sigma$. Then, since $g$
is a causal isomorphism, we have $g(J^-(p)) = J^-(g(p))$. Thus,
$f(S_p^+) = g(J^-(p) \cap \Sigma) = g(J^-(p)) \cap g(\Sigma) =
J^-(g(p)) \cap \Sigma^\prime = S_{g(p)}^{+} \in \mathcal{C}^{\prime
+}$. Likewise, we can show that for any $S^\prime \in
\mathcal{C}^{\prime +}$, there exists $S \in \mathcal{C}^+$ such
that $f(S)=S^\prime$. If $S_1$ and $S_2$ in $\mathcal{C}^+$ are such
that $S_1 \subset S_2$, then $g(S_1) \subset g(S_2)$ implies that
$f(S_1) \subset f(S_2)$ and vice versa. By definition, $f$ is a
future admissible function. Likewise, we can show that $f$ is a past
admissible function and the proof is completed.
\end{proof}

The following Theorem, which can be obtained by combining Theorem
\ref{known} and Theorem \ref{central1}, states that when a
 space-time $M$ is given, its metric structure can be encoded into
its non-compact Cauchy surface by the family of compact subsets of
the Cauchy surface.

\begin{thm}
Let $M$ and $M^\prime$ be globally hyperbolic with non-compact
Cauchy surfaces $\Sigma$ and $\Sigma^\prime$. If there exists a
causally admissible function $f : (\Sigma, \mathcal{C}) \rightarrow
(\Sigma^\prime, \mathcal{C}^\prime)$ between the causally admissible
systems, then $M$ and $M^\prime$ are isometric up to a conformal
factor.
\end{thm}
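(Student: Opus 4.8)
The plan is to obtain this statement as an immediate consequence of the two structural results already in hand. First I would apply Theorem~\ref{central1}: the hypothesis furnishes a causally admissible function $f : (\Sigma, \mathcal{C}) \to (\Sigma^\prime, \mathcal{C}^\prime)$ between the causally admissible systems, and that theorem asserts that $M$ and $M^\prime$ are then causally isomorphic, so there is a bijection $F : M \to M^\prime$ with $p \leq q$ if and only if $F(p) \leq F(q)$ (concretely, the map $F$ constructed in the proof of Theorem~\ref{central1} from the future and past extensions $f^{+}$ and $f^{-}$).

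Next I would verify the hypotheses needed to push $F$ forward to a conformal isometry. Theorem~\ref{known} requires its two space-times to be chronological; here $M$ and $M^\prime$ are globally hyperbolic, and by the causality hierarchy theorem of Section~\ref{section: 2} global hyperbolicity implies strong causality, hence distinguishing, hence chronological. Therefore $F$ is a causal isomorphism between chronological space-times, and part~(iii) of Theorem~\ref{known} gives that $F$ is a smooth conformal diffeomorphism; that is, writing $g$ and $g^\prime$ for the Lorentzian metrics, $F^{\ast} g^\prime = \Omega^{2} g$ for some smooth positive function $\Omega$ on $M$. This is precisely the assertion that $M$ and $M^\prime$ are isometric up to a conformal factor, and parts~(i)--(ii) of the same theorem record as a bonus that $F$ is simultaneously a chronological isomorphism and a homeomorphism, so the topological and differentiable structures are matched as well.

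Since each step is a direct citation, I do not expect any real obstacle. The only point that genuinely needs a remark is the verification that global hyperbolicity supplies the chronology condition required to invoke Theorem~\ref{known}, and this is handled by the hierarchy theorem quoted in Section~\ref{section: 2}. A more self-contained route would bypass Theorem~\ref{known} and re-derive the conformal rigidity directly from the causal order via a Malament-type argument, but that would merely reconstruct a result the paper has already imported, so combining Theorem~\ref{central1} with Theorem~\ref{known} is the economical choice.
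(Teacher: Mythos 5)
Your proposal is exactly the paper's argument: the paper gives no separate proof of this theorem, stating only that it ``can be obtained by combining Theorem \ref{known} and Theorem \ref{central1}'', which is precisely the two-step citation you carry out (with the minor extra care of checking, via the causality hierarchy, that global hyperbolicity supplies the chronology hypothesis of Theorem \ref{known}). The proposal is correct and matches the intended route.
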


From Theorem \ref{known}, we can see that causal structure
determines the metric structure of given space-time up to a
conformal factor. By the above Theorems, we can see that, for given
space-time $M$, its causally admissible system determines its causal
structure and the space-time metric up to a conformal factor. Since
causally admissible system consists of compact subsets of
non-compact Cauchy surface, we can say that all the causal
structure, chronological structure, differentiable structure and
metric structure up to a conformal factor can be encoded into its
Cauchy surface by causally admissible system.

If we see the non-compact Cauchy surface $\Sigma$ as the whole
``space" of our universe at one instant of time, then the above
theorems tell us that the ``space" and the collection of its compact
subspaces determines the whole structure of our universe. For
example, the structures of the  Robertson-Walker space-time model
with $\kappa = 0$ and $\kappa > 0$ can be determined from its
``spacelike" structures.

\section{acknowledgement}

This work was supported by the Korea Research Council of Fundamental
Science \& Technology (KRCF), Grant No. C-RESEARCH-2006-11-NIMS.

\end{document}